\documentclass[aps,prx,reprint,superscriptaddress,longbibliography,showkeys]{revtex4-2}

\usepackage{amsmath,amssymb,mathtools,bm}
\usepackage{amsthm}
\usepackage{xcolor}
\usepackage{graphicx}
\usepackage{microtype}
\usepackage{enumitem}
\usepackage{booktabs}
\usepackage{ragged2e}
\usepackage{url}
\usepackage[hidelinks]{hyperref}

\newcommand{\ket}[1]{\left| #1 \right\rangle}
\newcommand{\polylog}[1]{\textrm{polylog}\left( #1 \right)}
\newcommand{\bra}[1]{\left\langle #1 \right|}
\newcommand{\ssb}[1]{{\color{blue}#1}}
\newcommand{\sgn}[1]{\textrm{sgn}\left(#1 \right)}
\newcommand{\braket}[2]{\left\langle #1 | #2 \right\rangle}

\newtheorem*{problem*}{Problem Statement}
\newtheorem{theorem}{Theorem}
\newtheorem{remark}{Remark}
\newtheorem{corollary}{Corollary}

\newtheorem{lemma}{Lemma}
\newtheorem{proposition}{Proposition}

\providecommand{\Endparasplit}{}
\providecommand{\showacknow}[1]{}

\newcommand{\dataavail}[1]{\section*{Data Availability}#1}
\newcommand{\acknow}[1]{\begin{acknowledgments}#1\end{acknowledgments}}

\newcommand{\significancestatement}[1]{%
  \par\begingroup\small\noindent\textbf{Significance Statement.} #1\par\endgroup\vspace{0.75em}}

\begin{document}

\title{Accelerated search for global extrema in structured quantum signals}

\author{Sachin S. Bharadwaj}
\affiliation{Department of Mechanical and Aerospace Engineering, New York University, New York 11201 USA}
\author{Katepalli R. Sreenivasan}
\email{krs3@nyu.edu}
\affiliation{Department of Physics and Courant Institute of Mathematical Sciences, New York University, New York, New York 10012, USA}

\date{This manuscript was compiled on \today}

\begin{abstract}
Extreme values of a physical system are rare by definition, yet they can govern some of its most consequential and anomalous behavior. Their sparsity makes them particularly costly to identify in large data sets. Quantum computation offers a potential way to reduce this cost. The challenge is that quantum amplitudes cannot simply be read, copied, compared, or marked using conventional search oracles. Here we introduce an algorithm for searching amplitude-encoded quantum signals that exploits their underlying structure. Through a nonlinear amplitude transformation, the algorithm identifies and certifies a global extremum while exponentially reducing the effective search space for structured signals. We establish a polylogarithmic query complexity in the data size \(N\), together with global completeness, efficient classical verification, and a measurable criterion for quantifying exploitable structure. We demonstrate the framework using computational turbulence data. Applied to line cuts of velocity gradients from a \(1024^3\) simulation containing more than one billion grid points, the algorithm consistently isolates the extreme tail and reduces the median search space to \(O(\log N)\). The clustered structure characteristic of turbulence further accelerates the search. These results establish a direct connection between physical structure and quantum advantage in extreme-value search. For accessibility, the main text emphasizes the central ideas and results, while the theoretical derivations are provided in the Appendices.

\end{abstract}

\keywords{Search Algorithm $|$  Quantum Signals $|$ Extreme Events $|$ Anomaly Detection  $|$ Turbulence}

\maketitle

\significancestatement{Rare and intense events contain disproportionately important information about complex physical systems. They arise in phenomena as diverse as great floods and earthquakes, extreme volatility in financial markets, and localized hot spots in combustion. Identifying such events in large data sets is challenging in any setting, especially when the data are encoded as amplitudes of a quantum state: the amplitudes cannot be read out and compared without destroying the state. We develop a quantum algorithm that exploits the underlying physical structure of the data to identify such extremes, and illustrate the basic ideas using turbulent velocity gradients. For a nonlinear physical problem the method identifies extreme events with a provable and verifiable end-to-end quantum advantage.}

In many complex physical systems, extreme events can matter far more than averages. Although rare, they may dominate important physical consequences and strongly influence observed scaling and intermittency. Detecting such events is therefore a central challenge, particularly when they are sparsely distributed across large spatiotemporal domains. For example, detecting fast radio bursts from deep space \cite{rajwade2024needle} requires sensors to continuously process enormous volumes of raw data to identify signals that may last only milliseconds. Similar challenges arise in the detection of extreme events in intermittent turbulent flows \cite{Sreenivasan1988,buaria2019extreme}, medical imaging, earthquake monitoring, and tsunami detection \cite{baur2021autoencoders,mousavi2019cred,mungov2013dart}.

The rarity of extreme events makes their detection and statistical characterization computationally expensive, since a large fraction of the data may need to be searched to find a very small fraction of the events of interest. This challenge becomes especially acute for data encoded in quantum states. Quantum amplitudes cannot be directly accessed, copied, or compared repeatedly without disturbing the state, precluding a straightforward application of conventional search strategies. The question is therefore whether the physical structure inherent in many signals can be exploited to make extreme-value searches substantially more efficient \cite{li2008public,lestang2018computing}.

For an unstructured, unsorted database of size $N$, a classical search requires $\mathcal{O}(N)$ queries in the worst case because the entries must, in general, be examined individually. Quantum search algorithms based on Grover's search \cite{grover1996fast} and the D\"urr--H{\o}yer minimum-finding procedure \cite{durr1996quantum,giri2017review}, which we refer to collectively as Grover--D\"urr--H{\o}yer (GDH), reduce the query complexity to $\mathcal{O}(\sqrt{N})$ under the assumptions summarized in Appendix~\ref{sec:gdh}.

\Endparasplit

The prospect of a larger {end-to-end quantum advantage} arises when the data are generated, represented, and processed coherently as quantum states, rather than being first measured as a classical data set and subsequently reloaded into a quantum computer. We therefore consider extrema search directly on an amplitude-encoded signal,
\[
\ket{\Psi}=\sum_{i=0}^{N-1}u_i\ket{i},
\qquad
\sum_{i=0}^{N-1}u_i^2=1,
\qquad
u_i\in\mathbb{R}.
\]
This setting is natural in scientific quantum computing, where many quantum algorithms produce solutions directly in amplitude-encoded form. Quantum algorithms for partial differential equations and linear systems \cite{costa2019quantum,liu2021efficient,krovi2023improved,bharadwaj2023hybrid,bharadwaj2024compact,costa2025carleman,gonzalez2024quantum,bharadwaj2025homotopy,morales2026quantum}, as well as variational algorithms \cite{cerezo2021variational,ingelmann2024two,pool2024nonlinear,kocher2025}, prepare solution states in this form. Coherent quantum-sensor interfaces may likewise provide a route for transducing physical fields directly into quantum states \cite{allen2025quantum,khan2025quantum}.

Our model requires substantially less oracle access than GDH. We assume only access to a state-preparation circuit $U_S$, its inverse $U_S^\dagger$, and their controlled versions. In particular, we do not assume a value oracle, comparison or sign oracle, marking oracle, externally supplied threshold, or prior knowledge of the extremum. Instead, the relevant threshold is determined from the input state itself, and the amplitudes may have either sign. The output is consequently the {true global extremum}, rather than a probabilistically selected candidate. The precise distinctions, together with the physical models for state preparation, are given in Appendices~\ref{sec:gdh} and~\ref{sec:access}.

The key step is a Chebyshev-walk transformation that applies a bounded nonlinear function directly to the amplitudes. Rather than resolving the small numerical differences among the largest few amplitudes, the transformation concentrates probability into a short tail containing only $\mathcal{O}(\log N)$ candidates. High-precision comparison is performed only after this compression. The resulting framework holds for common extrema-search applications.

An efficient search procedure is useful only if the input state can itself be prepared efficiently. This requirement can be met using our previous work. The initial state can be generated by the quantum homotopy solver \cite{bharadwaj2025homotopy}, while derived quantities such as gradients can be constructed coherently using compact LCU matrix operations \cite{bharadwaj2024compact}. The extrema-search algorithm then operates directly on the resulting quantum state, without requiring reconstruction of the full field. The reduction is designed so that the global extremum is retained with certainty under the stated assumptions, while the resulting compact candidate set permits efficient classical verification, including verification in the untrusted-device setting of Ref.~\cite{mahadev2018classical}. Thus, under the stated assumptions, the complete workflow can proceed from nonlinear quantum state generation to a globally correct and classically verifiable observable {without intermediate readout of the full field}.

Our specific fluid dynamic application is the search for extreme turbulent velocity gradients. These are signed, multiscale signals in which rare, intense events are embedded in a broad fluctuating background \cite{Sreenivasan1988}. Their extremes become increasingly intense and intermittent with increasing Reynolds number. Efficiently locating them is therefore both a fundamental problem in turbulence and a stringent test of quantum extrema search.

We combine high-resolution numerical data from homogeneous, isotropic turbulence with analytical guarantees for the algorithm. {The classically available data are first encoded into quantum amplitudes. This step is
needed only to stress test extrema searches on realistic data and is not an intrinsic overhead when the input state is generated quantumly. We quantify the retention probability of high-amplitude events, the performance of finite-degree filtering, candidate-set compression, certification, and the associated computational complexity and error bounds. The analysis also identifies a measurable {structure factor} that characterizes when intermittency concentrates sufficient normalized probability to enable an efficient search. The central result is that, when physically important events are sufficiently rare yet sufficiently concentrated, {their structure can itself serve as the resource for quantum advantage}.

After a statement of the problem immediately below in Sec.~\ref{sec:problem_statement}, we provide an accessible summary of the main results in Sec.~\ref{sec:main_results}. This section lists the lemmas and theorems used in the work, while their proofs, as well as underlying assumptions, are relegated to the Appendices to avoid distraction from the main theme. Sec.~\ref{sec:how_algorithm} provides a summary of how the algorithm works, while Sec.~\ref{sec:numerics} applies the method to turbulent velocity gradients obtained from direct numerical simulation of the Navier--Stokes equations. Sec.~\ref{sec:discussion} provides a discussion of the paper, while Sec.~\ref{sec:conclusions} concludes it with a brief statement of the main conclusions.

\section{Problem Statement}
\label{sec:problem_statement}

Consider an unsorted, real-valued database of size $N=2^n$, encoded as the normalized amplitudes of an $n$-qubit quantum signal
$
U_S\ket{0^n}
=
\ket{\Psi}
$
where $U_S$ is an efficient state-preparation circuit under the access model of Appendix~\ref{sec:access}. The typical amplitude scale of such a normalized state is
$
\|\ket{\Psi}\|_{\rm typ}
:=
\|\ket{\Psi}\|_2/\sqrt{N}
=
1/\sqrt{N}.
$
To characterize how strongly an extremum separates from this scale, we define the input \textit{structure factor}
\begin{equation}
\mathbb{S}_{\rm in}
:=
\frac{\|\ket{\Psi}\|_\infty}{\|\ket{\Psi}\|_{\rm typ}}
=
\sqrt{N}\,\|\ket{\Psi}\|_\infty.
\label{eq:input_structure_factor}
\end{equation}

For a physical observable sampled or computed at $N=2^n$ points as $g=(g_0,\ldots,g_{N-1})\in\mathbb R^N$, the corresponding normalized amplitudes are
\begin{equation}
u_i=\frac{g_i}{\|g\|_2}
=\frac{g_i}{\sqrt{N}\,g_{\rm rms}},
\qquad
g_{\rm rms}^2=\frac{1}{N}\sum_{i=0}^{N-1}g_i^2.
\end{equation}
It follows from \eqref{eq:input_structure_factor} that
\begin{equation}
\mathbb{S}_{\rm in}
=
\frac{g_{\rm pk}}{g_{\rm rms}},
\qquad
g_{\rm pk}:=\max_i|g_i|.
\end{equation}
{Here, $g_{\rm pk}$ and $g_{\rm rms}$ are the peak and root-mean-squared values of the input signal, respectively.} Thus, {the structure factor} $\mathbb{S}_{\rm in}$ measures the separation of the extremal amplitude from the typical signal scale.

The canonical problem is to find the largest-magnitude extremum, $i^\star\in\operatorname*{arg\,max}_{i\in[N]}|u_i|$, where $[N]:=\{0,1,\ldots,N-1\}$. The signed maximum, signed minimum, and smallest-magnitude extremum are reduced to this canonical problem under the corresponding preparation conditions described in Appendix~\ref{sec:Candidates,enum,cert,sign,cases}. {The smallest-magnitude extremum additionally requires efficient preparation by subtracting a background with the same tail properties.} Sign information for the recovered candidates is determined there by interference rather than by assuming a sign oracle. The specific objectives of this work are:

\begin{enumerate}[leftmargin=*,itemsep=1pt,topsep=2pt,parsep=0pt]

\item \textbf{Objective A} -- 
{Establish a quantum algorithm using only amplitude-state access, which returns an address $\hat{i}$ satisfying the condition
\begin{equation}
\Pr\!\left[
\hat{i}\in
\arg\max_{i\in[N]}|u_i|
\right]
\geq
1-\delta ,
\end{equation}
with overall polylogarithmic algorithm complexity in $N$; $\delta$ here is the failure probability. The returned extremal address, value, and sign are certified globally.}

\item \textbf{Objective B} -- Determine conditions 
for which the complete certified search is quantum efficient. In particular, derive the retained-mass requirement in terms of the structure factor $\mathbb{S}_{\rm in}$, together with {the required} conditions of the algorithm. Compare the resulting query complexity with classical scanning and GDH. 
\item \textbf{Objective C} -- Demonstrate the above requirements on a practical nonlinear physical problem using intermittent velocity-gradient signals from homogeneous isotropic turbulence governed by
\begin{equation}
\frac{\partial\mathbf{u}}{\partial t}+(\mathbf{u}\cdot\nabla)\mathbf{u}
=
-\frac{1}{\rho}\nabla p+\nu\nabla^2\mathbf{u}+\mathbf{f},
\qquad
\nabla\cdot\mathbf{u}=0,
\label{eq:NS}
\end{equation}
and quantify the required structure, finite-resource costs, as well as the Reynolds number dependence, and include efficient state preparation for the nonlinear PDE, extrema extraction, and classical verification. The objectives on signed maximum, signed minimum, and smallest magnitude are reduced to the canonical extrema problem under the conditions of Appendix~\ref{sec:Candidates,enum,cert,sign,cases}.
\end{enumerate}

\section{Main results}
\label{sec:main_results}

The extrema-search problem is enabled via an approximate
filtering stage that exposes the extremal candidates by suppressing the background, without attempting to resolve or compare the ordering of nearby amplitudes. This is followed by a certification stage that compares only the surviving candidates. {For a threshold $\lambda$ and a finite transition width
$\Gamma>0$, define the candidate and safely retained sets as}
{\begin{align}
&C_\Gamma
:=
\left\{
i\in[N]: |u_i|\geq\lambda
\right\},\\
&R_\Gamma
:=
\left\{
i\in[N]: |u_i|\geq\lambda+\Gamma
\right\}.
\end{align}}
{$\Gamma$ is a design parameter, and the interval between $\lambda$ and $\lambda+\Gamma$ is
intentionally left unresolved.} 
The basic operation on which the algorithm is built is this: Can a transformation be built such that an unknown amplitude $u_i$ be coherently multiplied by some function of itself, $p_d(u_i)$, without first measuring $u_i$ or writing it into a value register? The following lemma answers this question.

\begin{lemma}[Chebyshev-walk nonlinear amplitude transformation]
\label{lem:qg1}
{
Let
$p_d(x)=\sum_{\ell=0}^{d}c_\ell T_\ell(x)$
be a real degree-$d$ polynomial, bounded on $[-1,1]$, where
$T_\ell$ is the Chebyshev polynomial of the first kind. From
$U_S$, $U_S^\dagger$, and their controlled versions, one can
construct the amplitude observable
$D_\Psi:=\sum_i u_i|i\rangle\langle i|$ and a Chebyshev-walk
circuit whose successful branch containing the solution is
\begin{equation}
\frac{1}{C_d}p_d(D_\Psi)|\Psi\rangle
=
\frac{1}{C_d}\sum_i p_d(u_i)u_i|i\rangle,
\qquad
C_d:=\sum_{\ell=0}^{d}|c_\ell|.
\end{equation}
If
$P_e:=\|p_d(D_\Psi)|\Psi\rangle\|_2^2$,
the successful branch occurs with probability $P_e/C_d^2$ and can be
prepared, using amplitude amplification, with query complexity
\begin{equation}
Q_{\rm QG1}^{\rm AA}
=
O\!\left(
\frac{dC_d}{\sqrt{P_e}}
\log\frac{1}{\delta_{\rm post}}
\right),
\end{equation} where $\delta_{\rm post}$ is the allowed preparation failure
probability.}
\end{lemma}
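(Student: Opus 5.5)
The plan is to build $D_\Psi$ as a block encoding, turn it into a qubitization walk, realize $p_d$ by a linear combination of unitaries (LCU) over walk powers, and boost the good branch with amplitude amplification.

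\emph{Step 1: block-encoding $D_\Psi$.} Since $u_i\in\mathbb{R}$, we have $u_i=\langle i|\Psi\rangle=\langle 0^n|U_S^\dagger|i\rangle$. One standard route uses a controlled copy of the index. Define $V_L:\ket{0^n}\ket{i}\mapsto \ket{i}\ket{i}$, a CNOT fan-out. Define $V_R:\ket{0^n}\ket{i}\mapsto (U_S\ket{0^n})\ket{i}$. Then
\[
\bra{0^n}\bra{i}\,V_L^\dagger (U_S\otimes I)\ldots
\]
is awkward, so we instead use the cleaner identity $\bra{i}\bra{i}\,(U_S\ket{0^n})\ket{i}=u_i$. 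With $W:=V_L^\dagger(U_S\otimes I)$ acting on $\ket{0^n}\ket{j}$, the ancilla-$\ket{0^n}$ block of $W$ is
\[
\langle 0^n,i|\,W\,|0^n,j\rangle=\delta_{ij}\,u_i .
\]
So $W$ is a $(1,n,0)$ block encoding of $D_\Psi$ using one query to $U_S$. Its adjoint uses one query to $U_S^\dagger$. Hermiticity and $\|D_\Psi\|\le 1$ hold automatically.

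\emph{Step 2: the walk and $T_\ell$.} We symmetrize $W$ into a Hermitian block encoding, for example $\tfrac12(W+W^\dagger)$ via a one-qubit LCU. Because $D_\Psi$ is real diagonal, this still encodes $D_\Psi$. We then form the qubitized walk $\mathcal{W}=(2\Pi-1)\,W_{\rm sym}$, with $\Pi$ the projector onto the zero ancilla. The standard qubitization result gives that $\mathcal{W}^\ell$ block-encodes $T_\ell(D_\Psi)$ at cost $\ell$ queries. Each invariant $2$-dimensional subspace attached to $\ket{i}$ rotates by $\arccos u_i$, so no spectral knowledge is needed.

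\emph{Step 3: LCU for $p_d$.} Using a $\lceil\log_2(d+1)\rceil$-qubit selector, we apply the PREP state $\sum_\ell \sqrt{|c_\ell|/C_d}\ket{\ell}$. We then apply SELECT $=\sum_\ell \ket{\ell}\bra{\ell}\otimes \mathrm{sgn}(c_\ell)\mathcal{W}^\ell$, implemented with controlled powers at $O(d)$ controlled queries, followed by PREP$^\dagger$. This block-encodes $p_d(D_\Psi)/C_d$. Applying it to $\ket{0}\ket{\Psi}$, where $\ket{\Psi}$ costs one more query, leaves
\[
\frac{1}{C_d}\,p_d(D_\Psi)\ket{\Psi}=\frac1{C_d}\sum_i p_d(u_i)u_i\ket{i}
\]
in the all-zero ancilla branch. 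Its squared norm is $P_e/C_d^2$.

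\emph{Step 4: amplification.} The whole circuit $A$ is unitary and uses $O(d)$ queries. Fixed-point or oblivious amplitude amplification with reflections about $\Pi$ and about $A\ket{0}$ then needs $O(C_d/\sqrt{P_e})$ rounds. Repeating $O(\log 1/\delta_{\rm post})$ times, with a flag measurement certifying success, gives failure probability at most $\delta_{\rm post}$. Alternatively, fixed-point amplitude amplification yields the $\log(1/\delta)$ factor directly. In either case the total cost matches $Q^{\rm AA}_{\rm QG1}$.

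\emph{Main obstacle.} The delicate step is Step 1–2. The block of $W$ on the zero ancilla is $\mathrm{diag}(u)$, but $W$ itself is not Hermitian, and qubitization requires a Hermitian (reflection-type) encoding. I would first try the symmetrization above, checking that the subnormalization stays $1$. If it does not, I would fall back to the Gilyén et al.\ QSVT form, where odd and even Chebyshev terms alternate $W$ and $W^\dagger$. That form needs no Hermiticity, since the singular values are $|u_i|$ and the signs are carried by the diagonal phases. Handling the sign of $u_i$ correctly in this step is the point I would verify most carefully.
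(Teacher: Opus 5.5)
Your overall architecture matches the paper's. Your $W=V_L(U_S\otimes I)$ is, up to register order, the adjoint of the paper's $R_\Psi=(I_S\otimes U_S^\dagger)C$, and your Steps 3--4 are its coefficient-weighted SELECT and amplitude amplification. The gap is in Step 2, and the test you propose would not catch it.

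The identity $\Pi\mathcal{W}^\ell\Pi=T_\ell(D_\Psi)\Pi$ requires the block-encoding unitary $V$ itself to be an involution, $V^2=I$. That property keeps $V|b_i\rangle$ inside $\mathrm{span}\{|g_i\rangle,|b_i\rangle\}$. Block-encoding a Hermitian matrix is not enough. Consider the generic one-qubit LCU $V=(H\otimes I)\bigl(|0\rangle\langle0|\otimes W+|1\rangle\langle1|\otimes W^\dagger\bigr)(H\otimes I)$. It encodes $\tfrac12(D_\Psi+D_\Psi^\dagger)=D_\Psi$ with subnormalization exactly $1$, so your check passes and the fallback never triggers. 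Yet $V^2\neq I$. Since $\Pi\mathcal{W}^2\Pi=2(\Pi V\Pi)^2-\Pi V^2\Pi$, this equals $T_2(D_\Psi)\Pi$ only if $\Pi V^2\Pi=\Pi$, so the identity already fails at $\ell=2$. Take $n=1$ and $U_S$ a real $y$-rotation with $U_S|0\rangle=u_0|0\rangle+u_1|1\rangle$. Then $\langle g_0|V^2|g_0\rangle=u_0^2-u_1^2$, so the walk returns $u_0^2+u_1^2=1$ instead of $T_2(u_0)=2u_0^2-1$.

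The repair is exactly the paper's Hermitianization. Take $\mathcal{U}=|0\rangle\langle1|_q\otimes W^\dagger+|1\rangle\langle0|_q\otimes W$ with good state $|+\rangle_q|0^n\rangle_A$. Then $\mathcal{U}^\dagger=\mathcal{U}$ and $\mathcal{U}^2=I$, and the selected block is $\tfrac12(D_\Psi+D_\Psi^\dagger)=D_\Psi$. The two-dimensional rotation argument then gives $T_\ell(u_i)$ exactly. Equivalently, your $V$ satisfies $V^\dagger=ZVZ$ with $Z$ on the LCU control, so $ZV$ is an involution with the same block; in the Hadamard frame it is precisely this $\mathcal{U}$.

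Your QSVT fallback, alternating $W$ and $W^\dagger$ between reflections, is also valid. For real diagonal $D_\Psi$ the left singular vectors are $\mathrm{sgn}(u_i)|i\rangle$, and $T_\ell$ has the parity of $\ell$, so the signs come out right. It is essentially the same walk written out step by step. The criterion you should check is involutivity (or the alternating structure), not the subnormalization. With either fix, Steps 3 and 4 go through as you wrote them.
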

The construction and proof are given in
Appendices~\ref{sec:chebyshev_walk}--\ref{sec:theorem_proof}.

Choosing $p_d$ close to 0 below the threshold and close
to 1 on the safely-retained tail transforms
$
\sum_i u_i|i\rangle
\quad\longmapsto\quad
\sum_i p_d(u_i)u_i|i\rangle ,
$
so that the background is suppressed while the extreme amplitudes remain. We call this operation \emph{Quantum Gadget 1} (QG1).  Its circuit construction is given in Sec.~\ref{sec:qg1}, while the existence and the degree of the required bounded polynomial are proved in Appendix~\ref{sec:polynomial_filter}, and the Chebyshev-walk realization of this polynomial is proved in Appendix~\ref{sec:chebyshev_walk}.

Filtering alone is not enough. If too many amplitudes survive, the search space has not been reduced appreciably; if the retained tail carries too little probability, preparing it becomes expensive. Efficient extrema search therefore requires two properties: only a small number of candidate addresses should remain, and those candidates must collectively carry sufficient normalized probability mass. These are the structural conditions entering the main result of Theorem 1, which is stated following the two explanations.

Let $K:=|C_\Gamma|$ be the number of candidate addresses and
$ P_R:=\sum_{i\in R_\Gamma}|u_i|^2>0 $
be the probability carried by the safely retained tail.Further, denote by $p_d$ the bounded Chebyshev filter of Lemma~1, with
coefficient norm $C_d:=\sum_{\ell=0}^{d}|c_\ell|$, and error
$\eta_{\rm filt}\in(0,1/2)$, such that
$|p_d(x)|\leq\eta_{\rm filt}$ for $|x|\leq\lambda$ and
$|p_d(x)|\geq1-\eta_{\rm filt}$ for
$|x|\geq\lambda+\Gamma$.

\begin{theorem}[Certified extrema search for structured quantum amplitude signals]
\label{thm:main}

Supposing that the filter leakage is controlled relative to the
retained mass, $\eta_{\rm filt}^2\leq\xi P_R,~ 0<\xi<1,$ where $\xi$ is independent of $N$. Then, for any target failure
probability $\delta\in(0,1)$, a global largest-magnitude extremum
can be recovered and certified using
\begin{equation}
Q_{\rm ext}^{\rm AA}
=
\widetilde O\!\left(
\frac{K d C_d}{\sqrt{P_R}}
+
\frac{K}{\epsilon_{\rm QAE}}
\right),
\qquad
d=
O\!\left(
\Gamma^{-1}\log\frac{1}{\eta_{\rm filt}}
\right),
\end{equation}
coherent state-preparation queries. Here
$\epsilon_{\rm QAE}$ is the additive precision used in the final
quantum-amplitude-estimation step.

{For a unique extremum, define $p_i:=|u_i|^2$ and let
$p_{(1)}>p_{(2)}$ be the largest and second-largest candidate
probabilities. The choices
$
\epsilon_{\rm QAE}
<
\Delta_p/4,$ and $
\Delta_p:=p_{(1)}-p_{(2)},
$
are sufficient to distinguish the global extremum from the
second-largest candidate.}
\end{theorem}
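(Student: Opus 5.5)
The proof splits into three stages: filter, enumerate the candidates, then certify. Lemma~\ref{lem:qg1} supplies the filtering stage. First, I will invoke the polynomial-approximation result of Appendix~\ref{sec:polynomial_filter}. It gives a bounded even polynomial $p_d$, a smoothed step approximating $\mathbf{1}[|x|\ge\lambda+\Gamma/2]$, of degree $d=O(\Gamma^{-1}\log(1/\eta_{\rm filt}))$ and with bounded coefficient norm $C_d$. The error function construction with a Jackson-type Chebyshev truncation is the natural choice. Next, I will bound the post-filter mass. On $R_\Gamma$ we have $|p_d(u_i)|\ge 1-\eta_{\rm filt}>1/2$, so $P_e\ge (1-\eta_{\rm filt})^2P_R\ge P_R/4$. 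Lemma~\ref{lem:qg1} then prepares the filtered state
\[
\ket{\Phi}\propto\sum_i p_d(u_i)u_i\ket{i}
\]
at cost $O(dC_d\sqrt{1/P_R}\log(1/\delta_{\rm post}))$.

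The step I expect to be the main obstacle is controlling leakage, that is, showing that a measurement of $\ket{\Phi}$ lands in $C_\Gamma$ with constant probability. The indices below threshold contribute
\[
\sum_{|u_i|<\lambda}p_d(u_i)^2u_i^2\le\eta_{\rm filt}^2\sum_i u_i^2=\eta_{\rm filt}^2\le\xi P_R.
\]
Dividing by $P_e\ge(1-\eta_{\rm filt})^2P_R$ bounds the off-candidate probability by $\xi/(1-\eta_{\rm filt})^2$. This is a constant independent of $N$, provided $\xi$ is chosen below $1/4$ or the bound is absorbed into constants. The $N$-independence of $\xi$ is what makes this work, and I would make it explicit there.

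With that bound in hand, the enumeration stage repeats preparation and measurement to collect all of $R_\Gamma$. The global extremum $i^\star$ lies in $R_\Gamma$ whenever $R_\Gamma\neq\emptyset$, which holds because $P_R>0$. The argument is coupon-collector style. Each element $i\in R_\Gamma$ is observed with probability at least $(1-\eta_{\rm filt})^2|u_i|^2/P_e$. I would reweight by iterating: remove already-seen indices by re-filtering, or simply use the fact that $i^\star$ has the largest weight, hence probability at least $\Omega(1/K)$. Then $O(K\log(K/\delta))$ repetitions recover $i^\star$ with probability $1-\delta/2$. This gives the first term $\widetilde O(KdC_d/\sqrt{P_R})$.

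Finally, the certification stage. For each observed candidate $j$, I will run quantum amplitude estimation on $|\braket{j}{\Psi}|^2=p_j$ using $U_S$ and $U_S^\dagger$, to additive precision $\epsilon_{\rm QAE}$ and with per-candidate failure probability $\delta/(2K)$ via median amplification. This costs $\widetilde O(1/\epsilon_{\rm QAE})$ per candidate, or $\widetilde O(K/\epsilon_{\rm QAE})$ in total. A union bound gives simultaneous accuracy for all candidates. If $\epsilon_{\rm QAE}<\Delta_p/4$, the estimates satisfy $\hat p_{(1)}>p_{(1)}-\Delta_p/4>p_{(2)}+\Delta_p/4>\hat p_{(2)}$, so the argmax of the estimates is the true extremum. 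Global correctness follows because every index outside $C_\Gamma$ has $|u_i|<\lambda\le|u_{i^\star}|$, so no unseen index can beat $i^\star$. The sign is then fixed by the interference test of Appendix~\ref{sec:Candidates,enum,cert,sign,cases}. Summing the costs of the two stages and the failure probabilities gives the stated $Q^{\rm AA}_{\rm ext}$ with overall success probability $1-\delta$.
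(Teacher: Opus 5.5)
Your three-stage plan matches the paper's proof, and the cost bookkeeping agrees. The stages are: filter with $P_e\ge(1-\eta_{\mathrm{filt}})^2P_R$; repeat the amplified preparation until $i^\star$ is observed; then run QAE on the observed addresses with $\epsilon_{\mathrm{QAE}}<\Delta_p/4$ and a union bound. Your erf-type filter is a legitimate substitute for the paper's Jackson-ramp-plus-majority construction.

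The step that does not go through for the full hypothesis range is the bound $q_\star=\Omega(1/K)$. You derive it from the leakage estimate $1-q_\Gamma\le\xi/(1-\eta_{\mathrm{filt}})^2$. That estimate is a nontrivial constant only when $\xi<(1-\eta_{\mathrm{filt}})^2$. The hypotheses give only $\eta_{\mathrm{filt}}\le\sqrt{\xi}$ and $\eta_{\mathrm{filt}}<1/2$, so this is guaranteed only for $\xi<1/4$. For larger admissible $\xi$ with $P_R$ of order one (e.g.\ $\xi=0.9$, $P_R=1/2$, $\eta_{\mathrm{filt}}=0.49$) the bound exceeds $1$ and cannot be ``absorbed into constants.''

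There is a second issue: $i^\star$ need not carry the largest \emph{filtered} probability. You only know $q_{i^\star}\ge(1-\eta_{\mathrm{filt}})^2\max_j q_j$, because $|p_d|\le1$ everywhere while only $|p_d(u_{i^\star})|\ge1-\eta_{\mathrm{filt}}$ is guaranteed.

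The paper avoids both problems by bounding $P_e$ from \emph{above}. With $p_\star=|u_{i^\star}|^2$, one has $P_e\le\sum_{i\in C_\Gamma}u_i^2+\eta_{\mathrm{filt}}^2\le Kp_\star+\eta_{\mathrm{filt}}^2$. Since $P_R\le Kp_\star$, the hypothesis becomes $\eta_{\mathrm{filt}}^2\le\xi Kp_\star$, so
\[
q_\star\ \ge\ \frac{(1-\eta_{\mathrm{filt}})^2}{(1+\xi)K}\ >\ \frac{1}{8K}
\]
for every $\xi<1$, with no separate leakage step.

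Your route can also be repaired. Write $P_e=A+B$, with candidate mass $A\ge(1-\eta_{\mathrm{filt}})^2P_R$ and background mass $B\le\eta_{\mathrm{filt}}^2\le\xi P_R$. This gives $1-q_\Gamma\le\xi/(\xi+(1-\eta_{\mathrm{filt}})^2)<4/5$. Pigeonholing over $C_\Gamma$ and keeping the $(1-\eta_{\mathrm{filt}})^2$ factor then yields $q_\star>1/(20K)$.

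Finally, drop the plan to collect all of $R_\Gamma$. Without a lower bound on the smallest $q_i$ in $R_\Gamma$, this is not achieved in $O(K\log K)$ samples. It is also unnecessary, since only $i^\star$ must be observed, which is what you end up using.
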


{For the representative regime studied below, let
$L:=\log_2N$, with $\Gamma_N=L^{-3/2}$,
$K=O(L)$, and $P_R=\Omega(L^{-1})$. The complexity in terms of the one-time threshold
calibration cost, $Q_{\rm cal}$, is given by
\begin{equation}
Q_{\rm ext}^{\rm AA}
=
Q_{\rm cal}
+
\widetilde O\!\left(
C_dL^3+\frac{L}{\epsilon_{\rm QAE}}
\right).
\end{equation}
The more general inverse-polylogarithmic result, together with the
exact leakage and confidence factors, is given in
Appendices~\ref{sec:polynomial_filter}--\ref{sec:theorem_proof} and~\ref{app:global_completeness}.}

The QG1 construction and candidate recovery are described in Secs.~\ref{sec:qg1} and~\ref{sec:qg2}. The complete proof of Theorem~\ref{thm:main}, including the polynomial bounds, recovery and certification costs is given in Appendices~\ref{sec:polynomial_filter}--\ref{sec:theorem_proof} and~\ref{sec:Candidates,enum,cert,sign,cases}--\ref{app:global_completeness}. 

Thus, under the structure conditions specied above, the certified extrema problem remains polylogarithmic in $N$ under the stated accuracy and failure assumptions, compared with the $O(\sqrt{N})$ GDH and $O(N)$ classical database size dependencies under their corresponding access models.

As pointed out in a recent work \cite{yamakawa2024verifiable}, a useful, end-to-end quantum advantage additionally requires that candidate compression preserve the true global extremum, that the result be efficiently classically verifiable, and that the input state itself be generated without an $O(N)$ classical loading step. We establish these properties next.

\begin{figure*}[t]
\centering
\includegraphics[trim={0.1cm 0.1cm 0.1cm 0.1cm},clip=true,scale=0.27]{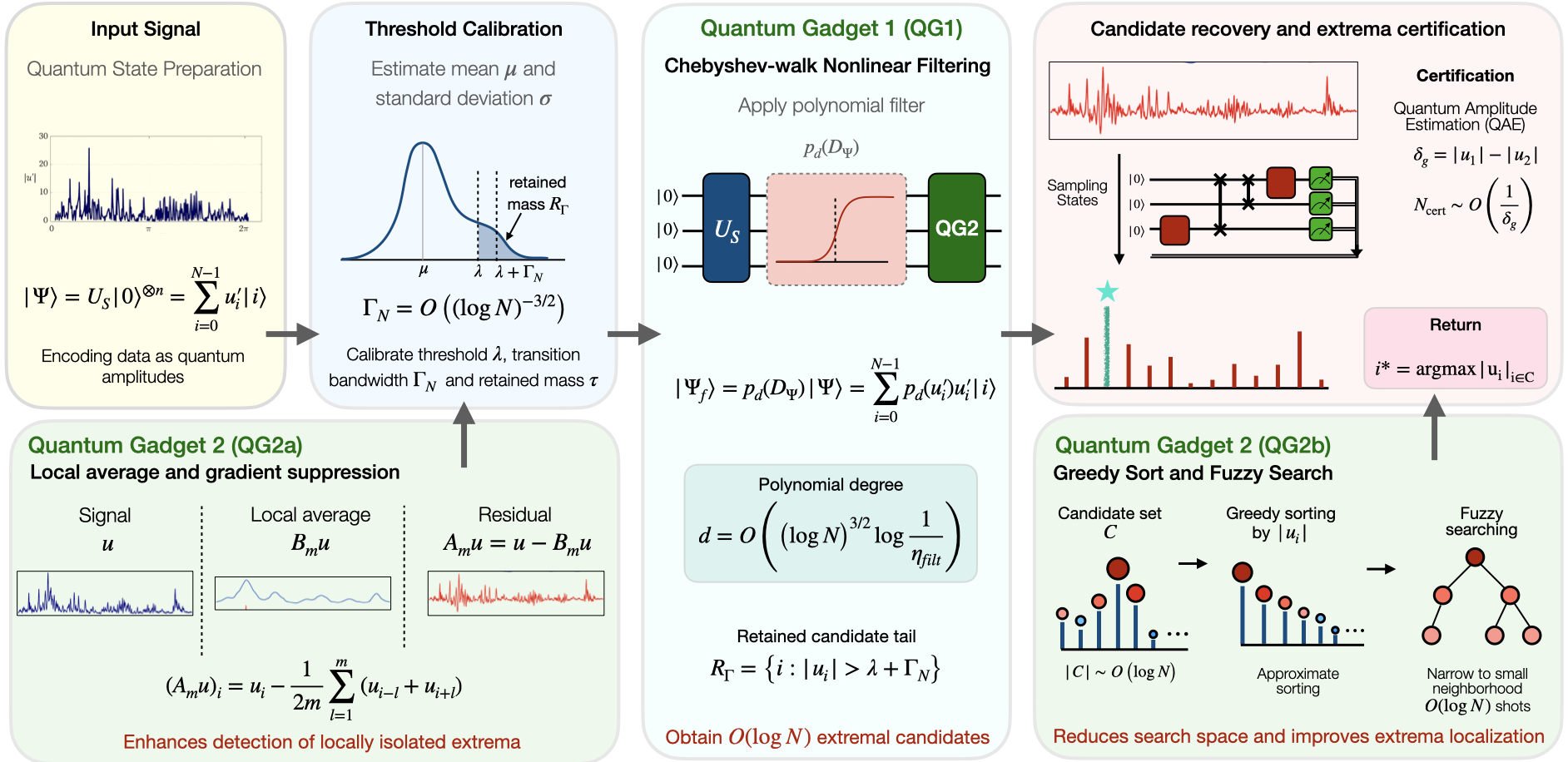}
\caption{\justifying \textbf{End-to-end workflow of the quantum extrema-search algorithm} -- The input amplitude state is first calibrated to determine the threshold $\lambda$ and finite \textcolor{blue}{filter threshold} width $\Gamma_N$. QG1 then applies the Chebyshev-walk polynomial filter, suppressing the background while retaining the extreme-event tail. Repeated successful measurements recover $O(\log N)$ candidate addresses, after which only these candidates are compared and the extremal address, along with its value and sign, is certified. QG2a optionally enhances extrema that stand out from their local surroundings by subtracting a neighborhood average before filtering, while QG2b optionally localizes a dominant candidate using the binary address structure after filtering. The unmodified QG1 route and explicit candidate enumeration remain rigorous fallbacks.}
\label{fig:workflow_main}
\end{figure*}
\begin{lemma}[Global completeness after candidate compression]
\label{lem:global_completeness}
{Let
$i^\star\in\arg\max_i |u_i|$
be a global extremum. Under the conditions of
Theorem~1, $P_R>0$ implies that the global extremum lies in the
safely retained set $R_\Gamma$. In terms of $P_e
:=
\sum_i |p_d(u_i)|^2|u_i|^2$ from Lemma 1, write
\begin{equation}
q_\star
:=
\frac{|p_d(u_{i^\star})|^2|u_{i^\star}|^2}{P_e},
\end{equation}
to denote the probability of observing $i^\star$ in one successful
filtered-state measurement. Then
\begin{equation}
q_\star
\geq
\frac{(1-\eta_{\rm filt})^2}
{K\left(1+\eta_{\rm filt}^2/P_R\right)} .
\end{equation}
Consequently, it is sufficient to use 
\begin{equation}
M_{\rm glob}
=
O\!\left[
\frac{K}{(1-\eta_{\rm filt})^2}
\left(
1+\frac{\eta_{\rm filt}^2}{P_R}
\right)
\log\frac{1}{\delta_{\rm glob}}
\right]
\end{equation}
number of successful filtered-state measurements to recover
a global-extremum address with probability at least
$1-\delta_{\rm glob}$, where $\delta_{\rm glob}$ is the allowed
failure probability of the candidate-recovery stage.}
\end{lemma}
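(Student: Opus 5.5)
The proof has three parts: show that $i^\star$ lies in the safely retained set, bound $q_\star$ by splitting $P_e$ over $R_\Gamma$, $C_\Gamma\setminus R_\Gamma$ and the background, and convert that bound into a repetition count.

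\textbf{Step 1: $i^\star\in R_\Gamma$.} Since $P_R>0$, the set $R_\Gamma$ is nonempty. So some $j$ satisfies $|u_j|\ge\lambda+\Gamma$. Then $|u_{i^\star}|\ge|u_j|\ge\lambda+\Gamma$, which gives $i^\star\in R_\Gamma\subseteq C_\Gamma$. In particular the filter guarantee yields $|p_d(u_{i^\star})|\ge 1-\eta_{\rm filt}$. Moreover $|u_{i^\star}|^2$ is the largest probability in $R_\Gamma$, and $|R_\Gamma|\le|C_\Gamma|=K$, so
\[
|u_{i^\star}|^2\ \ge\ P_R/|R_\Gamma|\ \ge\ P_R/K .
\]
The numerator of $q_\star$ is therefore at least $(1-\eta_{\rm filt})^2P_R/K$.

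\textbf{Step 2: upper bound on $P_e$.} I split the sum over $i$ into three groups.
\begin{itemize}
\item On $R_\Gamma$ and on $C_\Gamma\setminus R_\Gamma$, I use only the boundedness of the filter, $|p_d|\le 1$. I will normalize $p_d$ so its sup-norm on $[-1,1]$ is at most one, as in the construction referenced after Lemma~\ref{lem:qg1}.
\item On the background $|u_i|<\lambda$, I use $|p_d(u_i)|\le\eta_{\rm filt}$, so this part contributes at most $\eta_{\rm filt}^2\sum_i|u_i|^2\le\eta_{\rm filt}^2$.
\end{itemize}
The key point is to bound the whole candidate contribution by $K|u_{i^\star}|^2$, using maximality of $|u_{i^\star}|$. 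This gives
\[
P_e\ \le\ K|u_{i^\star}|^2+\eta_{\rm filt}^2 .
\]
Combining with Step 1,
\[
q_\star\ \ge\ \frac{(1-\eta_{\rm filt})^2|u_{i^\star}|^2}{K|u_{i^\star}|^2+\eta_{\rm filt}^2}
=\frac{(1-\eta_{\rm filt})^2}{K\bigl(1+\eta_{\rm filt}^2/(K|u_{i^\star}|^2)\bigr)} .
\]
Finally, Step 1 gives $K|u_{i^\star}|^2\ge P_R$, which yields exactly the stated bound.

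\textbf{Step 3: repetition count.} Successful filtered-state measurements, produced by Lemma~\ref{lem:qg1}, are independent. The probability of missing $i^\star$ in $M$ trials is therefore $(1-q_\star)^M\le e^{-Mq_\star}$. Choosing $M\ge q_\star^{-1}\ln(1/\delta_{\rm glob})$ and inserting the lower bound on $q_\star$ gives $M_{\rm glob}$. The hypothesis $\eta_{\rm filt}^2\le\xi P_R$ then makes the factor $1+\eta_{\rm filt}^2/P_R\le 1+\xi$ an $N$-independent constant.

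\textbf{Main obstacle.} The delicate point is the unresolved band $C_\Gamma\setminus R_\Gamma$, where $p_d$ carries no guarantee except boundedness. The fix is to charge that band to $K|u_{i^\star}|^2$ rather than to $P_R$. I would also check that the normalization $\sup_{[-1,1]}|p_d|\le1$ is indeed part of the filter hypotheses. If it is not, a factor $\|p_d\|_\infty^2$ would enter, and I would absorb it into the constant.
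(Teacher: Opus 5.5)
Your proposal is correct and follows the paper's proof essentially step for step. The paper also places $i^\star$ in $R_\Gamma$ from $P_R>0$, and bounds $P_e\le K|u_{i^\star}|^2+\eta_{\rm filt}^2$ using $|p_d|\le 1$ on $C_\Gamma$ and $|p_d|\le\eta_{\rm filt}$ off it. It then uses $P_R\le K|u_{i^\star}|^2$ to absorb the leakage term and finishes with $(1-q_\star)^M\le e^{-Mq_\star}$; the boundedness $0\le p_d\le 1$ you flag is already guaranteed by the paper's filter construction, so no extra constant is needed.
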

{In particular, when the filter leakage is controlled so that
$\eta_{\rm filt}^2=O(P_R)$, the required number of successful
measurements is
$M_{\rm glob}=O[K\log(1/\delta_{\rm glob})]$.
Thus the compression step retains a true global extremum of the
original $N$-point signal with controlled failure probability,
rather than merely returning the largest event among the sampled
candidates. The proof is given in Appendix~\ref{app:global_completeness}.}

Global completeness establishes the correctness of the extrema search algorithm. We next ask whether that compact result can also be verified efficiently by a classical observer. We distinguish verification under faithful circuit execution from verification when the quantum device itself is not trusted (``untrusted"). 

\begin{proposition}[Classical verification \vspace{0.3cm}]
\label{prop:verification}

\textbf{\newline Level I (trusted device verification)}\\

Let
\begin{equation}
q_i:=
\frac{|p_d(u_i)|^2|u_i|^2}{P_e}
\end{equation}

\noindent be the probability of observing address $i$ in the successful
filtered state. If $\widehat C$ is the recovered candidate set,
define the unrecovered filtered probability

\begin{equation}
R_{\rm miss}:=
1-\sum_{i\in\widehat C}q_i .
\end{equation}
Let $\hat i\in\widehat C$ be the recovered candidate with the
largest certified amplitude. If $\hat i\in R_\Gamma$, then
\begin{equation}
R_{\rm miss}
<
(1-\eta_{\rm filt})^2q_{\hat i}
\end{equation}

\noindent certifies that no unrecovered address has amplitude larger than
$|u_{\hat i}|$. Once these estimates have been obtained, 
the remaining verification is classical.\\ 

\noindent \textbf{Level II (untrusted quantum device)}
The complete extrema-search computation can be reduced to a
bounded-error quantum decision circuit and verified using the
classical cryptographic protocol of ref.~\cite{mahadev2018classical}, under its stated
computational assumptions. The verifier cost is polynomial in the
verified circuit size and security parameter, and is therefore
polylogarithmic in $N$ when the latter two are also polylogarithmic in $N$.

\end{proposition}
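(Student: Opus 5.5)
Level I is a short contradiction argument that compares filtered probabilities. Level II is a reduction to the cited protocol.

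\textbf{Level I.} Suppose, for contradiction, that some unrecovered address $j\notin\widehat C$ has $|u_j|>|u_{\hat i}|$. Since $\hat i\in R_\Gamma$, we have $|u_{\hat i}|\geq\lambda+\Gamma$, and hence $|u_j|\geq\lambda+\Gamma$, so $j\in R_\Gamma$ as well. The filter guarantee stated before Theorem~\ref{thm:main} then gives $|p_d(u_j)|\geq 1-\eta_{\rm filt}$. Combining this with $|u_j|^2>|u_{\hat i}|^2$ and the upper bound $|p_d(u_{\hat i})|\leq 1$ (the filter is bounded by one on $[-1,1]$, as in Lemma~\ref{lem:qg1}) yields
\[
q_j=\frac{|p_d(u_j)|^2|u_j|^2}{P_e}\geq\frac{(1-\eta_{\rm filt})^2|u_{\hat i}|^2}{P_e}\geq(1-\eta_{\rm filt})^2\,\frac{|p_d(u_{\hat i})|^2|u_{\hat i}|^2}{P_e}=(1-\eta_{\rm filt})^2q_{\hat i}.
\]
Because $j$ is unrecovered, $q_j\leq\sum_{i\notin\widehat C}q_i=R_{\rm miss}$. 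This contradicts the hypothesis $R_{\rm miss}<(1-\eta_{\rm filt})^2q_{\hat i}$, so every unrecovered address has amplitude at most $|u_{\hat i}|$. The same inequality also covers ties: any $j$ with $|u_j|=|u_{\hat i}|$ satisfies the displayed bound non-strictly, so it too is excluded. Consequently $\hat i$ is a global extremum.

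The remaining classical work consists only of arithmetic on $O(K)$ numbers: the estimates $q_i$ for $i\in\widehat C$ (obtained by QAE on the filtered state, as in Theorem~\ref{thm:main}), $|u_i|$, and $\eta_{\rm filt}$. I expect the main obstacle to be that the $q_i$ are only known to precision $\epsilon_{\rm QAE}$. I would handle this with a conservative test: take upper confidence bounds for $R_{\rm miss}$, namely $1-\sum_{i\in\widehat C}\hat q_i+K\epsilon_{\rm QAE}$, and a lower bound $\hat q_{\hat i}-\epsilon_{\rm QAE}$ for $q_{\hat i}$. A union bound over the $K$ estimates then makes the certificate hold with probability at least $1-\delta$. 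Membership $\hat i\in R_\Gamma$ is checked in the same conservative way, from the certified amplitude estimate together with $\lambda+\Gamma$.

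\textbf{Level II.} The plan is to express the whole pipeline as a single polynomial-size quantum circuit followed by a decision predicate. The pipeline consists of calibration, the Chebyshev-walk filter of Lemma~\ref{lem:qg1} with amplitude amplification, $M_{\rm glob}$ repetitions (Lemma~\ref{lem:global_completeness}), QAE on the candidates, and the Level I test. The decision predicate asks whether $(\hat i,\text{value},\text{sign})$ is the certified extremum, accepting with probability at least $2/3$ on a correct claim and at most $1/3$ otherwise; this gap is produced by standard majority-vote repetition. The resulting BQP decision instance is then fed to the protocol of Ref.~\cite{mahadev2018classical}. That protocol's verifier runs in time polynomial in the circuit size and the security parameter, under LWE-type assumptions. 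By Theorem~\ref{thm:main}, the circuit size is $\widetilde O(KdC_d/\sqrt{P_R}+K/\epsilon_{\rm QAE})$ times the polylog cost of $U_S$, which is polylogarithmic in $N$ in the stated regime. Choosing a polylogarithmic security parameter therefore gives a polylogarithmic verifier cost.
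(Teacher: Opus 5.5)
Your proposal is correct and follows the paper's proof essentially step for step. Level I is the same contradiction argument: an unrecovered $j$ with $|u_j|>|u_{\hat i}|$ must lie in $R_\Gamma$, so $q_j\geq(1-\eta_{\rm filt})^2q_{\hat i}$ using $|p_d|\leq1$, while $q_j\leq R_{\rm miss}$; you also give the same confidence-bound version for finite precision. Level II is the same reduction to a bounded-error decision circuit verified by the protocol of Ref.~\cite{mahadev2018classical}.

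Two small caveats. First, the QAE step in Theorem~\ref{thm:main} estimates $p_i=|u_i|^2$ on the unfiltered state, so $q_i$ and $R_{\rm miss}$ need their own estimates on the filtered branch. Second, the $2/3$ versus $1/3$ separation in Level II requires a unique extremum or a fixed tie-breaking rule, which the paper assumes explicitly; if the claimed value is part of the predicate, you also need a tolerance promise.
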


Level I verifies the reported extremum directly from the compressed filtered state under faithful circuit execution. Level II removes this trust assumption by reducing the claimed extremum to a bounded-error quantum decision problem that can be verified classically using the protocol of ref.~\cite{mahadev2018classical}. The Level-I certificate and the Level-II reduction are derived in
Appendix~\ref{app:verification}.

For fluid dynamics applications, the state-preparation module $U_S$ can be implemented by the quantum homotopy solver of ref.~\cite{bharadwaj2025homotopy}, followed when required by the derived-field operation of ref.~\cite{bharadwaj2024compact}. Let $Q_S:=Q_H+Q_{\nabla}$ denote the query cost of one coherent preparation, including its inverse and controlled use. Under the stated assumptions of these algorithms, $Q_S=\polylog{N}$, and the corresponding underlying gate complexity is also polylogarithmic in the resolved problem size. Thus, substituting this preparation into the extrema-search algorithm gives the end-to-end query complexity of
\begin{equation}
Q_{\rm e2e}
=
Q_S Q_{\rm ext}^{\rm AA}
=
\polylog{N}.
\label{eq:end_to_end_query}
\end{equation}
The explicit composition, including sparsity, conditioning, postselection, accuracy, and certification factors, is given in Appendix~\ref{sec:end_to_end}. The complete workflow is summarized in Fig.~\ref{fig:workflow_main}.

\section{How the algorithm works}
\label{sec:how_algorithm}

The algorithm separates the difficult global search from the final high-precision comparison, as summarized in Fig.~\ref{fig:workflow_main}. Starting from the amplitude state $|\Psi\rangle=\sum_i u_i|i\rangle$, coarse global statistics are first used only to place the filter threshold $\lambda$ and its finite transition width $\Gamma_N$. QG1 then applies a bounded Chebyshev polynomial directly to the amplitudes, suppressing the broad background while retaining the extreme-event tail. Measurements of the successful filtered state therefore return candidates rather than arbitrary indices, reducing the original $N$-point problem to $O(\log N)$ relevant addresses. QG2 provides two optional refinements. Before QG1, a local-excess transformation can enhance peaks or troughs relative to their neighborhood; after QG1, fuzzy search and greedy sorting can localize a dominant candidate through the binary address tree. Finally, only the recovered candidates are compared at high precision to determine the extremal value and sign, while the global-completeness and verification tests certify that a better unrecovered candidate has not been missed. Thus the expensive global stage performs only coarse candidate isolation, and the fine numerical comparison is postponed until the search space has already been compressed.

\subsection{Quantum Gadget 1: Chebyshev-walk amplitude filtering}
\label{sec:qg1}

The central task of QG1 is to transform the input amplitude state $|\Psi\rangle=\sum_i u_i|i\rangle$ into a state in which the broad background is suppressed and the extreme amplitudes are retained, without measuring or explicitly estimating the individual values $u_i$. This corresponds to the nonlinear amplitude transformation
\begin{equation}
\sum_i u_i|i\rangle
\longmapsto
\sum_i p_d(u_i)u_i|i\rangle,
\label{eq:qg1_goal}
\end{equation}
where the bounded polynomial $p_d$ is chosen so that $p_d(u_i)\simeq0$ below the candidate threshold and $p_d(u_i)\simeq1$ in the safely retained tail. The key question is therefore how to apply a scalar polynomial directly to amplitudes that are available only implicitly through the preparation circuit $U_S$.

First, the amplitudes must be made accessible to quantum operations without being read out. Let $S$ be the signal register and $A$ an auxiliary register. A computational-basis copy followed by $U_S^\dagger$ defines
\begin{align}
&R_\Psi=(I_S\otimes U_S^\dagger)C,~~~
(I_S\otimes\langle0^n|_A)R_\Psi(I_S\otimes|0^n\rangle_A)
=
D_\Psi,\\
&D_\Psi:=\sum_i u_i|i\rangle\langle i|.
\label{eq:qg1_block}
\end{align}
Thus the unknown amplitude $u_i$ has become the eigenvalue of $D_\Psi$ associated with $|i\rangle$, since $D_\Psi|i\rangle=u_i|i\rangle$. No numerical value register or amplitude measurement is introduced. The construction and its proof are given in Appendix~\ref{sec:chebyshev_walk}.

Second, we need to apply a useful function to these eigenvalues. This is where the Chebyshev walk enters, which we notice can be  related to the qubitization and block-encoded polynomial-transformation framework~\cite{LowChuang2019,Gilyen2019}, specialized here to the amplitude observable $D_\Psi$ constructed directly from $U_S$. After adding one qubit and converting $R_\Psi$ into the Hermitian unitary $\mathcal U_\Psi$, define
\begin{equation}
\Pi
=
|+\rangle\langle+|_q
\otimes
|0^n\rangle\langle0^n|_A
\otimes I_S,
\qquad
W_\Psi=(2\Pi-I)\mathcal U_\Psi .
\label{eq:qg1_walk}
\end{equation}
The key identity is
\begin{equation}
\Pi W_\Psi^\ell\Pi
=
T_\ell(D_\Psi)\Pi .
\label{eq:qg1_cheb_identity}
\end{equation}
Its meaning is simple. For each signal amplitude write $u_i=\cos\theta_i$. In the two-dimensional subspace associated with index $i$, one application of $W_\Psi$ is a rotation by the angle $\theta_i$. After $\ell$ applications the angle becomes $\ell\theta_i$, and projection back onto the selected auxiliary state gives
$\cos(\ell\theta_i)=T_\ell(u_i)$. The walk therefore converts the hidden amplitudes into the complete Chebyshev basis
$1,u_i,T_2(u_i),\ldots,T_d(u_i)$ without learning any $u_i$ individually. The full rotation proof is given in Appendix~\ref{sec:chebyshev_walk}.

Finally, the required filter is already written in this basis,
\begin{equation}
p_d(x)=\sum_{\ell=0}^{d}c_\ell T_\ell(x),
\qquad
C_d:=\sum_{\ell=0}^{d}|c_\ell|.
\label{eq:qg1_polynomial}
\end{equation}
As shown schematically in Fig.~\ref{fig:qg1_circuit_main}, a history register prepares the different degrees $\ell$ with weights proportional to $\sqrt{|c_\ell|}$, controls the corresponding powers $W_\Psi^\ell$, and inserts the sign of $c_\ell$. Unpreparing the history register contributes the second factor $\sqrt{|c_\ell|}$, so the successful branch coherently adds the terms with their required coefficients. Projecting the auxiliary registers onto their selected states therefore gives
\begin{align}
|\Xi_{\rm succ}\rangle
&=
\frac{1}{C_d}
\sum_{\ell=0}^{d}c_\ell T_\ell(D_\Psi)|\Psi\rangle \\
&=
\frac{1}{C_d}p_d(D_\Psi)|\Psi\rangle
=
\frac{1}{C_d}\sum_i p_d(u_i)u_i|i\rangle .
\label{eq:qg1_output}
\end{align}
The coefficient synthesis and normalization are derived in Appendix~\ref{sec:theorem_proof}.

\eqref{eq:qg1_output} is the central action of QG1. Amplitudes below the threshold are multiplied by a number close to zero and are therefore suppressed, while amplitudes safely beyond the transition region are multiplied by a number close to one and retain approximately their original weight. Importantly, the filter does not determine which surviving amplitude is the largest. It only concentrates the quantum state onto the short candidate tail, leaving the fine comparison to the later certification stage.

If $P_e:=\|p_d(D_\Psi)|\Psi\rangle\|_2^2$, the branch in \eqref{eq:qg1_output} occurs with probability $P_e/C_d^2$. Amplitude amplification gives
\begin{equation}
Q_{\rm QG1}^{\rm AA}
=
O\!\left(
\frac{dC_d}{\sqrt{P_e}}
\log\frac{1}{\delta_{\rm post}}
\right),
\label{eq:qg1_cost}
\end{equation}
where $\delta_{\rm post}$ is the allowed preparation failure probability.
{For retained mass $P_R$, the filter guarantee gives
$P_e\geq(1-\eta_{\rm filt})^2P_R$. Thus the preparation
cost is controlled directly by the retained probability mass
rather than by any prescribed logarithmic form.}

{The remaining ingredient is the degree $d$. QG1 deliberately
does not ask the polynomial to resolve the potentially tiny
separation between the two largest amplitudes. It resolves only
the designed transition width $\Gamma$. The bounded polynomial
construction in Appendix~\ref{sec:polynomial_filter} gives
$
d
=
O\!\left(
\Gamma^{-1}
\log\frac{1}{\eta_{\rm filt}}
\right).
$
For the particular case of turbulence
$\Gamma=\Gamma_N=(\log_2N)^{-3/2}$, so that $d$ becomes
$d=O[(\log_2N)^{3/2}\log(1/\eta_{\rm filt})]$.
Thus the expensive global stage depends on the deliberately
chosen finite transition width rather than on the final top-two
spacing. That fine spacing enters only after QG1 has compressed
the original signal to the candidate set.}

\begin{figure*}[htpb!]
\centering
\includegraphics[trim={0.1cm 0.1cm 0.1cm 0.1cm},clip=true,scale=0.25]{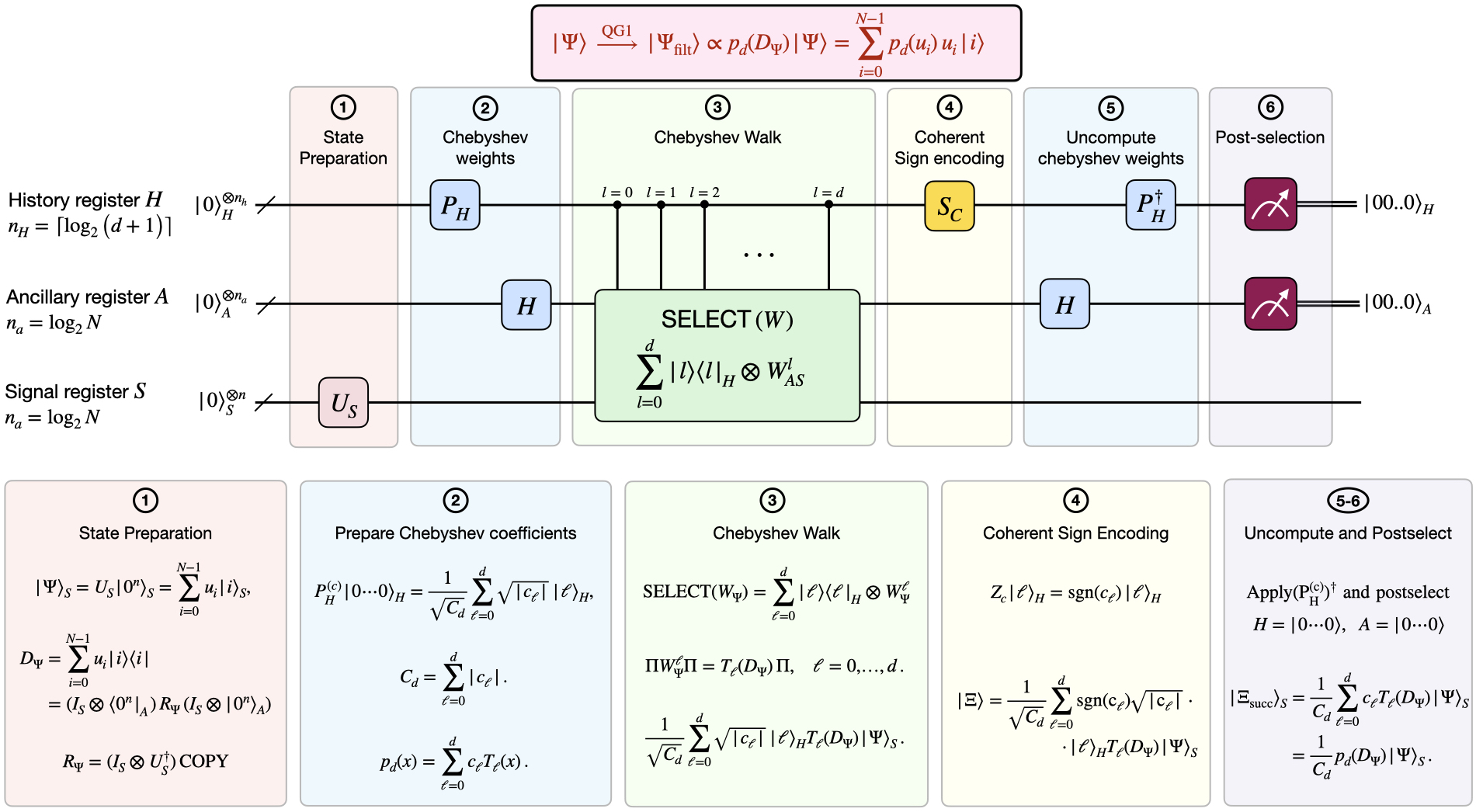}
\caption{\justifying \textbf{Register-level construction of QG1.} The signal register is prepared as $|\Psi\rangle=U_S|0^n\rangle$. The amplitude-observable construction exposes the unknown amplitudes through $D_\Psi$, and the Chebyshev walk generates the operators $T_\ell(D_\Psi)$. The history register $H$ is prepared with amplitudes $\sqrt{|c_\ell|/C_d}$, controls the corresponding walk powers $W_\Psi^\ell$, and acquires the sign of each coefficient $c_\ell$. Unpreparing and postselecting the auxiliary registers coherently combines these terms, leaving $C_d^{-1}p_d(D_\Psi)|\Psi\rangle$ in the signal register. The lower row summarizes the corresponding algebraic state evolution.}
\label{fig:qg1_circuit_main}
\end{figure*}

\subsection{Quantum Gadget 2: optional contrast and localization}
\label{sec:qg2}

QG2 provides two optional refinements to QG1. QG2a acts before filtering and enhances events that are large relative to their local surroundings. QG2b acts after filtering and uses the binary structure of the address register to localize a dominant candidate. Neither operation is required for Theorem~\ref{thm:main}, and the original QG1 route remains the rigorous fallback.

For QG2a, let $m$ neighboring points on either side of $u_i$ define the local average
\begin{equation}
(B_m u)_i
=
\frac{1}{2m}
\sum_{\ell=1}^{m}
\left(
u_{i-\ell}+u_{i+\ell}
\right),
\qquad
A_m:=I-B_m ,
\label{eq:qg2_local}
\end{equation}
with periodic indexing when appropriate. The residual $(A_m u)_i=u_i-(B_m u)_i$ is small when the signal varies slowly near $i$ but remains large for a sufficiently isolated peak or trough. Thus $m$ controls the spatial scale over which local contrast is measured. Importantly, $A_m$ does not in general preserve the ordering of the original amplitudes. QG2a is therefore used only when extrema relative to the local surroundings are physically relevant; the canonical global-extremum problem corresponds to $A_0:=I$.

Writing $S|i\rangle=|i+1\pmod N\rangle$ gives $B_m=(2m)^{-1}\sum_{\ell=1}^{m}(S^\ell+S^{-\ell})$, whose LCU implementation has coefficient normalization $C_{A,m}=2$ for $m\geq1$. For cost accounting, define
\begin{align}
&E_m:=\|A_m|\Psi\rangle\|_2^2,
\qquad
~~~~~~|\Psi_m\rangle:=\frac{A_m|\Psi\rangle}{\sqrt{E_m}},
\\
&F_m:=\|p_{d_m}(D_m)|\Psi_m\rangle\|_2^2,
~~~P_m:=E_mF_m ,
\label{eq:qg2_useful_mass}
\end{align}
where $D_m$ is the amplitude observable of $|\Psi_m\rangle$. The product $P_m$ accounts both for preparing the residual and for retaining its extreme-event tail after QG1, preventing normalization of a very small residual from appearing artificially favorable.

\begin{proposition}[Cost of the optional local contrast transformation]
\label{prop:qg2_channel}
For any admissible neighborhood width $m$, one successful QG1 candidate state prepared after applying $A_m$ has query cost
\begin{align}
&Q_m^{\rm AA}
=
O\!\left[
\frac{C_{A,m}d_mC_{d_m}}{\sqrt{P_m}}
\log\!\left(
\frac{1}{\delta_m}
\right)
\right],\\
&d_m
=
O\!\left[
(\log_2N)^{3/2}
\log\!\left(
\frac{1}{\eta_{{\rm filt},m}}
\right)
\right].
\label{eq:qg2_channel_cost}
\end{align}
A prescribed set $\mathcal M$ of widths can be tested during calibration, always including $m=0$. The complete implementation and cost accounting are given in Appendix~\ref{app:qg2}.
\end{proposition}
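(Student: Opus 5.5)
The plan is to treat the local-contrast step as one more block-encoding layer stacked in front of QG1, and to apply Lemma~\ref{lem:qg1} to the normalized residual state $|\Psi_m\rangle$ rather than to $|\Psi\rangle$.

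\textbf{Step 1: block-encode $A_m$.} The shift $S$ is a modular adder on the address register, so $S^{\pm\ell}$ are efficient unitaries that make no signal queries. Writing
\[
A_m=I-\frac{1}{2m}\sum_{\ell=1}^{m}(S^\ell+S^{-\ell})
\]
gives an LCU with coefficient $\ell_1$-norm $1+2m\cdot\frac{1}{2m}=2=C_{A,m}$. The standard prepare--select--unprepare construction then yields a unitary $V_m$ that uses one call to $U_S$. Its successful branch is $A_m|\Psi\rangle/C_{A,m}$, which occurs with probability $E_m/C_{A,m}^2$ and leaves $|\Psi_m\rangle$ in the signal register.

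\textbf{Step 2: run QG1 with $V_m$ in place of $U_S$.} Lemma~\ref{lem:qg1} needs $U_S$, $U_S^\dagger$ and their controlled versions, so I must check it still works with the non-unitary postselected preparation. The route I will try first is to take the unitary $V_m$, including its LCU ancilla, as the state-preparation circuit for an enlarged state. That state is $|0\rangle_{\rm anc}|\Psi_m\rangle\sqrt{E_m}/C_{A,m}+|\perp\rangle$. I then build the amplitude observable of this enlarged state and restrict the filter to the ancilla-$|0\rangle$ subspace. On that subspace its diagonal entries are $(A_m u)_i/C_{A,m}$, which are the amplitudes of $|\Psi_m\rangle$ up to the known factor $\sqrt{E_m}/C_{A,m}$. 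This factor is absorbed by rescaling the threshold $\lambda_m$ and width $\Gamma_m$ of the filter $p_{d_m}$.

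The alternative route is to first amplify $V_m$ to prepare $|\Psi_m\rangle$ exactly, at cost $O(C_{A,m}/\sqrt{E_m})$ per call, and then invoke Lemma~\ref{lem:qg1} directly. Its factor $d_mC_{d_m}/\sqrt{F_m}$ would then multiply this cost. That gives $C_{A,m}d_mC_{d_m}/\sqrt{E_mF_m}=C_{A,m}d_mC_{d_m}/\sqrt{P_m}$, which matches the statement. The $\log(1/\delta_m)$ factor comes from the fixed-point or repeated amplitude amplification, with failure budgets split between the two layers.

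\textbf{Step 3: the degree bound.} I will invoke the polynomial filter existence result from Appendix~\ref{sec:polynomial_filter}, $d=O(\Gamma^{-1}\log(1/\eta))$, with the turbulence choice $\Gamma=\Gamma_N=(\log_2N)^{-3/2}$ for the residual signal. This gives the stated $d_m$. Admissibility of $m$ is taken to mean that the calibration step chooses this width for the residual. Testing a finite set $\mathcal M$ that always contains $m=0$ multiplies the cost by $|\mathcal M|$. The case $m=0$ gives $A_0=I$, $E_0=1$, $P_0=P_e$, and reduces to Lemma~\ref{lem:qg1}.

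\textbf{Main obstacle.} The hardest part is Step 2, specifically the nesting. Using an amplitude-amplified, and hence only approximately exact, preparation of $|\Psi_m\rangle$ inside the Chebyshev walk, which calls it and its inverse $O(d_m)$ times, could compound errors. The costs would then multiply instead of combining as $1/\sqrt{E_mF_m}$. I expect the key point to be that exact amplitude amplification is possible because $E_m$ can be estimated during calibration. Failing that, I will bound the per-call error by $\delta_m/d_m$, which costs only logarithmic factors hidden in the $O$.
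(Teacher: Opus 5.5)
Your route (b) is the paper's argument. Route (a), which you say you would try first, does not give the statement and should be dropped.

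\textbf{Why route (a) fails.} On the ancilla-$|0\rangle$ subspace the filter acts on $x_i=(A_mu)_i/C_{A,m}=\sqrt{E_m}\,(\Psi_m)_i/C_{A,m}$. Keeping the residual-coordinate width $\Gamma_N$ therefore shrinks the transition width seen by the polynomial to $\Gamma_N\sqrt{E_m}/C_{A,m}$.

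For $1\le m<N/2$ one checks $|x_i|\le\sqrt6/4$, so these $x_i$ stay a fixed distance from $\pm1$. The Bernstein argument of Appendix~\ref{sec:polynomial_filter}, applied to a polynomial bounded by $C_d$ on $[-1,1]$, then gives $dC_d=\Omega\bigl(C_{A,m}/(\sqrt{E_m}\,\Gamma_N)\bigr)$. Simply substituting $C_{A,m}x/\sqrt{E_m}$ into $p_{d_m}$ keeps the degree, but it destroys boundedness on $[-1,1]$ and inflates $C_d$.

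Meanwhile the successful branch still has amplitude $\sqrt{E_mF_m}/(C_{A,m}C_d)$. Route (a) therefore costs $O\bigl(C_{A,m}^2d_mC_d/(E_m\sqrt{F_m})\bigr)$. You pay the factor $C_{A,m}/\sqrt{E_m}$ twice, once in postselection and once in degree, and the stated $d_m$ fails unless $E_m=\Omega(1)$. Amplifying the residual before it enters the walk is exactly what avoids this.

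\textbf{Route (b).} This matches Appendix~\ref{app:qg2}. Amplitude amplification of the LCU branch prepares $|\Psi_m\rangle$. QG1 then runs on its own amplitude observable $D_m$ with unchanged $\Gamma_N$, which is why $d_m$ has no $E_m$ dependence. The two costs then multiply.

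That multiplication is not the danger you describe; it is precisely what produces $1/\sqrt{E_mF_m}=1/\sqrt{P_m}$. The only real issue is the accuracy of the inner preparation, which the $O(Q\epsilon_S)$ hybrid bound of Appendix~\ref{sec:access} handles. The per-call error must be $\delta_m$ divided by the total number of outer calls across all amplification rounds, roughly $\delta_m\sqrt{F_m}/(d_mC_{d_m})$, not $\delta_m/d_m$.

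Calibration only estimates $E_m$, so exact amplification is not literally available, and fixed-point amplification adds a logarithmic factor. Strictly, you obtain $\widetilde O$ rather than a single $\log(1/\delta_m)$. The paper's one-line combination glosses over this too.

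\textbf{Smaller corrections.}
\begin{itemize}
\item Lemma~\ref{lem:qg1} must be applied with its auxiliary register enlarged to include the LCU ancilla. This is legitimate because that ancilla is returned to $|0\rangle$.
\item Testing $\mathcal M$ inflates only the one-time cost $Q_{\rm cal}$, not each production run.
\item The paper keeps $m=0\in\mathcal M$ so that the selection $m^\star\in\arg\min_m Q_m^{\rm AA}$ guarantees $Q_{m^\star}^{\rm AA}\le Q_0^{\rm AA}$.
\item In the paper, ``admissible'' means the controlled shifts and index preparation have polylogarithmic gate overhead, not that calibration picked $m$.
\end{itemize}
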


QG2b instead uses the binary structure of the index register to localize a dominant candidate after filtering. For $N=2^L$, each candidate address contains $L=\log_2N$ bits. The fuzzy search successively asks which of the two current binary blocks contains more filtered probability, without estimating the amplitudes inside those blocks. The accompanying greedy sort records the statistically favored orientation at each level and temporarily moves that branch to a fixed location. Figure~\ref{fig:qg2_main} illustrates this procedure.

Let $B_r$ be the depth-$r$ block on the path containing the true extremum $i^\star$. Define
\begin{align}
&P_r:=\Pr[i\in B_r]~~
\textrm{and}\\
&\Pr[
\text{child containing }i^\star
\mid i\in B_r
]
=
\frac{1+\zeta_r}{2}.
\label{eq:qg2_hierarchy_def}
\end{align}

\begin{proposition}[Fuzzy search and greedy sort under a hierarchy promise]
\label{prop:qg2_localization}
Suppose
\begin{equation}
P_r\geq p_0>0,
\qquad
\zeta_r\geq\zeta_0>0,
\qquad
r=0,\ldots,L-1.
\label{eq:qg2_hierarchy}
\end{equation}
Then the complete $L$-bit address is recovered with failure probability at most $\delta_{\rm fg}$ using
\begin{equation}
M_{\rm fg}
=
O\left\lceil
\frac{1}{p_0\zeta_0^2}
\log\!\left(
\frac{2L}{\delta_{\rm fg}}
\right)
\right\rceil.
\label{eq:qg2_samples}
\end{equation}
successful candidate-state measurements.
\end{proposition}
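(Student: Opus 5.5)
The plan is to treat the $L$-bit address as a sequence of $L$ binary decisions and run a Chernoff-type majority test at each level, with a union bound over the levels. At depth $r$, the algorithm has already fixed the first $r$ bits, which defines a current block $B_r$, and must choose the next bit. Given successful filtered-state measurements, it keeps only those outcomes landing in $B_r$. By the hierarchy promise \eqref{eq:qg2_hierarchy}, each measurement lands in $B_r$ with probability $P_r\ge p_0$. Conditional on landing there, it falls in the child containing $i^\star$ with probability $(1+\zeta_r)/2\ge(1+\zeta_0)/2$. The greedy decision at level $r$ is the majority vote of the $(r+1)$-st bit among these retained outcomes. After the decision, the greedy sort moves the favored branch to the fixed location, so the next level starts from a relabelled block. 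Since this relabelling is a known permutation of the address register, it preserves all filtered probabilities and hence the definitions of $P_{r+1}$ and $\zeta_{r+1}$.

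The key steps are as follows.

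\emph{(i) Enough samples in the block.} With $M$ fresh measurements per level, the number $M_r$ landing in $B_r$ is binomial with mean at least $p_0M$. A multiplicative Chernoff bound gives $\Pr[M_r<p_0M/2]\le e^{-p_0M/8}$.

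\emph{(ii) Majority vote is correct.} Conditional on $M_r\ge p_0M/2$, the retained outcomes are i.i.d.\ Bernoulli variables with bias $\zeta_r/2$ above $1/2$. Hoeffding's inequality bounds the probability that the majority picks the wrong child by $\exp(-M_r\zeta_0^2/2)\le\exp(-p_0M\zeta_0^2/4)$.

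\emph{(iii) Per-level failure.} Each of these two failure events is made at most $\delta_{\rm fg}/(2L)$ by choosing $M=\lceil c\,(p_0\zeta_0^2)^{-1}\log(2L/\delta_{\rm fg})\rceil$ for a suitable constant $c$. Step (ii) directly forces this scale. Because $\zeta_0\le1$, the same choice also covers the requirement $p_0M\gtrsim\log(2L/\delta_{\rm fg})$ from step (i).

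\emph{(iv) Union bound.} Summing the failure probabilities over the $L$ levels gives total failure at most $\delta_{\rm fg}$. On the complement of all failure events, every bit is correct, so the recovered address equals $i^\star$.

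A subtle point is that the per-level failure analysis is conditional on all earlier levels having been correct. This is harmless: the promise concerns the true path $B_r\ni i^\star$, and we only need it on the event that the path has been followed so far, which the union bound handles.

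A further choice concerns counting. The stated bound \eqref{eq:qg2_samples} is per level and does not include an $L$ factor. There are two options.

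\begin{itemize}
\item[(a)] Count the $M$ measurements once per level and regard \eqref{eq:qg2_samples} as the per-level count, giving total $L\cdot M_{\rm fg}$, which is absorbed into $\widetilde O$ elsewhere.
\item[(b)] Reuse a single batch of $M$ samples across all levels, which is legitimate because each level's estimator uses only the samples' own bits. All $L$ tests are then evaluated on the same data, and the union bound still applies since each event's probability is computed marginally.
\end{itemize}

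I would adopt (b), which matches the statement exactly. The greedy sort is then simply a relabelling applied after the fact.

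The main obstacle is justifying (b). The filtered block probabilities $P_r$ must be exactly those of the measured distribution. This holds because the fuzzy-search queries ("which half has more filtered mass") are answered by computational-basis measurement statistics of the successful branch from Lemma~\ref{lem:qg1}, with no amplitude estimation needed.
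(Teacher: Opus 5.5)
Your proposal is correct and follows essentially the same route as the paper's proof. Both use a multiplicative Chernoff bound to guarantee enough samples in the true-path block $B_r$, then Hoeffding's inequality for the child majority vote, giving a per-level failure of order $\exp(-Mp_0\zeta_0^2/4)$, and finally a union bound over the $L$ levels. The paper also adopts your option (b), reusing a single batch of $M$ full-address measurements across all levels so that $L$ enters only through $\log(2L/\delta_{\rm fg})$.
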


A measured candidate address already contains all $L$ bits, so the same collection of measurements can be reused at every binary level rather than preparing a fresh batch for each level. If $G_r$ denotes the reversible swap used to place the favored segment at a fixed branch, the accumulated greedy sorting is $G_{\rm gr}=G_{L-1}\cdots G_0$. Once the terminal candidate is found at address $j^\star$, its original physical index is recovered simply by preparing $|j^\star\rangle$ and undoing the recorded sorting,
\begin{equation}
G_{\rm gr}^{\dagger}|j^\star\rangle
=
|i^\star\rangle .
\label{eq:qg2_retracing}
\end{equation}
QG2b should be regarded as an optional localization procedure rather than an independent asymptotic speedup. Its statistical proof and retracing construction are given in Appendix~\ref{app:qg2}; when the hierarchy condition is unavailable, explicit candidate enumeration described in Sec.~\ref{subsec:enuemration fallback} remains the rigorous fallback.
\begin{figure}[!t]
\centering
\includegraphics[width=1.0\linewidth]{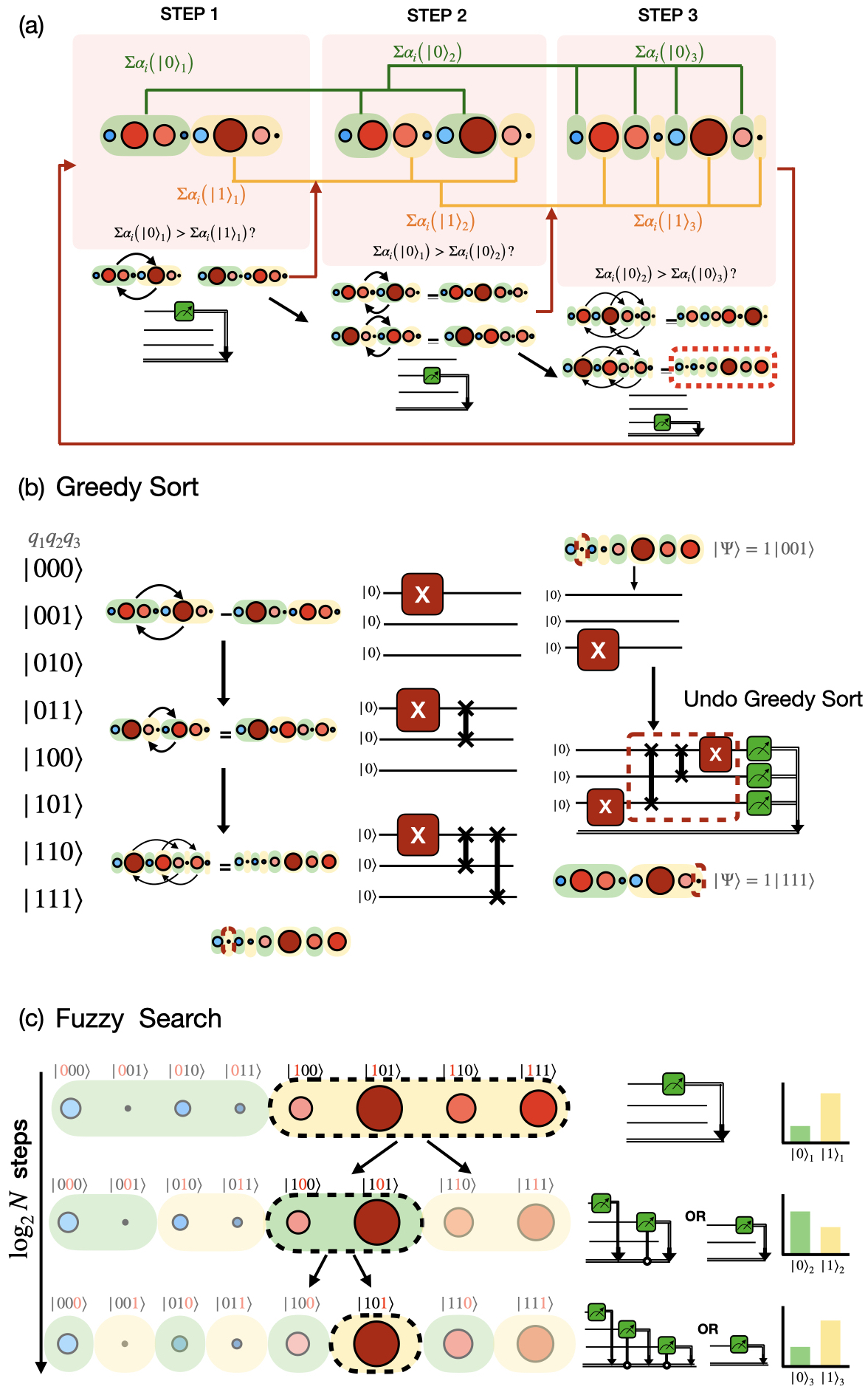}
\caption {\justifying \textbf{QG2b hierarchical candidate localization.} (a) Candidate probability is compared recursively across binary prefix blocks. At each level, the more strongly weighted child is recorded when the evidence is sufficient, progressively identifying the prefix containing the dominant candidate. (b) Controlled swaps implement the recorded reorientations, while the inverse sequence restores the original address ordering after localization. (c) The hierarchical search follows the statistically favored child at successive binary levels and retains both branches when the evidence is inconclusive. Under the path-bias condition of Proposition~\ref{prop:qg2_localization}, this procedure recovers the full $\log_2N$-bit address with only logarithmic confidence overhead.}
\label{fig:qg2_main}
\end{figure}

\begin{figure*}
\centering
\includegraphics[trim={0.2cm 0.1cm 0.1cm 0.1cm},clip=true,scale=0.29]{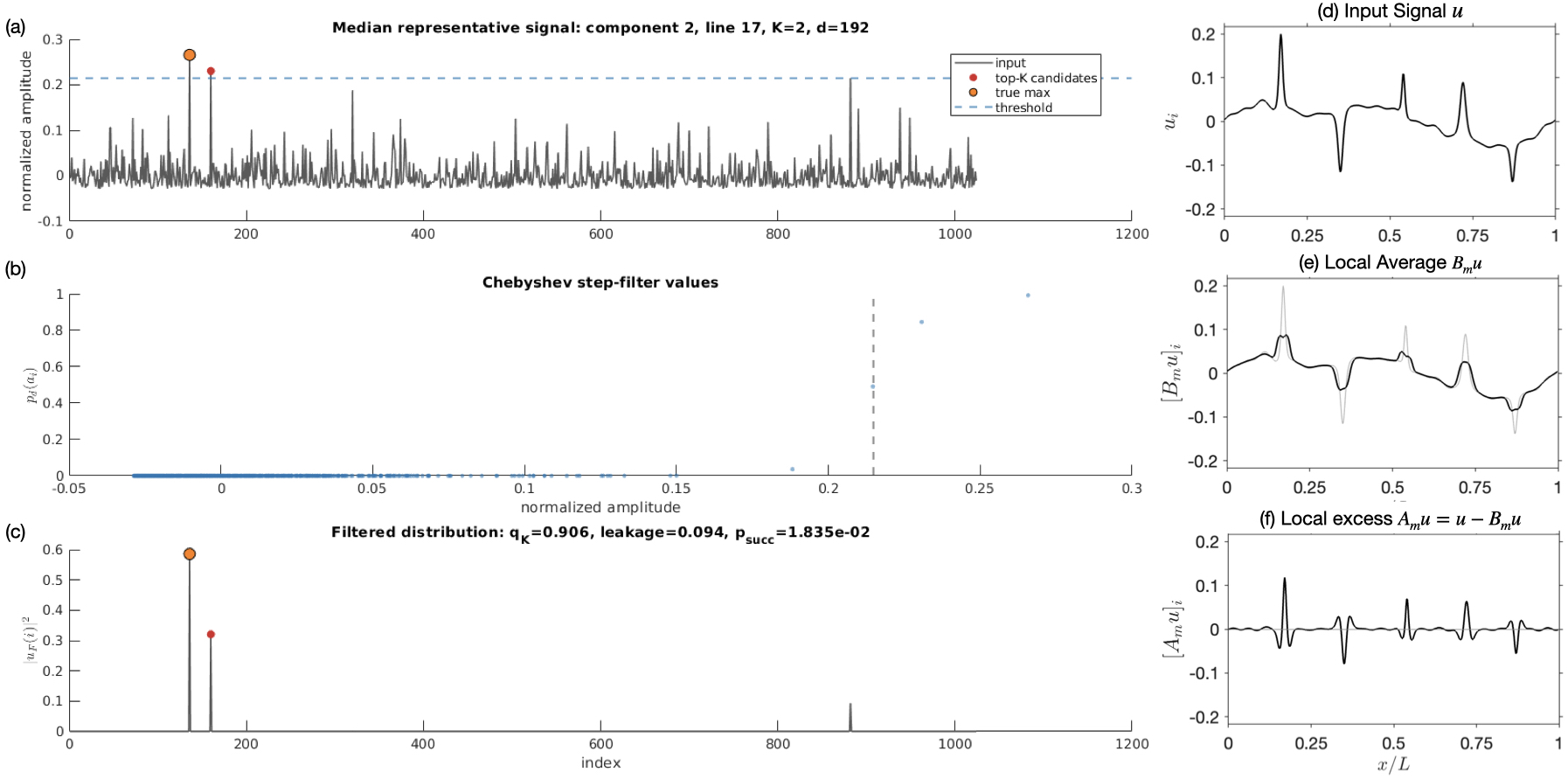}
\caption{ \justifying \textbf{Representative extrema filtering and local-background subtraction.}
(a) Normalized $N=1024$ turbulent velocity-gradient signal. The dashed line marks the calibrated candidate threshold, red markers denote the retained leading candidates, and the orange marker identifies the true maximum. 
(b) Degree-$192$ bounded Chebyshev multiplier used by QG1. The broad background is strongly suppressed, while amplitudes beyond the finite transition region are retained. 
(c) Probability distribution conditioned on successful filtering. The leading candidate set carries $q_K=0.906$ of the output probability and the true maximum becomes the dominant sampling outcome, illustrating the reduction from a full $N$-point search to a short candidate problem. 
(d)--(f) Illustration of the optional QG2a transformation on a representative signal segment. The input $u$ is compared with the average $B_m u$ of its neighboring values, here shown for $m=10$, and their difference $A_m u=u-B_m u$ removes the slowly varying local background while retaining features that are large relative to their surroundings.}
\label{fig:qg1_representative_main}
\end{figure*}

\section{Numerical validation on turbulent velocity gradients}
\label{sec:numerics}

{The quantum advantage considered in Sec.~\ref{sec:main_results} applies when the input amplitude state is generated quantumly rather than classically. The present algorithm should therefore be interpreted as
an independent validation of the search algorithm, not as a claim that the
current algorithm provides a PDE solver that generates the required data in a quantum format. Additional computing will be necessary if the data are provided in classical representation.} {In fact, recent proposals ~\cite{zhao2026exponential} point in the direction of providing a QRAM-free route from streamed classical data to access approximate coherent state-preparation, with the associated sampling cost accounted for separately.} 

{For the case of turbulent velocity gradients considered here, each classical data point $g=(g_0,\ldots,g_{N-1})$ is first prepared through
$U_g{|0\rangle}^{\otimes n}=|\Psi_g\rangle=\sum_{i=0}^{N-1}(g_i/\|g\|_2)|i\rangle$. Because the values $g_i$ are supplied from a classical database, this brute-force state-preparation step can require $O(N)$ operations which, if included in the counting, does not yield an end-to-end quantum advantage.

Even in this setting, the algorithm demonstrates how a demanding physical test bed for detecting extreme events in the presence of strong intermittency and multiscale variability. It can be thought to stress-test the quantum extrema-search and determine whether the structural conditions, candidate compression, and finite-degree filtering mechanism required by Theorem~\ref{thm:main} continue to hold for realistic quantities of practical interest.}

The JHTDB flow is the result of solving forced Navier--Stokes equations by direct numerical simulation in a $1024^3$ box, generating homogeneous isotropic turbulence in a $(2\pi)^3$ domain~\cite{li2008public,yeung2012dissipation}. From one snapshot we extract the plane $y=\pi$, choose 64 equally spaced $z$ locations, and retain all $N=1024$ points along each periodic $x$ line. Each line $g=(g_0,\ldots,g_{N-1})$ is normalized as $u_i=g_i/\|g\|_2$. Summing up all nine components $\partial u_j/\partial x_k$ of the gradient tensor gives $64\times9=576$ line cuts of data. For $N=1024$, the designed target mass is $P_{\rm tar}=1/\log_2N=0.1$ and the transition width $\Gamma_N=(\log_2N)^{-3/2}=0.031623$. A Jackson-damped Chebyshev filter is evaluated over a degree scan, with $d_{\min}$ denoting the smallest tested degree satisfying the retained-mass and concentration conditions. The numerical diagnostics are defined in Appendix~\ref{app:numerics}.

Figure~\ref{fig:qg1_degree_main} {shows the corresponding behavior across the 576-signal ensemble.} The median leading-candidate mass exceeds the $1/\log_2N$ target, while QG1 concentrates $0.89282$ of the filtered probability on a median of only five candidates and increases the probability of observing the true maximum by a median factor of $7.8677$. Table~\ref{tab:fixedN} summarizes the remaining finite-$N$ diagnostics and the matched QAE-based resource comparison. These results test the retained-mass, candidate-compression, and finite-degree mechanisms of Theorem~\ref{thm:main}; their asymptotic scaling follows from the analytical bounds. The resource comparison uses matched amplitude-only value estimation and is not a comparison with GDH supplied with an exact (cost free) comparator.

Figure~\ref{fig:qg1_representative_main} shows the mechanism on one representative signal. The original line contains a broad intermittent background surrounding a few intense events. The bounded polynomial suppresses this background without resolving the ordering of amplitudes inside the transition region. After successful filtering, most of the probability is concentrated on only a few locations and the probability of directly sampling the true maximum is strongly increased.To make the action of QG2a transparent, Fig.~\ref{fig:qg1_representative_main}d--f uses a physically motivated synthetic model rather than a DNS realization. We take $g(x)=b(x)+\sum_{k=1}^{K_e}
A_k\operatorname{sech}^2[(x-x_k)/\delta_k]+\eta(x)$, where $b(x)$ is a slowly varying background, the localized $\operatorname{sech}^2$ terms model both positive and negative velocity-gradient events associated with thin shear layers, and $\eta(x)$ is a weak band-limited fluctuation. After normalization, $u_i=g(x_i)/\|g\|_2$, the local average $B_m u$ follows the slowly varying component, while $A_m u=u-B_m u$ suppresses it and retains events that are large relative to their neighborhood.

\begin{table}[!t]
\centering
\caption{\textbf{Fixed-$N$ numerical summary for the 576 turbulent signals at $N=1024$.}
Values are ensemble medians unless otherwise stated. The resource comparison uses the same QAE-based value-certification model for the present method and GDH.}
\label{tab:fixedN}
\footnotesize
\setlength{\tabcolsep}{4pt}
\begin{tabular}{l c l c}
\toprule
Quantity & Value & Quantity & Value \\
\midrule
$P_K$ & 0.11464
&
$q_K$ & 0.89282
\\

$K_{\rm fuzzy}$ & 5 $(36\ {\rm max})$
&
$P_{\max}^{\rm out}/P_{\max}^{\rm in}$ & 7.8677
\\

$d_{\min}$ & 192
&
$C_d$ & 2.5559
\\

$Q_{\rm present}$ & 2063.6
&
$Q_{\rm GDH+QAE}/Q_{\rm present}$ & 1.5201
\\
\bottomrule
\vspace{-1cm}
\end{tabular}
\end{table}

The corresponding finite-resource comparison is summarized in Table~\ref{tab:fixedN}. Using the same QAE-based value-estimation model for the present candidate list and for a GDH comparator, the median estimated costs are $2063.6$ and $2693.2$, respectively, giving a median signal-wise ratio of $1.5201$. This does not include the repeated-state factor required by Lemma 2. This comparison isolates the benefit of compressing amplitude-only data before numerical value estimation. It should not be interpreted as an advantage over an ideal GDH algorithm supplied with a cost-free comparison oracle. The asymptotic advantage instead follows from Theorem~\ref{thm:main} together with the physical structure condition discussed below.

\begin{figure*}
\centering
\includegraphics[trim={0.2cm 0.2cm 0.2cm 0.2cm},clip=true,scale=0.27]{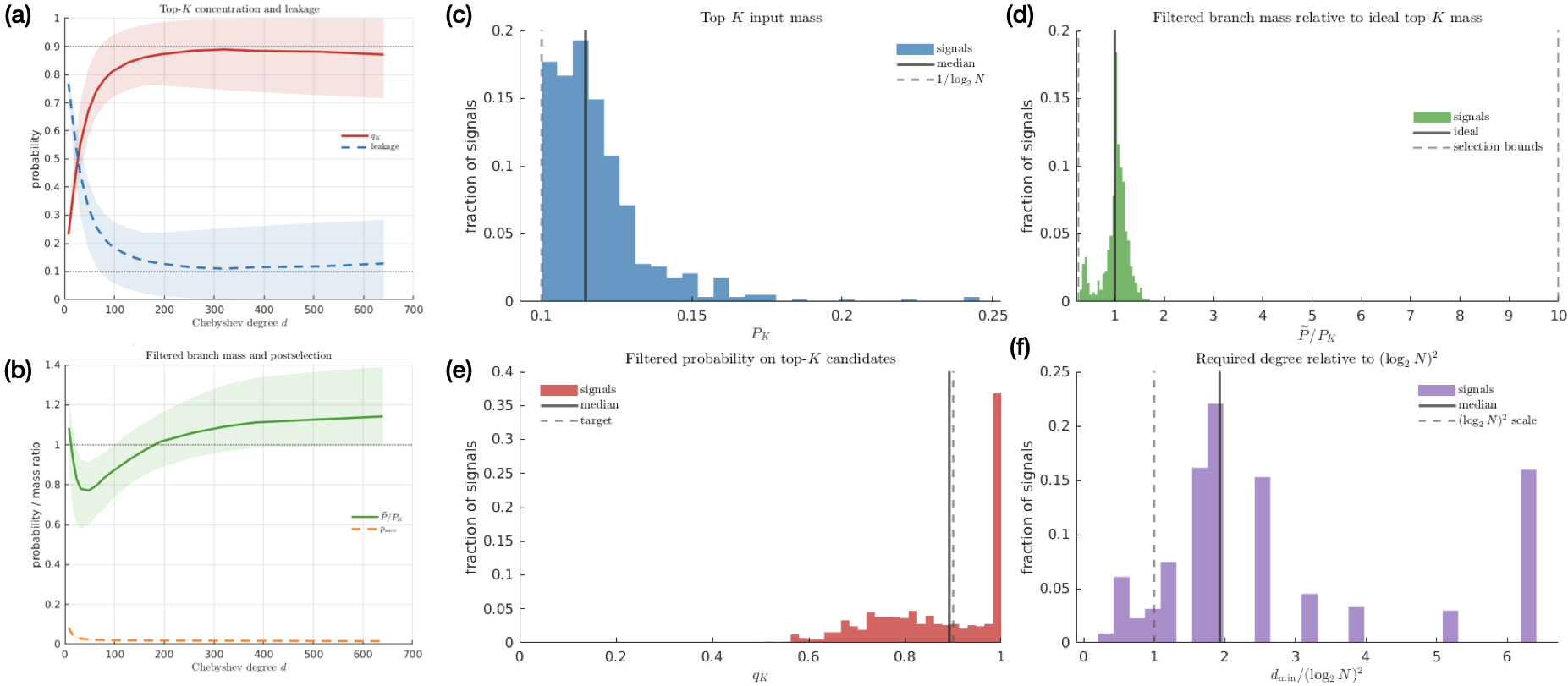}
\caption{\justifying \textbf{Ensemble validation of QG1 over 576 turbulent velocity-gradient signals at $N=1024$.}
Solid curves denote ensemble medians and shaded bands denote interquartile ranges.
(a) Probability $q_K$ retained on the leading candidate set increases with polynomial degree while the leakage $1-q_K$ decreases, approaching a stable high-concentration regime.
(b) The filtered branch mass $P_e$ remains comparable to the corresponding input candidate mass $P_K$ as the filter converges; the associated raw postselection probability is $p_{\rm succ}=P_e/C_d^2$.
(c) Distribution of the input candidate mass $P_K$. Most signals satisfy the target $1/\log_2N=0.1$.
(d) Distribution of $P_e/P_K$, concentrated near unity, showing that the finite-degree transformation preserves the intended extreme-event mass.
(e) Distribution of the filtered candidate concentration, with median $q_K=0.89282$.
(f) Minimum accepted degree normalized by $(\log_2N)^2$, with median $d_{\min}/(\log_2N)^2=1.92$.
Together these results test the retained-mass, candidate-compression, and finite-degree conditions that underlie the structured-search guarantee.}
\label{fig:qg1_degree_main}
\end{figure*}

\section{Discussion of results}
\label{sec:discussion}

This work shows how a global extremum can be extracted efficiently from amplitude-only quantum data as long as the extreme-event tail carries sufficient normalized probability. The underlying Chebyshev-walk primitive (see below) applies the nonlinear transformation $u_i\mapsto p_d(u_i)u_i$ without the need first for estimating the individual amplitudes. Its central algorithmic role is to isolate the coarse candidates from the exact ordering: the global stage resolves only the designed transition width and compresses the problem to $O(\log N)$ candidates, while the fine value and sign information are obtained only later. Global completeness then guarantees that this compression has not discarded the best candidate. QG2 supplies optional contrast and localization refinements, while the QG1 route and explicit enumeration remain rigorous fallbacks. The same framework covers the four extrema objectives described in Appendix~\ref{sec:Candidates,enum,cert,sign,cases}.

The Chebyshev-walk transformation is given by
\begin{equation}
p_d(D_\Psi)|\Psi\rangle
=
\sum_i p_d(u_i)u_i|i\rangle ,
\label{eq:discussion_transform}
\end{equation}
which applies a bounded nonlinear function directly to unknown quantum amplitudes using only the state-preparation circuit and its coherent inverse. This provides a general primitive for nonlinear processing of amplitude-encoded quantum data, although extrema filtering is the application considered here.

The second step in QG1 
suppresses the broad background while deliberately leaving a finite transition region unresolved. Consequently, the transition width controls the polynomial degree, while the retained mass controls candidate-state preparation; the final top-two spacing enters only after the signal has been reduced to $O(\log N)$ addresses. This is stronger than preparing or sampling a thresholded tail. The surviving addresses must still be recovered, their values and signs determined when required, and the possibility of a larger unobserved event must be excluded. These steps are included in the present algorithm. QG2, on the other hand, provides optional local-background subtraction and hierarchical localization, while the original QG1 route remains a rigorous fallback. This construction covers the largest magnitude, largest positive value, and most negative value. The smallest-magnitude problem requires additional preparation of the magnitude state after subtracting the background, as described in Appendix~\ref{app:four_extrema}, Corollary~\ref{cor:min_abs}.

\subsection{Structure as a quantum computational resource}
\label{sec:structure_advantage}
{For a physical signal $g_i$ encoded as
$u_i=g_i/(\sqrt{N}g_{\rm rms})$, define
$S_{\rm in}=g_{\rm pk}/g_{\rm rms}$, where
$g_{\rm pk}=\max_i|g_i|$. Let
$K_R:=|R_\Gamma|$ denote the number of safely retained
amplitudes and order the signal magnitudes as
$|g_{(1)}|\geq|g_{(2)}|\geq\cdots$. Define
$\rho_R^2:=K_R^{-1}\sum_{j=1}^{K_R}
(|g_{(j)}|/g_{\rm pk})^2$. Since $R_\Gamma$ is a
magnitude-threshold set, its probability mass is
\begin{equation}
P_R
=
\frac{K_R\rho_R^2S_{\rm in}^2}{N}.
\end{equation}
For the representative condition $a=1$, used in the numerical
study, $P_R\geq1/\log_2N$ is therefore guaranteed by
\begin{equation}
S_{\rm in}
\geq
S_\star
:=
\frac{\sqrt{N}}
{\rho_R\sqrt{K_R\log_2N}}.
\label{eq:structure_condition}
\end{equation}
The derivation is given in Appendix~\ref{app:structure_crossover}.}

\eqref{eq:structure_condition} identifies the physical resource behind the acceleration of the steps involved. The extreme events must be few in number, but together they must still carry enough normalized probability to be located efficiently; thus, rarity by itself is not enough.  In the amplitude-only model, this concentration supplies part of the information that a marking oracle would otherwise provide externally. Physical structure therefore enters the computation directly: it determines whether the important part of the signal can be isolated without examining the entire state.

The turbulence calculations show that such structure occurs naturally without having to consider planted examples. For the 576 intermittent gradient signals extracted from a fully resolved $1024^3$ Navier--Stokes simulation, the median leading-event mass exceeds the inverse-logarithmic target, QG1 reduces a $1024$-point signal to five candidates at the median, and successful filtering increases the probability of observing the true maximum by nearly a factor of eight. These observations test the retained-mass, candidate-count, and concentration mechanisms underlying Theorem~\ref{thm:main}, while the asymptotic scaling follows from the analytical bounds. 

It is worth noting that the task considered here is thus not merely generic anomaly detection, but the extraction of events central to one of the fundamental questions of turbulence and fluid dynamics.

The competition between the increasing problem size and increasing intermittency can be expressed by
\begin{equation}
N_{\rm db}({\rm Re})=C_N{\rm Re}^{\alpha},
\qquad
S_{\rm in}({\rm Re})=C_S{\rm Re}^{\beta}.
\label{eq:Re_scaling}
\end{equation}
Substitution into \eqref{eq:structure_condition} gives
\begin{equation}
C_S\mathrm{Re}^{\beta}
\geq
\frac{
\sqrt{C_N}\,\mathrm{Re}^{\alpha/2}
}{
\rho_R\sqrt{
K_R\log_2(C_N\mathrm{Re}^{\alpha})
}
}.\label{eq:Re_structure_condition}
\end{equation}

Increasing ${\rm Re}$ therefore enlarges the resolved search space while also strengthening the intermittent events that can make candidate isolation efficient. The resulting certified-search scalings and nominal crossover estimates are summarized in Table~\ref{tab:complexity_main}, with their derivation and limitations given in Appendix~\ref{app:structure_crossover}.
\begin{table*}
\centering
\caption{\textbf{Single-candidate-state preparation scaling and nominal preparation-dominated crossovers for representative flow-resolved databases.}
Here $L=\log_2N_{\rm db}$ and $\Delta_{\rm val}$ denotes the normalized probability or calibrated-value gap used for final value estimation. The scan and GDH columns use the same value-precision model; with exact digital values, their factors $1/\Delta_{\rm val}$ are removed. The present-method column shows the cost of one amplitude-amplified QG1 candidate-state preparation together with the final $O(L/\Delta_{\rm val})$ precision term. It does not include the repeated candidate-state factor $M_{\rm glob}$ required for complete candidate recovery. The two crossover columns therefore compare the preparation-dominated scaling only and serve as asymptotic guideposts rather than total certified-search crossover claims. Constants $C_d$, calibration and confidence factors, and the accuracy dependence are suppressed.}
\label{tab:complexity_main}
\resizebox{0.995\textwidth}{!}{%
\begin{tabular}{lllllll}
Case & $N_{\rm db}({\rm Re})$ & Raw scan & GDH + comparator &
Present method + certification & Beats raw scan & Beats ideal Grover scale \\
\midrule
1D Burgers &
${\rm Re}$ &
$O({\rm Re}/\Delta_{\rm val})$ &
$O({\rm Re}^{1/2}/\Delta_{\rm val})$ &
$\widetilde O(C_dL^2+L/\Delta_{\rm val})$ &
${\rm Re}\gtrsim16$ &
${\rm Re}\gtrsim6.55\times10^4$ \\

2D Navier--Stokes &
${\rm Re}$ &
$O({\rm Re}/\Delta_{\rm val})$ &
$O({\rm Re}^{1/2}/\Delta_{\rm val})$ &
$\widetilde O(C_dL^2+L/\Delta_{\rm val})$ &
${\rm Re}\gtrsim16$ &
${\rm Re}\gtrsim6.55\times10^4$ \\

3D HIT line &
${\rm Re}^{3/4}$ &
$O({\rm Re}^{3/4}/\Delta_{\rm val})$ &
$O({\rm Re}^{3/8}/\Delta_{\rm val})$ &
$\widetilde O[C_d(\tfrac34\log_2{\rm Re})^2+
(\tfrac34\log_2{\rm Re})/\Delta_{\rm val}]$ &
${\rm Re}\gtrsim4.03\times10^1$ &
${\rm Re}\gtrsim2.64\times10^6$ \\

3D HIT slice &
${\rm Re}^{3/2}$ &
$O({\rm Re}^{3/2}/\Delta_{\rm val})$ &
$O({\rm Re}^{3/4}/\Delta_{\rm val})$ &
$\widetilde O[C_d(\tfrac32\log_2{\rm Re})^2+
(\tfrac32\log_2{\rm Re})/\Delta_{\rm val}]$ &
${\rm Re}\gtrsim6.35$ &
${\rm Re}\gtrsim1.63\times10^3$ \\

3D HIT volume &
${\rm Re}^{9/4}$ &
$O({\rm Re}^{9/4}/\Delta_{\rm val})$ &
$O({\rm Re}^{9/8}/\Delta_{\rm val})$ &
$\widetilde O[C_d(\tfrac94\log_2{\rm Re})^2+
(\tfrac94\log_2{\rm Re})/\Delta_{\rm val}]$ &
${\rm Re}\gtrsim3.43$ &
${\rm Re}\gtrsim1.38\times10^2$ \\
\bottomrule
\end{tabular}%
}
\end{table*}

{The present homotopy calculations reach nonlinear flow regimes of
order $O(10^2)$, which is also the order of the nominal
preparation-dominated extrema-search crossover for an equivalent three-dimensional resolved volume. The higher-Reynolds-number JHTDB is explored separately as a demanding stress test of the extrema-search stage. The detailed compatibility analysis is given in Appendix~\ref{sec:end_to_end}}.

\subsection{End-to-end and verifiable quantum advantage}
\label{sec:verifiable_advantage}

An asymptotically efficient subroutine alone does not establish a useful scientific quantum advantage. The physical state must be generated efficiently, the desired observable must be extracted without reconstructing the entire solution, and the returned result should be verifiable efficiently by a classical witness. A central advance here is that these requirements can be connected within one workflow for a nonlinear physical problem.

For fluid dynamics, the state-preparation unitary can be supplied by the quantum homotopy algorithm of ref.~\cite{bharadwaj2025homotopy}. This prepares a normalized nonlinear PDE solution with near-optimal dependence on the resolved problem size under its stated convergence, sparsity, conditioning, accuracy, and postselection assumptions. Derived quantities such as velocity gradients can subsequently be generated coherently using the compact LCU operations of ref.~\cite{bharadwaj2024compact}. The complete pipeline is therefore
\begin{equation}
\text{nonlinear PDE}
\longrightarrow
|\Psi(T)\rangle
\longrightarrow
|\Psi_\nabla(T)\rangle
\longrightarrow
i^\star .
\label{eq:e2e_pipeline}
\end{equation}
The quantum state remains coherent until the required extremal information has been extracted, avoiding an intermediate reconstruction of all $N$ field values. Under the combined assumptions of these algorithms, the full workflow retains logarithmic or polylogarithmic dependence on the resolved problem size, with the physical sparsity, convergence, conditioning, postselection, and final-precision costs remaining explicit. The complete accounting is given in Appendix~\ref{sec:end_to_end}.

Various specifics of the output can also be checked efficiently. Global completeness guarantees that the true extremum is recovered with controlled failure probability rather than merely being the largest sampled candidate. Level I of Proposition~\ref{prop:verification} then provides an $O(\log N)$ classical check using the recovered candidate probabilities and missing filtered mass. At Level II, the complete computation is reduced to a bounded-error decision problem, and can be placed within the classical verification protocol of ref.~\cite{mahadev2018classical}. When the circuit size and security parameter are polylogarithmic in $N$, the verifier cost is also similar. This use of verification differs from the work that establishes verifiable quantum advantage without relying on physical structure~\cite{yamakawa2024verifiable}. Here the computational advantage comes from a concrete scientific problem and its measurable structure; the verification layer ensures that the reported quantum result can also be checked independently.

During the final compilation of this manuscript, we became aware of the recent work of Guo et al.~\cite{Guo2026RareEvents}. 
At a high level, the two studies address different questions. Ref.~\cite{Guo2026RareEvents} asks how a quantum state can be transformed so that rare events below a prescribed probability threshold can be sampled more efficiently. The present work asks the broader question of whether a physically important observable can be extracted from a large quantum-encoded scientific problem with a rigorous, end-to-end, and verifiable quantum advantage. This requires considerably more than isolating a rare-event tail. The present work addresses some of the following
questions within one framework. Can the relevant physical state itself be generated efficiently rather than assumed as freely loaded data? Can a specific global extremum be found rather than merely sampling an event from a selected subset? Can the search be performed without value, comparison, sign, marking, or threshold oracles? Can one prove that no larger event was missed during compression? Can the final physical result be checked efficiently by a classical verifier? Can one identify measurable properties of the underlying physical system that determine when the quantum advantage exists, and demonstrate those conditions on a realistic nonlinear problem? Further, it connects nonlinear PDE state generation, structure-dependent candidate isolation, certified global extrema recovery, physical validation on turbulent Navier--Stokes data, and classical verification of the resulting computation. 
Therefore, the computational objective, guarantees, physical scope, and end-to-end quantum-advantage question addressed here are fundamentally different.

\section{Conclusions}
\label{sec:conclusions}

Taken together, these results establish a complete pathway from nonlinear quantum-state generation to a globally certified and classically verifiable physical observable. The principal advances are a bounded nonlinear transformation of unknown amplitudes, certified extrema search using amplitude-only access, a measurable structure criterion that determines when the search is accelerated, and validation on strongly intermittent Navier--Stokes data. State preparation, extrema search, physical observable extraction, and verification are thus treated as components of a single scientific computation rather than as independent quantum subroutines. (For the present example, the classical turbulence data had to be converted to a quantum state, but this overhead is not intrinsic to the present algorithm.)

The framework is not limited to turbulence. It applies whenever the events of interest occupy a sufficiently concentrated portion of an efficiently prepared quantum state, including signals arising from wave phenomena, imaging, geophysics, and other coherently generated scientific data. A particularly natural extension is direct integration with quantum sensing. If a quantum sensor can coherently transduce a physical field into quantum amplitudes, the resulting state could be supplied directly to the extrema-search circuit, allowing rare and rapidly evolving events to be identified before the full field is measured, stored, and reconstructed classically~\cite{allen2025quantum,khan2025quantum}. Such a sensor-to-computer architecture would couple quantum measurement to quantum processing at precisely the stage where the physically significant information is generated.

More broadly, our results suggest that useful quantum advantage in scientific computing need not be determined by problem size alone. It can emerge when the underlying physical dynamics concentrate the information of interest into a small but sufficiently probable portion of a quantum state, and when an algorithm can exploit that organization without destroying it through full readout. In this sense, the deeper computational resource is the conversion of physical structure into algorithmic advantage.

\acknow{
S.S.B thanks Scott Aaronson for insightful discussions and comments on this work. The authors would also wish to thank H.A. Vinutha, J\"org Schumacher and Akash Rodhiya for helpful discussions. The work was supported by the NSF award 24-600, Number 2534977, under the title, TRAILBLAZER Quantum Computing and Machine Learning for Fluid Dynamics Research. The data used in this paper are obtained from the Johns Hopkins University Turbulence Data Base. The authors also thank New York University for its support of this work.}

\dataavail{All the data is included in the manuscript. The code will be made available following any reasonable request made to the authors.}

\showacknow{} 

\appendix

\setcounter{lemma}{0}
\setcounter{corollary}{0}
\setcounter{remark}{0}
\setcounter{proposition}{0}
\setcounter{theorem}{0}
\renewcommand*{\theHlemma}{app.\arabic{lemma}}
\renewcommand*{\theHcorollary}{app.\arabic{corollary}}
\renewcommand*{\theHremark}{app.\arabic{remark}}
\renewcommand*{\theHproposition}{app.\arabic{proposition}}
\renewcommand*{\theHtheorem}{app.\arabic{theorem}}

\renewcommand{\ssb}[1]{{\color{black}#1}}
\renewcommand{\ket}[1]{\lvert #1\rangle}
\renewcommand{\bra}[1]{\langle #1\rvert}
\renewcommand{\braket}[2]{\langle #1\vert #2\rangle}
\providecommand{\norm}[1]{\left\lVert #1\right\rVert}
\providecommand{\abs}[1]{\left\lvert #1\right\rvert}
\renewcommand{\polylog}{\operatorname{polylog}}
\renewcommand{\sgn}{\operatorname{sgn}}
\providecommand{\eO}{\widetilde{O}}
\providecommand{\R}{\mathbb{R}}

\section{Grover--D\"urr--H\o yer extrema search}
\label{sec:gdh}

We revisit the well-known $O\left(\sqrt{N}\right)$ benchmark \ssb{result} for quantum extrema \ssb{searching} \cite{grover1996fast,durr1996quantum}. \ssb{It may be useful to note that,} this result assumes that the algorithm can coherently mark indices whose values exceed a \ssb{set} threshold  \ssb{whose value is sequentially updated}.  That assumption is stronger than the amplitude-only access model used in the present work, \ssb{and this} must be kept explicit when the two methods are compared.

Let $\{v_i\}_{i=0}^{N-1}$ denote the scalar values being searched. A classical certified search requires $N-1=O(N)$ comparisons, since every entry must be checked against the current best value. Grover search instead assumes coherent access to a marked set $M\subseteq[N]$ of size $m=|M|$.  Let $\ket{G}$ and $\ket{B}$ denote the normalized, \ssb{uniform states encoding the locations} of the marked and unmarked indices, respectively.  The \ssb{normalized} initial state can then be written as
\begin{equation}
\begin{aligned}
\ket{s}
&=
\sqrt{\frac{m}{N}}\ket{G}
+
\sqrt{\frac{N-m}{N}}\ket{B}\\
&=
\sin\theta\ket{G}+\cos\theta\ket{B},
\qquad
\sin^2\theta=\frac{m}{N}.
\end{aligned}
\label{eq:A1}
\end{equation}
\ssb{The phase oracle $O_M$ marks the set $M$ by a sign flip operation, $O_M|i\rangle=(-1)^{[i\in M]}|i\rangle$, equivalently reflecting the marked component. This is followed by} $D_s=2\ket{s}\!\bra{s}-I$\ssb{, a second} reflection about the initial state \ssb{$\ket{s}$.}  Their product is a rotation in the $\{\ket{G},\ket{B}\}$ plane \ssb{, rotating the state closer to $\ket{G}$}. Therefore $r$ such operations would yield,
\begin{equation}
(D_sO_M)^r\ket{s}
=
\sin[(2r+1)\theta]\ket{G}
+
\cos[(2r+1)\theta]\ket{B}.
\label{eq:A2}
\end{equation}
Choosing $(2r+1)\theta\simeq\pi/2$ therefore gives
\begin{equation}
r=O\!\left(\sqrt{\frac{N}{m}}\right),
\label{eq:A3}
\end{equation}
and hence the familiar $O(\sqrt{N})$ cost \cite{grover1996fast} for a single marked item.

D\"urr--H\o yer maximum finding \cite{durr1996quantum} converts this marked-item search into an extremum search by using the current candidate $j$ as a threshold.  It defines
\begin{equation}
M_j=\{i:v_i>v_j\},
\qquad
O_j\ket{i}=(-1)^{[v_i>v_j]}\ket{i}.
\label{eq:A4}
\end{equation}
If the current candidate has rank $r$, then $|M_j|=r-1$ \ssb{(potential candidates for extrema),} and Grover search finds a better candidate using $O(\sqrt{N/(r-1)})$ queries.  A successful update reduces the rank by a constant factor with constant probability.  \ssb{Summing the costs over such iterative dyadic searches gives}
\begin{equation}
\sum_{k=0}^{\lceil\log_2N\rceil}O\!\left(2^{k/2}\right)
=O(\sqrt{N}),
\label{eq:A5}
\end{equation}
which is the expected Grover--D\"urr--H\o yer query complexity \cite{grover1996fast,durr1996quantum}.  Intuitively, the successive rank scales cost approximately as $1,\sqrt{2},2,\ldots,\sqrt{N}$. Now because this is a geometric sequence, \ssb{the sum scales as $O(\sqrt{N})$}. The important qualification is that \eqref{eq:A5} assumes that the value-comparison oracle $O_j$ is already available coherently.  In the present work, the input is supplied only through
\begin{equation}
U_S\ket{0^n}
=
\sum_{i=0}^{N-1}u_i\ket{i},
\label{eq:A6}
\end{equation}
which does not provide either $\ket{i}\ket{0}\mapsto\ket{i}\ket{u_i}$ or the comparison phase $(-1)^{[u_i>u_j]}$.  \ssb{State-preparation access exposes the complete vector coherently, but
not as a numerical lookup table; we do not assume QRAM access to its
entries.}

If the comparator is synthesized from amplitude-only access using
singleton-probability estimation with precision
$\epsilon_{\rm cmp}$, the corresponding benchmark overhead becomes
$O(\sqrt{N}/\epsilon_{\rm cmp})$.  \ssb{If $p_{(1)}$ and $p_{(2)}$ are the top two, largest absolute amplitudes and writing,}
\[
\delta_p:=p_{(1)}-p_{(2)},
\qquad
p_i:=|u_i|^2 ,
\]
\ssb{and if}
$\epsilon_{\rm cmp}=O(\delta_p)$\ssb{, then certifying the global extrema, requires at least $O(1/\delta_p)$} queries and therefore gives
 $O(\sqrt{N}/\delta_p)$, up to confidence factors.

Thus Grover--D\"urr--H\o yer and the present amplitude-only algorithm
solve the same high-level search task under different primitive access
assumptions: the former begins with a coherent comparison oracle,
whereas the latter must first create the target amplitude state and then produce a candidate-concentrated representation, on which extrema certification
becomes efficient.

\section{Access model}
\label{sec:access}

The input access model is intentionally kept weak \ssb{(not requiring the typical black-box oracles)}, to enable practical and near-term utility of the proposed algorithm. The proposed algorithm includes within itself, a well-defined quantum subroutine that can prepare and coherently reuse a quantum state whose amplitudes carry the signal. However we do not assume random access to the individual amplitudes as classical numbers. This distinction is essential, because a state preparation exposes the whole vector coherently, whereas a value oracle would make each entry directly comparable and would change the search model. In essence, the proposed algorithm comes with a quantum state preparation algorithm (see Secs.~\ref{subsec:State_Preparation_Homotopy} and~\ref{subsec:State_Preparation_Sensor}) that prepares the target quantum signal with near-optimal complexity guarantees. 

\noindent We use a real, normalized signal state
\begin{equation}
\ket{\Psi}
=
\sum_{i=0}^{N-1}u_i\ket{i},
\qquad
\sum_{i=0}^{N-1}u_i^2=1,
\qquad
N=2^n.
\label{eq:B1}
\end{equation}

\ssb{The algorithm is formulated for an ideal preparation unitary
$U_S$, its inverse, and their controlled versions, satisfying
\begin{equation}
U_S|0^n\rangle=|\Psi\rangle .
\end{equation}
For a finite-accuracy implementation, let $\widetilde U_S$
satisfy
\begin{equation}
\|\widetilde U_S-U_S\|\leq\epsilon_S
\end{equation}
in operator norm, with the same accuracy requirement imposed on
the corresponding controlled implementation. Since
$\|\widetilde U_S^\dagger-U_S^\dagger\|
=\|\widetilde U_S-U_S\|$, the inverse obeys the same bound.
For a computation containing $Q$ coherent uses of these
preparation primitives, the standard hybrid argument gives an
accumulated state error of at most $O(Q\epsilon_S)$; the
implementation accuracy is chosen so that this contribution lies
within the prescribed total error budget. The gate cost of one
state-preparation invocation is denoted by $C_S$.}

\subsection{State preparation}
\label{subsec:State Preparation_General}
We consider two physically distinct sources of $U_S$. (1) A quantum PDE solver that prepares a flow solution, and (2) A coherent quantum sensor that transduces a physical field directly into amplitudes. The former route is the primary choice for this present work, because it gives the entire algorithmic pipeline an end-to-end quantum advantage with near-optimal complexity guarantees. The latter option is proposed here, only from a forward looking perspective for integrating quantum sensors with quantum computing devices. The point of the examples is to identify concrete state preparation oracles, with complexity and optimality guarantees in which the target signal can be efficiently prepared natively as a quantum state, with only $O(\polylog{N})$ queries, for an $O(\log N)$ size classical inputs that prepares a state of size $O(N)$.

\subsubsection{Homotopy linearization and nonlinear PDE state preparation}
\label{subsec:State_Preparation_Homotopy}

To prepare quantum states proportional to the solution of a nonlinear PDE, we use the quantum homotopy solver of ref.~\cite{bharadwaj2025homotopy}. It is the state-of-the-art nonlinear PDE solver, in terms of the combined complexity, admissible nonlinearity, and near-term implementation. The method is end-to-end and near-optimal in the restricted sense that its dominant dependence on the spatial problem size, number of time steps, and accuracy is logarithmic or polylogarithmic, while retaining explicit sparsity, conditioning, and postselection factors~\cite{bharadwaj2025homotopy,bharadwaj2024compact}.

For example, after spatial discretization, a typical quadratic nonlinear PDE can be written schematically as
\begin{equation}
\frac{d u}{dt}
=
D_1 u^{\otimes 2}
+
D_2 u
+
D_0(t),
\label{eq:B5}
\end{equation}
where $D_1$ represents the quadratic nonlinear coupling, $D_2$ the linear dynamics, and $D_0$ source or boundary contributions. For example, in Burgers flow $u_t+u u_x=\nu u_{xx}$, the nonlinear term is generated by the first-derivative operator and the linear dissipative term by the second derivative. It is also useful to note that, as shown in ref.~\cite{bharadwaj2025homotopy}, the algorithm's applicability extends to a wider range of the nonlinearity forms, beyond the example PDE noted above.

The homotopy method does not attack \eqref{eq:B5} directly. Instead, it starts from a known base solution $u_0$ of a chosen linear problem and continuously deforms it into the nonlinear solution. Writing the nonlinear PDE as $\mathcal N[u]=0$, the deformation is defined by~\cite{bharadwaj2025homotopy}
\begin{equation}
(1-q)\mathcal L[\phi(q)-u_0]
=
q\bar h\,\mathcal N[\phi(q)],
\qquad
q\in[0,1],
\label{eq:B6}
\end{equation}
so that $\phi(0)=u_0$ and $\phi(1)=u$. Expanding
\begin{equation}
\phi(q)
=
u_0+\sum_{j=1}^{M}q^j u_j
\label{eq:B7}
\end{equation}
turns the original nonlinear problem into a sequence of linear deformation equations. The terms $u_1,u_2,\ldots,u_M$ are successive nonlinear corrections to the chosen base solution. Their sum at $q=1$ gives the truncated nonlinear solution $u^{(M)}=u_0+\sum_{j=1}^{M}u_j$.

The important quantum step is that these corrections are not solved one after another. The deformation variables and only the additional closure variables generated by their nonlinear products are assembled into a single sparse lifted system,
\begin{equation}
\frac{d v}{dt}=A^{(M)}v ,
\label{eq:B8}
\end{equation}
which is solved coherently using the compact time-marching quantum algorithm of ref.~\cite{bharadwaj2024compact}. The lift is therefore adapted to the chosen base physics and homotopy recursion rather than containing the full hierarchy of all tensor powers. Once the homotopy order $M$ is fixed, the retained lifted system is closed by construction~\cite{bharadwaj2025homotopy}. The output is directly the normalized nonlinear solution state
\begin{equation}
|u^{(M)}(T)\rangle
=
\frac{u^{(M)}(T)}
{\|u^{(M)}(T)\|_2},
\label{eq:B9}
\end{equation}
which is precisely the amplitude-state input required by the extrema-search algorithm.

For the compact lifted solver, the resulting complexity is~\cite{bharadwaj2025homotopy,bharadwaj2024compact}
\begin{equation}
\widetilde O\!\left[
s_H\epsilon_{\rm LCU}
\log(N_gM^2\tau)\,
\operatorname{polylog}(\epsilon_Q^{-1},\kappa_H)
\right],
\label{eq:B10}
\end{equation}
where $N_g$ is the spatial grid size, $\tau$ is the number of time steps, $s_H$ is the lifted sparsity, $\kappa_H$ is the conditioning parameter, and $\epsilon_Q$ is the quantum-solver error. The underlying compact LCU construction is designed for shallow, near-term matrix operations and avoids treating nonlinear PDE state preparation as a free loading oracle~\cite{bharadwaj2024compact}.

The solver also admits a physically interpretable nonlinear regime. For Burgers flow, the homotopy nonlinearity parameter satisfies $R_H=O(\mathrm{Re})$ and its certified finite-time window scales as $T_H=O(\mathrm{Re}^{-1})$~\cite{bharadwaj2025homotopy}. For the physically rescaled comparison developed in Ref.~\cite{bharadwaj2025homotopy}, the corresponding Carleman convergence window contracts more rapidly, as $T_C=O({\rm Re}^{-5/2})$, whereas the homotopy window scales as $T_H=O({\rm Re}^{-1})$. Numerical demonstrations reach $R_H=O(10^2)$ while retaining convergent homotopy solutions. This is compatible with the regime relevant to the present extrema-search algorithm, whose structure-enabled advantage also appears at physically resolved problem sizes corresponding to nonlinear flow regimes of the same order. A more detailed comparison of these operating windows is given in Sec.~\ref{subsec:Re_overlap}.

\subsubsection{Coherent sensor preparation}
\label{subsec:State_Preparation_Sensor}
A complementary route is to generate the state directly through a coherent interaction with the physical field.  In a typical quantum sensor, the field imprints a phase on a probe during an interrogation time $t_s$, which is subsequently converted into a population.  If $i$ labels a coherently addressable location, mode, frequency, or other candidate, this transduction may be represented as
\begin{equation}
V_g\ket{i}\ket{0}_q
=
\ket{i}\left[
\cos(\phi_i)\ket{0}_q+\sin(\phi_i)\ket{1}_q
\right],
\qquad
\phi_i=\gamma t_sg_i,
\label{eq:B11}
\end{equation}
where $g_i$ is the corresponding field observable, $\gamma$ is the sensor coupling strength, and $q$ is a sensor ancilla.  Applying $V_g$ to a uniform superposition and coherently selecting the $\ket{1}_q$ branch prepares
\begin{equation}
\ket{\Psi_g}
=
\frac{1}{\mathcal N_g}
\sum_{i=0}^{N-1}\sin(\gamma t_sg_i)\ket{i},
\qquad
\mathcal N_g^2
=
\sum_i\sin^2(\gamma t_sg_i).
\label{eq:B12}
\end{equation}
In the weak-response regime, $\sin(\gamma t_sg_i)\simeq\gamma t_sg_i$, and the amplitudes are therefore proportional to the sensed field.  More generally, the sensor response need only preserve or enhance its extreme tail.  The field can then enter the extrema-search circuit without first being measured, stored, and reloaded as $N$ classical values \cite{allen2025quantum,khan2025quantum}.

For this state to provide the $U_S$ access required by the algorithm, the sensor ancilla and all work registers must be coherently uncomputed.  The resulting preparation, its inverse, and their controlled versions must be repeatable during each circuit attempt.  This requires coherent global or multimode access to the candidate labels.  If the sensor must instead interrogate $N$ independent locations sequentially, the preparation cost is $\Theta(N)$ and no end-to-end asymptotic advantage follows.

For a time-dependent flow, the field need not remain fixed throughout an arbitrarily long experiment; the relevant requirement is that a sensing--processing cycle be short compared with the event evolution.  With Kolmogorov time $\tau_\eta=(\nu/\langle\varepsilon\rangle)^{1/2}$, representative turbulent and solid-state sensing times can occupy overlapping microsecond-to-millisecond ranges \cite{mordant2004experimental,do2024turbulence,arunkumar2023quantum}.  A useful operating regime is therefore
\begin{equation}
T_{\rm probe}\lesssim\tau_\eta,
\qquad
M\sim\frac{\tau_{\rm evt}}{T_{\rm probe}}>1,
\label{eq:B13}
\end{equation}
so that several short probes can be performed while an extreme event evolves over $\tau_{\rm evt}=O(\tau_\eta)$ or longer.

\subsection{Near-optimal iterative matrix--vector operations}
\label{subsec:LCU_Matrix_Ops}
For the PDE-solver route it is useful to record one concrete mechanism based on Linear Combination of Unitaries (LCU) for iterative, nonunitary matrix actions used across the manuscript \cite{bharadwaj2024compact}. These operations are not assumed to have unit cost, instead at most polylogarithmic in problem size $N$ and accuracy $\epsilon$.  For a nonunitary and non-Hermitian operator $M$, define
\begin{equation}
S=\frac{M+M^\dagger}{2},
\qquad
A=\frac{M-M^\dagger}{2}.
\label{eq:B14}
\end{equation}
The four-unitary construction used in the compact solver, for $\epsilon\to0$, is
\begin{equation}
\widetilde D_\epsilon
=
\frac{ie^{-i\epsilon S}-ie^{i\epsilon S}+e^{\epsilon A}-e^{-\epsilon A}}{2\epsilon}
=
M+O(\epsilon^2),
\label{eq:B15}
\end{equation}
with an $O(\epsilon^4)$ leading error after second-order Richardson extrapolation.  More generally, let
\begin{equation}
\widetilde M
=
\sum_{k=0}^{K-1}\alpha_kU_k,
\qquad
\alpha=\sum_k|\alpha_k|,
\qquad
K\in\{2,4\}
\label{eq:B16}
\end{equation}
then the LCU circuit yields
\begin{equation}
U_D\ket{0^a}\ket{\varphi}
=
\ket{0^a}\frac{\widetilde M\ket{\varphi}}{\alpha}
+\ket{\Phi^\perp},
\qquad
p_D
=
\frac{\norm{\widetilde M\ket{\varphi}}_2^2}{\alpha^2}.
\label{eq:B19}
\end{equation}
The same construction implements $M\ket{x}$, repeated multiplication $M^\tau\ket{x}$, and, using a convergent truncated Neumann expansion, $M^{-1}\ket{x}$ and $(M^{-1})^\tau\ket{x}$.  For the one-shot explicit and implicit algorithms, the total gate complexity is
\begin{equation}
\begin{aligned}
C_{\rm LCU}
=
O\!\Biggl[&
 s\epsilon_{\rm LCU}\log_2(N_g\tau)\\
&\times
 \polylog\!\left(\frac{\epsilon_{\rm LCU}}{\epsilon_U},\frac{1}{\epsilon_N}\right)\\
&\times
 \left(\log_2\frac{1}{\kappa-1}\right)^{-3}
\Biggr].
\end{aligned}
\label{eq:B20}
\end{equation}
Here $s$ is the sparsity of $M$, $\epsilon_{\rm LCU}$ controls the LCU decomposition, $\epsilon_U$ is the unitary-synthesis error, $\epsilon_N$ is the Neumann-truncation error, and $\kappa$ is the condition number of the scaled matrix.  The inverse construction assumes convergence of the Neumann expansion.  For fixed-order sparse discretizations, $s=O(1)$ and $\epsilon_{\rm LCU}$ can remain close to unity, so the dependence on the matrix--iteration size $N_g\tau$ is near optimal up to the displayed logarithmic and conditioning factors \cite{bharadwaj2024compact}.

\subsection{Amplitude estimation and measurement}
\label{subsec:QAE}
For a preparation circuit
\begin{equation}
U_M\ket{0}
=
\sqrt{p_\Omega}\ket{\mathrm{good}}
+
\sqrt{1-p_\Omega}\ket{\mathrm{bad}},
\label{eq:B21}
\end{equation}
where
\begin{equation}
p_\Omega
=
\bigl\|P_\Omega U_M|0\rangle\bigr\|_2^2
=
\sum_{i\in\Omega}|u_i|^2 .
\label{eq:B22}
\end{equation}
quantum amplitude estimation (QAE) returns $|\widetilde p_\Omega-p_\Omega|\le\epsilon_{\rm QAE}$ with failure probability at most $\delta_{\rm QAE}$ using
\begin{equation}
O\!\left(\epsilon_{\rm QAE}^{-1}\log\delta_{\rm QAE}^{-1}\right)
\label{eq:B23}
\end{equation}
controlled calls to $U_M$ and $U_M^\dagger$ \cite{brassard2002quantum,grinko2021iterative}.  For $\Omega=\{i\}$ this estimates $|u_i|^2$; it does not reveal the sign of a real amplitude.  The sign protocol is therefore postponed until after candidate localization in Appendix~\ref{sec:Candidates,enum,cert,sign,cases}.\\

The principal access model does \emph{not} assume the following -- QRAM or coherent lookup of classically stored data; a value oracle $O_u\ket{i}\ket{0}=\ket{i}\ket{u_i}$; a threshold/comparison oracle $O_t\ket{i}=(-1)^{[u_i>t]}\ket{i}$; a sign oracle $O_{\rm sgn}\ket{i}\ket{0}=\ket{i}\ket{\sgn(u_i)}$; or unit-cost access to arbitrary $M$, $M^\tau$, or $M^{-1}$.  Every time one of these stronger primitives is used for a comparison baseline or an optional transformed state, its cost is stated separately.

\section{State discrimination and threshold statistics}
\label{sec:No-Gos_Threshold_stat}

This appendix separates two separate concepts.  The first provides an information-theoretic no-go result, which shows that exact extrema search on completely arbitrary amplitude states cannot avoid dependence on a decisive amplitude gap.  The second is a benign preprocessing task: coarse mean and variance estimates can be used to place a scalar threshold without ever standardizing the amplitude operator itself.  Keeping these points separate explains why the fuzzy filter can avoid resolving the final top-two gap without contradicting lower bounds.

\subsection{State Discrimination Cost}
\label{subsec:State Discrimination Cost}
Let $a>b\ge0$ and let $\ket{\chi}$ be orthogonal to $\ket{0}$ and $\ket{1}$.  Choose $\norm{\chi}^2=1-a^2-b^2$ and define
\begin{equation}
\ket{\psi_0}=a\ket{0}+b\ket{1}+\ket{\chi},
\label{eq:C1}
\end{equation}
\begin{equation}
\ket{\psi_1}=b\ket{0}+a\ket{1}+\ket{\chi}.
\label{eq:C2}
\end{equation}
The maximum is at index $0$ for $\ket{\psi_0}$ and at index $1$ for $\ket{\psi_1}$.  With $\Delta=a-b$,
\begin{equation}
\braket{\psi_0}{\psi_1}
=2ab+\norm{\chi}^2
=1-(a-b)^2
=1-\Delta^2.
\label{eq:C3}
\end{equation}
Thus the correct maximizing index can change even while the two quantum states remain almost indistinguishable when $\Delta$ is small. For $T$ independent copies, the overlap is $(1-\Delta^2)^T$.  The optimal binary-discrimination error for equal priors is
\begin{equation}
P_{\rm err}^{(T)}
=
\frac12\left[
1-\sqrt{1-(1-\Delta^2)^{2T}}
\right].
\label{eq:C4}
\end{equation}
\eqref{eq:C4} is the Helstrom limit \cite{Helstrom1976}, which translates the remaining overlap of the two $T$-copy states into the error probability of the \emph{best possible} quantum measurement.  It is therefore not a statement about a particular QG1 circuit.  A constant bias over random guessing requires $(1-\Delta^2)^{2T}\le c<1$.  Since $\log(1-\Delta^2)=-\Theta(\Delta^2)$ for small $\Delta$,
\begin{equation}
T=\Omega(\Delta^{-2}).
\label{eq:C5}
\end{equation}

Coherent oracle access can improve this dependence quadratically, but cannot remove it.  Choose state-preparation unitaries $U_0,U_1$ satisfying $U_j\ket{0^n}=\ket{\psi_j}$ and differing only by the corresponding small rotation.  They may be chosen such that
\begin{equation}
\norm{U_0-U_1}
=O\!\left(\norm{\ket{\psi_0}-\ket{\psi_1}}\right)
=O(\Delta).
\label{eq:C6}
\end{equation}
For a $T$-query algorithm, replace its oracle calls one at a time from $U_0$ to $U_1$.  The triangle inequality gives a final-state distance at most
\begin{equation}
T\norm{U_0-U_1}=O(T\Delta).
\label{eq:C7}
\end{equation}
Constant-bias discrimination requires constant trace distance, hence
\begin{equation}
T=\Omega(\Delta^{-1}).
\label{eq:C8}
\end{equation}
This is the standard hybrid argument behind black-box quantum-search lower bounds \cite{Bennett1997}.  The conclusion is not that the present algorithm must resolve the final top-two gap during QG1.  Rather, no exact extrema algorithm can be gap independent on every possible amplitude state.  The present construction responds by asking the global filter a weaker question -- isolate a structured, inverse-logarithmic-mass candidate tail across a deliberately wider fuzzy band, and pay the fine value precision only after that tail has been compressed to $O(\log N)$ addresses.

\subsection{Mean and variance statistics for threshold selection}
\label{subsec:threshold_stat}

The main theorem is stated with an explicit lower fuzzy edge.  In data analysis one may instead select a small family of scalar thresholds from one-time estimates of the amplitude mean and variance.  Define
\begin{equation}
\mu_\Psi=\frac1N\sum_i u_i,
\qquad
\sigma_\Psi^2=\frac1N\sum_i(u_i-\mu_\Psi)^2.
\label{eq:C9}
\end{equation}
Normalization immediately gives
\begin{equation}
\sigma_\Psi^2
=
\frac1N\sum_i u_i^2-\mu_\Psi^2
=
\frac1N-\mu_\Psi^2.
\label{eq:C10}
\end{equation}
Moreover, with the uniform state $\ket{s}=H^{\otimes n}\ket{0^n}$,
\begin{equation}
\bra{0^n}H^{\otimes n}\ket{\Psi}
=
\frac1{\sqrt N}\sum_i u_i
=
\sqrt N\,\mu_\Psi.
\label{eq:C11}
\end{equation}
A sign-sensitive Hadamard test \cite{Ekert2002} therefore determines the sign and magnitude of the overlap, and amplitude estimation gives additive accuracy $\epsilon_{\rm stat}$ using $O(\epsilon_{\rm stat}^{-1}\log\delta_{\rm stat}^{-1})$ controlled state-preparation calls.  For the largest-magnitude problem, candidate-count control can be imposed independently of this statistical placement. Let $L=\log_2N$ and define
\begin{equation}
\lambda_{\kappa,N}
:=
\frac{1}{\sqrt{\kappa_{\rm cand}L}}.
\label{eq:C_lambda_kappa}
\end{equation}
The scanned lower threshold is restricted to $\lambda\geq\lambda_{\kappa,N}$. Then
\begin{equation}
1
=
\sum_i u_i^2
\geq
\sum_{i\in C_\Gamma}u_i^2
\geq
|C_\Gamma|\lambda^2,
\label{eq:C_candidate_norm_bound}
\end{equation}
which immediately gives
\begin{equation}
|C_\Gamma|
\leq
\frac{1}{\lambda^2}
\leq
\kappa_{\rm cand}L.
\label{eq:C15}
\end{equation}
Thus the logarithmic candidate-count condition can be enforced by the chosen lower edge. By contrast, the retained-mass condition $P_{R_\Gamma}\geq c_P/L$ is not a consequence of normalization or coarse mean and variance statistics. It is the physical structure condition that determines whether the polylogarithmic advantage guarantee applies. A practical threshold scan may then use
\begin{equation}
\tau_\ell
=
\widehat\mu_\Psi+\kappa_\ell\widehat\sigma_\Psi,
\qquad
\ell=0,\ldots,L_{\rm scan}-1,
\label{eq:C12}
\end{equation}
with $L_{\rm scan}=O(1)$. \ssb{The theorem requires only that
one selected threshold satisfy the stated candidate-count and
retained-mass conditions.  Any cost associated with identifying
such a threshold is denoted by $Q_{\rm cal}$ and is included
additively in the total complexity.}  For a prescribed scan of $L_{\rm scan}$ scalar thresholds, estimating the required calibration quantities to additive precision $\epsilon_{\rm cal}$ with total failure probability at most $\delta_{\rm cal}$ requires
\begin{equation}
Q_{\rm cal}
=
O\!\left[
\frac{L_{\rm scan}}{\epsilon_{\rm cal}}
\log\!\left(
\frac{L_{\rm scan}}{\delta_{\rm cal}}
\right)
\right]
\label{eq:C_Qcal}
\end{equation}
controlled state-preparation calls. The calibration need not resolve the final top-two amplitude spacing. It only has to place the threshold to accuracy smaller than the designed transition width, so we take $\epsilon_{\rm cal}=\Theta(\Gamma_N)$. With $L_{\rm scan}=O(1)$ and $\Gamma_N=(\log_2N)^{-3/2}$,
\begin{equation}
Q_{\rm cal}
=
\widetilde O\!\left[
(\log_2N)^{3/2}
\right].
\label{eq:C_Qcal_polylog}
\end{equation}
This cost is incurred once before the production candidate-state preparations and is therefore separate from $M_{\rm glob}Q_{\rm QG1}^{\rm AA}$ and from the final candidate-value certification cost.

The statistics in ~\eqref{eq:C12} choose only a \emph{scalar location} for the fuzzy transition.  QG1 continues to act directly on $D_\Psi$.  In particular, the algorithm does not block encode $(D_\Psi-\mu I)/\sigma$, so it does not incur a potentially large $1/\sigma$ normalization penalty.  If $|\widehat\tau-\tau|\le\epsilon_\tau<\Gamma_N/4$, the same proof goes through after replacing the usable transition width by $\Gamma_N-2\epsilon_\tau$.

\section{Bounded polynomial filtering with a finite transition width}
\label{sec:polynomial_filter}
The goal of QG1 is not to determine which of the largest amplitudes is the maximum. Instead, it only needs to suppress amplitudes below the candidate threshold $\lambda$ while preserving amplitudes safely above $\lambda+\Gamma_N$. Amplitudes in the interval $[\lambda,\lambda+\Gamma_N]$ are deliberately left unresolved and remain candidates for the later certification stage.

This finite transition width is essential to the complexity of the filter. Asking a polynomial to reproduce an exact discontinuous step, or to separate two arbitrarily close ranked amplitudes, would make its degree depend on their potentially very small spacing. Instead, the algorithm chooses the finite width $\Gamma_N=(\log_2N)^{-3/2}$ and requires accuracy only outside this transition interval. The purpose of this section is to prove that a bounded polynomial with these properties exists and to determine its degree. We first prove the result for a generic width $\Gamma>0$ and then set $\Gamma=\Gamma_N$ for the quantum extrema-search algorithm. The one-sided construction is sufficient for the proof, while the two-sided magnitude filter required for $|u_i|$ follows immediately afterward.

\begin{lemma}[Bounded finite-width filter polynomial]
Let $-1<\lambda<\lambda+\Gamma<1$ and $0<\eta<1/2$. There exists a real polynomial $p_d$ satisfying
\begin{align}
|p_d(x)| &\leq \eta, && x\in[-1,\lambda], \label{eq:D1}\\
|1-p_d(x)| &\leq \eta, && x\in[\lambda+\Gamma,1], \label{eq:D2}\\
0\leq p_d(x) &\leq 1, && x\in[-1,1], \label{eq:D3}
\end{align}
with degree
\begin{equation}
d=O\!\left(\Gamma^{-1}\log\frac{1}{\eta}\right).
\label{eq:D4}
\end{equation}
\label{lem:gapped_step}
\end{lemma}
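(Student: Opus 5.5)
We construct $p_d$ by polynomially approximating a smoothed step and then rescaling so that it lies in $[0,1]$; this is the standard route of Low--Chuang and Gily\'en et al.~\cite{LowChuang2019,Gilyen2019}. Let $c:=\lambda+\Gamma/2$ be the midpoint of the transition interval. Take $\kappa=\Theta(\Gamma^{-1}\sqrt{\log(1/\eta)})$ and consider
\begin{equation}
f(x)=\tfrac12\bigl[1+\operatorname{erf}(\kappa(x-c))\bigr].
\end{equation}
Outside $[\lambda,\lambda+\Gamma]$ we have $|x-c|\geq\Gamma/2$, so the Gaussian tail bound gives $|f(x)-[x>c]|\leq \tfrac12 e^{-\kappa^2\Gamma^2/4}\leq\eta/4$ for the stated choice of $\kappa$. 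Moreover $f$ maps $[-1,1]$ into $(0,1)$.

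The next step is to approximate $f$ uniformly on $[-1,1]$ by a polynomial $q$ of degree $d=O(\kappa\sqrt{\log(1/\eta)})=O(\Gamma^{-1}\log(1/\eta))$, with error $\epsilon\leq\eta/4$. To do this, write $\operatorname{erf}(\kappa y)=\frac{2}{\sqrt\pi}\int_0^{\kappa y}e^{-t^2}dt$ and expand $e^{-\kappa^2 y^2}$ in its Jacobi--Anger/Chebyshev series, whose coefficients are modified Bessel functions $e^{-\kappa^2/2}I_j(\kappa^2/2)$. Truncating at order $O(\kappa\sqrt{\log(1/\epsilon)})$ leaves a tail of size $\epsilon$. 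Integrating term by term then yields a polynomial approximation of $\operatorname{erf}(\kappa y)$ with the same degree bound, up to an extra $\log$-free factor. An alternative, cleaner route is to bound the Bernstein-ellipse growth of the entire function $\operatorname{erf}(\kappa(x-c))$ on $E_\rho$ with $\rho=1+\Theta(1/\kappa)$ and apply the standard Chebyshev truncation estimate $\|f-q\|_\infty\leq 2M\rho^{-d}/(\rho-1)$. The shift by $c$ with $|c|<1$ only changes constants.

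Finally, we enforce the bound $0\leq p_d\leq1$ on $[-1,1]$. Since $-\epsilon\leq q\leq 1+\epsilon$, set $p_d:=(q+\epsilon)/(1+2\epsilon)$. This maps into $[0,1]$ and perturbs $q$ by at most $3\epsilon$. Combining the errors gives $|p_d|\leq\eta/4+3\epsilon\leq\eta$ on $[-1,\lambda]$, and likewise $|1-p_d|\leq\eta$ on $[\lambda+\Gamma,1]$, once $\epsilon=\eta/16$. This proves \eqref{eq:D1}--\eqref{eq:D3} with $d=O(\Gamma^{-1}\log(1/\eta))$. The two-sided magnitude filter used later is then obtained as $p_d(x)+p_d(-x)$ with the threshold placed at $\lambda$, followed by the same rescaling. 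Alternatively, one can apply the one-sided filter in $x^2$, which doubles the degree.

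The main obstacle is the degree bound in the approximation step. A naive Taylor or Jackson argument loses a factor, giving for instance $d=O(\Gamma^{-1}\log^{2}(1/\eta))$ or a $1/\Gamma^2$ dependence. The sharp bound needs the Bernstein-ellipse estimate: on $E_\rho$, $|\operatorname{Im}x|\lesssim\sqrt{\rho-1}$, so $|e^{-\kappa^2(x-c)^2}|\leq e^{O(\kappa^2(\rho-1))}$. Choosing $\rho-1=\Theta(1/\kappa^2)$ keeps $M=O(1)$ but only yields $d=O(\kappa^2\log(1/\epsilon))$. I would therefore try first to balance by taking $\rho-1=\Theta(\sqrt{\log(1/\epsilon)}/\kappa)\cdot\kappa^{-1}$ as in Low--Chuang's erf lemma, and otherwise cite that lemma directly, since it gives degree $O(\kappa\log(1/\epsilon))$. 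The product $\kappa\log(1/\epsilon)$ is then $O(\Gamma^{-1}\log^{3/2}(1/\eta))$ unless $\kappa$ is chosen carefully. Reconciling this logarithmic power with \eqref{eq:D4} is the delicate point. If needed I would use Eremenko--Yuditskii-type optimal step approximations, which give exactly $O(\Gamma^{-1}\log(1/\eta))$.
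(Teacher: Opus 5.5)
Your erf-based route differs from the paper's and can be made rigorous. As written, though, it does not prove the one step that carries the content of the lemma, and your closing paragraph argues against your own correct second paragraph. The claim there is that the Chebyshev series $e^{-\kappa^2y^2}=e^{-\beta}[I_0(\beta)+2\sum_{j\ge1}(-1)^jI_j(\beta)T_{2j}(y)]$, with $\beta=\kappa^2/2$, can be truncated at order $O(\kappa\sqrt{\log(1/\epsilon)})$. That is exactly what is needed, but you only assert the tail estimate. You then replace it with a misquotation. The Low--Chuang erf lemma gives degree $O(\sqrt{(\kappa^2+\log(1/\epsilon))\log(1/\epsilon)})$, not $O(\kappa\log(1/\epsilon))$. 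With $\kappa=2\Gamma^{-1}\sqrt{\log(2/\eta)}$ and $\epsilon=\Theta(\eta)$, this is already $O(\Gamma^{-1}\log(1/\eta))$; note $\kappa^2\gtrsim\log(1/\eta)$ because $\Gamma<2$. So there is no $\log^{3/2}$ obstruction and no need for Eremenko--Yuditskii.

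To close the gap yourself, use the identity $e^{-\beta}I_j(\beta)=\Pr[X-Y=j]$ for independent $X,Y\sim\mathrm{Poisson}(\beta/2)$. The moment generating function of $X-Y$ is $e^{\beta(\cosh\theta-1)}$, so a Chernoff bound gives $\sum_{j>J}e^{-\beta}I_j(\beta)\le\epsilon$ once $J\ge C(\sqrt{\beta\log(1/\epsilon)}+\log(1/\epsilon))$. Next integrate term by term, using $\int_0^xT_{2j}=\tfrac12[T_{2j+1}(x)/(2j+1)-T_{2j-1}(x)/(2j-1)]$. Each tail term of $\operatorname{erf}(\kappa x)$ then picks up at most a factor $4\kappa/(\sqrt{\pi}(2j-1))=O(1)$ once $J\ge\kappa$, so no $\log\kappa$ is lost.

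Drop the Bernstein-ellipse discussion. On $E_\rho$ one has $|\operatorname{Im}x|\le(\rho-\rho^{-1})/2\approx\rho-1$, not $\sqrt{\rho-1}$. Moreover, the crude ellipse bound with $\rho-1=\Theta(1/\kappa)$ gives $O(\kappa\log(\kappa/\epsilon))$, which is not sharp. The remaining pieces are fine: the $\operatorname{erfc}$ tail, the affine handling of the shift $c$, and the rescaling $(q+\epsilon)/(1+2\epsilon)$, which in fact moves $q$ by at most $\epsilon$.

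For comparison, the paper avoids special-function estimates entirely. In doing so it shows that your remark that a Jackson argument must lose a factor is wrong, because Jackson is used only to \emph{constant} accuracy. The ramp rising from $-1$ to $1$ across $[\lambda,\lambda+\Gamma]$ is $(2/\Gamma)$-Lipschitz, so a degree-$m=O(1/\Gamma)$ approximant has error at most $1/4$. After scaling by $4/5$, it is bounded by $1$ on $[-1,1]$, at least $3/5$ on the pass region, and at most $-3/5$ on the reject region. The paper then composes with the majority-vote polynomial $M_r$: twice the probability that a strict majority of $r$ Bernoulli trials with success probability $(1+y)/2$ succeed, minus one. $M_r$ maps $[-1,1]$ into itself and, by Hoeffding, lies within $2e^{-9r/50}$ of $\pm1$ when $|y|\ge 3/5$. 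Taking $r=O(\log(1/\eta))$ and $p_d=(1+M_r\circ\widetilde q_m)/2$, the degrees multiply to $mr=O(\Gamma^{-1}\log(1/\eta))$, and $0\le p_d\le 1$ comes for free.

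The key idea is to pay for the width once, at constant accuracy, and for the accuracy by composition. Once you supply the Bessel tail bound, your route yields an explicit polynomial with closed-form Chebyshev coefficients, which is convenient if one wants to evaluate or bound $C_d$ directly. The paper's first stage is only existential, but the whole argument needs nothing beyond Jackson's theorem and a binomial tail bound.
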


\begin{proof}
Define the continuous ramp
\begin{equation}
f_{\lambda,\Gamma}(x)=
\begin{cases}
-1, & x\le\lambda,\\[2pt]
2(x-\lambda)/\Gamma-1, & \lambda<x<\lambda+\Gamma,\\[2pt]
1, & x\ge\lambda+\Gamma.
\end{cases}
\label{eq:D5_ramp}
\end{equation}

The ramp $f_{\lambda,\Gamma}$ has largest slope magnitude $2/\Gamma$ and is therefore Lipschitz with constant $2/\Gamma$. Equivalently, its modulus of continuity satisfies
\begin{equation}
\omega(f_{\lambda,\Gamma},t):=\sup_{|x-y|\leq t}|f_{\lambda,\Gamma}(x)-f_{\lambda,\Gamma}(y)|\leq \frac{2t}{\Gamma}.
\label{eq:D5_modulus}
\end{equation}
Jackson's approximation theorem~\cite{Jackson1911} then guarantees a degree-$m$ polynomial $q_m$ satisfying
\begin{equation}
\|q_m-f_{\lambda,\Gamma}\|_\infty\leq C_J\,\omega(f_{\lambda,\Gamma},m^{-1}),
\label{eq:D6}
\end{equation}
where $C_J$ is an absolute constant. Choosing $m\geq 8C_J/\Gamma$ gives $\|q_m-f_{\lambda,\Gamma}\|_\infty\leq1/4$. Hence $q_m\geq3/4$ in the pass region, $q_m\leq-3/4$ in the reject region, and $|q_m|\leq5/4$ on $[-1,1]$. Rescaling as $\widetilde q_m=(4/5)q_m$ gives
\begin{equation}
\begin{aligned}
|\widetilde q_m(x)|&\leq1,\\
\widetilde q_m(x)&\geq\frac35\quad \text{in the pass region},\\
\widetilde q_m(x)&\leq-\frac35\quad \text{in the reject region}.
\end{aligned}
\label{eq:D8}
\end{equation}

This first stage resolves the transition width $\Gamma$, but only to a fixed constant accuracy. The remaining task is to reduce the error from this constant value to the desired $\eta$ without paying another factor of $1/\Gamma$. We do this by standard majority-vote polynomial amplification~\cite{Buhrman2007}. For $y\in[-1,1]$, define $p=(1+y)/2$. One may regard $p$ as the probability that a Bernoulli trial returns the positive outcome. For an odd integer $r$, so that a majority vote cannot end in a tie, define
\begin{equation}
M_r(y)=2\sum_{k=(r+1)/2}^{r}\binom{r}{k}\left(\frac{1+y}{2}\right)^k\left(\frac{1-y}{2}\right)^{r-k}-1.
\label{eq:D9}
\end{equation}
The sum is the probability that the positive outcome occurs in a strict majority of $r$ independent trials. Thus $M_r(y)$ is simply this majority probability mapped from $[0,1]$ to $[-1,1]$. It is a degree-$r$ polynomial and satisfies $|M_r(y)|\leq1$ for $y\in[-1,1]$.

If $y\geq3/5$, then $p=(1+y)/2\geq4/5$. The majority vote is incorrect only if at most half of the $r$ trials are positive. Hoeffding's inequality~\cite{Hoeffding1963} therefore gives
\begin{equation}
1-M_r(y)\leq2\exp\!\left[-2r\left(\frac45-\frac12\right)^2\right]=2e^{-9r/50}.
\label{eq:D10}
\end{equation}
By symmetry, $1+M_r(y)\leq2e^{-9r/50}$ when $y\leq-3/5$. Hence choosing
\begin{equation}
r\geq\frac{50}{9}\log\frac{2}{\eta}
\label{eq:D11}
\end{equation}
reduces the error in both regions to at most $\eta$.

We now compose the two stages and return from the $\{-1,+1\}$ convention to the desired $\{0,1\}$ filter,
\begin{equation}
p_d(x)=\frac{1+M_r(\widetilde q_m(x))}{2}.
\label{eq:D12}
\end{equation}
Because both $\widetilde q_m$ and $M_r$ are bounded on the required intervals, $0\leq p_d(x)\leq1$ on $[-1,1]$. Moreover, $p_d(x)\leq\eta$ in the reject region and $1-p_d(x)\leq\eta$ in the pass region. Since the degree of a composition is the product of the two degrees,
\begin{equation}
d=mr=O\!\left(\Gamma^{-1}\log\frac{1}{\eta}\right).
\label{eq:D13}
\end{equation}
\label{lem:Lem1 width of step}
\end{proof}
The two factors in the above equation have distinct meanings. The factor $\Gamma^{-1}$ is the cost of resolving the finite transition width, obtained from Jackson approximation, while $\log(1/\eta)$ is the cost of amplifying a constant pass-reject separation to the desired filtering accuracy. Once the two regions are separated by a constant margin, majority amplification drives the error down exponentially. For the quantum extrema-search construction, setting $\Gamma=\Gamma_N=(\log_2N)^{-3/2}$ gives $d=O[(\log_2N)^{3/2}\log(1/\eta_{\rm filt})]$. The QG1 degree is therefore controlled by the deliberately chosen transition width and not by the spacing between top-two or adjacent top-$K$ amplitudes.\\

\ssb{\textbf{Optimality of the transition-width dependence.}
The inverse dependence on the transition width is not an artifact
of the construction above. Consider any real degree-$d$ polynomial
$p_d$ with $\|p_d\|_{\infty,[-1,1]}\leq1$ satisfying
$p_d(\lambda)\leq\eta$ and
$p_d(\lambda+\Gamma)\geq1-\eta$, where $0<\eta<1/2$.
Assume that the transition interval remains a fixed distance
$\rho>0$ from the endpoints of $[-1,1]$.}

\ssb{By the mean-value theorem, there exists
$\xi\in(\lambda,\lambda+\Gamma)$ for which
$|p_d'(\xi)|\geq(1-2\eta)/\Gamma$.
Bernstein's inequality for algebraic polynomials bounded by one
on $[-1,1]$ gives
$|p_d'(x)|\leq d/\sqrt{1-x^2}$.
Since $|\xi|\leq1-\rho$, it follows that
\begin{equation}
d
\geq
\frac{(1-2\eta)\sqrt{\rho(2-\rho)}}{\Gamma}
=
\Omega(\Gamma^{-1}).
\end{equation}
Hence, at fixed filtering accuracy and away from the spectral
endpoints, Lemma~1 attains the optimal scaling
$d=\Theta(\Gamma^{-1})$. The additional
$\log(1/\eta)$ factor in the upper bound controls the requested
high-accuracy pass--reject separation.}

\begin{corollary}[Two-sided magnitude filter]
Under the hypotheses of Lemma 1, with $0\le\lambda<\lambda+\Gamma<1$, there exists an even polynomial $p_{\rm abs}$ of degree $O(\Gamma^{-1}\log\eta^{-1})$ such that $|p_{\rm abs}(x)|\le O(\eta)$ for $|x|\le\lambda$ and $|1-p_{\rm abs}(x)|\le O(\eta)$ for $|x|\ge\lambda+\Gamma$.
\label{cor:two sided filter}
\end{corollary}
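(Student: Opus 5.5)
The plan is to reduce the two-sided problem to the one-sided construction of Lemma~\ref{lem:gapped_step} through the substitution $y=x^2$. Evenness then comes for free, and the only real work is controlling the degree.

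Set $\lambda':=\lambda^2$ and $\lambda'+\Gamma':=(\lambda+\Gamma)^2$, so $\Gamma'=2\lambda\Gamma+\Gamma^2$. On the interval $y\in[0,1]$, the reject region $|x|\le\lambda$ maps onto $y\in[0,\lambda']$. The pass region $|x|\ge\lambda+\Gamma$ maps onto $y\in[\lambda'+\Gamma',1]$.

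\textbf{Step 1: rescale $y$.} Map $y\in[0,1]$ affinely to $z=2y-1\in[-1,1]$. The thresholds become $z_\lambda=2\lambda^2-1$ and width $2\Gamma'$. Apply Lemma~\ref{lem:gapped_step} with these parameters to get $q(z)$ with $|q|\le\eta$ below $z_\lambda$, $|1-q|\le\eta$ above $z_\lambda+2\Gamma'$, and $0\le q\le 1$. Then define $p_{\rm abs}(x):=q(2x^2-1)=q(T_2(x))$. This polynomial is even and has degree $2\deg q$. It inherits the bounds with error exactly $\eta$, which is within the claimed $O(\eta)$.

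A small edge case: if $\lambda=0$, the reject region $[-1,z_\lambda]$ collapses to the point $-1$. This is harmless; the Lemma requires $-1<\lambda'$, so one can shift the threshold slightly or simply reuse the ramp construction, which works verbatim with the ramp pinned at the endpoint.

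\textbf{Step 2 (main obstacle): the degree.} Lemma~\ref{lem:gapped_step} gives $\deg q=O((\Gamma')^{-1}\log\eta^{-1})$. Since $\Gamma'\ge\Gamma^2$ but possibly $\Gamma'\ll\Gamma$ when $\lambda\ll\Gamma$, the naive bound is $O(\Gamma^{-2})$. I would handle this by splitting into two cases.

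\emph{Case $\lambda\ge\Gamma$.} Here $\Gamma'\ge 2\lambda\Gamma\ge 2\Gamma^2$, which alone is not enough. So I would refine the Jackson step instead of citing the Lemma as a black box. In the $z$ variable, the ramp has Lipschitz constant $1/\Gamma'$, and it sits near $z_\lambda=2\lambda^2-1$. That point is at distance about $2\lambda^2$ from $-1$, where Jackson–Timan estimates with the modulus $\sqrt{1-z^2}/m$ gain a factor. Because $\sqrt{1-z^2}\sim 2\lambda$ there, a degree $m\sim\lambda/\Gamma'\sim 1/\Gamma$ suffices. Concretely, I would apply Timan's pointwise Jackson theorem $|f(z)-q_m(z)|\le C\,\omega(f,\sqrt{1-z^2}/m+1/m^2)$. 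This needs $\sqrt{1-z^2}/m\lesssim\Gamma'$ near the ramp, giving $m\gtrsim\lambda/\Gamma'\gtrsim 1/\Gamma$, and $1/m^2\lesssim\Gamma'$, which holds since $\Gamma'\ge\Gamma^2$.

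\emph{Case $\lambda<\Gamma$.} The ramp lies within distance $O(\Gamma^2)$ of $z=-1$. The $1/m^2$ endpoint term then dominates, and $m\sim\Gamma'^{-1/2}=O(\Gamma^{-1})$ suffices.

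Away from the ramp, $f$ is constant, so the $\omega$ bound is trivially small there. This yields constant accuracy with $m=O(\Gamma^{-1})$ in both cases.

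\textbf{Step 3: amplify and combine.} The majority-vote amplification $M_r$ from the proof of Lemma~\ref{lem:gapped_step} is reused unchanged with $r=O(\log\eta^{-1})$. Boundedness in $[0,1]$ is preserved by the same rescaling argument. Composing with $T_2$ doubles the degree, so the final degree is $O(\Gamma^{-1}\log\eta^{-1})$.

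Alternatively, one could build a two-sided ramp $f(x)=-1$ for $|x|\le\lambda$, $+1$ for $|x|\ge\lambda+\Gamma$, linear in between. This is even and $2/\Gamma$-Lipschitz. Apply Jackson directly, symmetrize $q_m(x)\mapsto\tfrac12(q_m(x)+q_m(-x))$, which keeps the error and makes the polynomial even, then amplify. This route avoids Step 2 entirely, and I would likely present it as the main proof, keeping the $x^2$ substitution as a remark.
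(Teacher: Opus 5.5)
Your main route (even two-sided ramp, Jackson, symmetrization $q_m(x)\mapsto\frac12[q_m(x)+q_m(-x)]$, then majority amplification) is correct, but it is not the paper's argument. The paper never reopens the Jackson step. It takes the one-sided filter $p_+$ of Lemma~\ref{lem:gapped_step} as a black box, sets $p_-(x)=p_+(-x)$, and defines $p_{\rm abs}=1-[1-p_+][1-p_-]$. Then:
\begin{itemize}
\item evenness follows by swapping the two factors;
\item $0\le p_{\rm abs}\le1$ follows from $0\le p_\pm\le1$;
\item on $|x|\le\lambda$ one has $p_{\rm abs}\le p_++p_-\le2\eta$;
\item in either tail one factor is at most $\eta$;
\item the degree is $2\deg p_+$.
\end{itemize}
This reduction is modular: it works for any one-sided filter with the stated bounds, such as the damped Chebyshev filter used numerically. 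It costs only a factor $2$ in degree and in the reject-side error, which is why the statement says $O(\eta)$ and the paper then takes $\eta_{\rm filt}/2$. Your construction instead reruns the lemma's proof with an even $2/\Gamma$-Lipschitz target. Symmetrization preserves the error because the target is even, and $M_r$ composed with an even polynomial stays even. So you get error at most $\eta$ at the one-sided degree, at the price of not using the lemma as a black box.

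Do not keep Step~2 of the $x^2$ route as written, even as a remark. Timan's bound $|f(z)-q_m(z)|\le C\,\omega(f,\sqrt{1-z^2}/m+1/m^2)$ uses the \emph{global} modulus of continuity, here $\min(t/\Gamma',2)$. At $z=0$ that bound drops below a constant only once $m\gtrsim1/\Gamma'=1/[\Gamma(2\lambda+\Gamma)]$, which is of order $\Gamma^{-2}$ when $\lambda\lesssim\Gamma$. So the sentence ``$f$ is constant away from the ramp, so the bound is small there'' does not follow from the theorem you cite.

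The conclusion $m=O(\Gamma^{-1})$ is nonetheless true, for a simpler reason. The function $h(x):=f(2x^2-1)$ is an even ramp in $x$ with $|h'(x)|\le4(\lambda+\Gamma)/[\Gamma(2\lambda+\Gamma)]\le4/\Gamma$. Ordinary Jackson plus symmetrization therefore gives an even approximant of degree $O(\Gamma^{-1})$, i.e.\ a polynomial in $2x^2-1$ approximating $f$ on all of $[-1,1]$. This makes the case split unnecessary and shows that the $x^2$ route is your alternative route in disguise.
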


\begin{proof}
Let $p_+(x)$ be the one-sided polynomial of Lemma 1 and set $p_-(x)=p_+(-x)$.  Since $0\le p_\pm\le1$, the polynomial
\begin{equation}
p_{\rm abs}(x)
=
1-[1-p_+(x)][1-p_-(x)]
\label{eq:D14}
\end{equation}
is even and bounded between zero and one.  In the central reject region both $p_+$ and $p_-$ are $O(\eta)$, while in either retained tail one of them differs from one by at most $\eta$.  Its degree is at most twice that of $p_+$, which leaves the asymptotic scaling unchanged. \ssb{Applying Lemma~1 with a one-sided accuracy
$\eta=\eta_{\rm filt}/2$ therefore gives
$|p_{\rm abs}(x)|\leq\eta_{\rm filt}$ in the central reject
region and $|1-p_{\rm abs}(x)|\leq\eta_{\rm filt}$ in either
retained tail, without changing the degree scaling.}
\end{proof}

Lemma~\ref{lem:gapped_step} proves the existence and degree of the required bounded finite-width filter.  Circuit cost, however, depends on a second quantity once the polynomial is written in the Chebyshev basis,
\begin{equation}
C_d
=
\sum_{\ell=0}^{d}|c_\ell|.
\label{eq:D15}
\end{equation}
A distinction is important here. The order $d$ controls the largest walk power required in one coherent attempt, whereas $C_d$ controls the normalization of the coefficient-weighted LCU branch.  Degree alone is therefore not the complete implementation cost.

The statement $C_d=O(\log d)$ is not universal for every polynomial satisfying Lemma 1.  It is an additional property of the damped-Chebyshev step expansion used in the numerical implementation.  For the coefficients used here,
\begin{equation}
c_0=\frac{\theta_\lambda}{\pi},
\qquad
c_\ell
=
g_\ell^{(d)}\frac{2}{\pi}\frac{\sin(\ell\theta_\lambda)}{\ell},
\qquad
0\le g_\ell^{(d)}\le1,
\label{eq:D16}
\end{equation}
one has
\begin{equation}
C_d
\le
1+\frac{2}{\pi}\sum_{\ell=1}^{d}\frac1\ell
=O(\log d).
\label{eq:D17}
\end{equation}
The harmonic decay of the coefficients is why hundreds of Chebyshev degrees need not imply a coefficient normalization of order hundreds.  All theorem statements remain valid for other coefficient sets by retaining their actual $C_d$.

\section{Amplitude block encoding and all-degree Chebyshev walk}
\label{sec:chebyshev_walk}
Appendix~\ref{sec:polynomial_filter} constructs a bounded scalar polynomial $p_d(x)=\sum_{\ell=0}^{d}c_\ell T_\ell(x)$ with the desired filtering properties. To apply this polynomial to the amplitudes of $|\Psi\rangle=\sum_i u_i|i\rangle$, we now need a coherent operator whose eigenvalues are the unknown amplitudes $u_i$, followed by a walk whose $\ell$th projected power produces the Chebyshev polynomial $T_\ell(u_i)$. This section constructs these two ingredients.

The first step converts state-preparation access into an amplitude observable without measuring the amplitudes or writing them into a numerical value register. Let $S$ denote the $n$-qubit signal register and $A$ an $n$-qubit auxiliary register. Define the reversible computational-basis copy operation
\begin{equation}
C|i\rangle_S|j\rangle_A=|i\rangle_S|j\oplus i\rangle_A ,
\label{eq:Ecopy}
\end{equation}
so that in particular $C|i\rangle_S|0^n\rangle_A=|i\rangle_S|i\rangle_A$. This operation copies only the computational-basis label and therefore does not copy an unknown quantum state.

\begin{lemma}[Amplitude-observable block encoding]
Let $U_S|0^n\rangle=|\Psi\rangle=\sum_i u_i|i\rangle$, with real amplitudes $u_i$, and define
\begin{equation}
R_\Psi=(I_S\otimes U_S^\dagger)C .
\label{eq:E_Rpsi}
\end{equation}
Then the matrix block of $R_\Psi$ obtained by preparing and projecting the auxiliary register onto $|0^n\rangle_A$ is
\begin{equation}
(I_S\otimes\langle0^n|_A)R_\Psi(I_S\otimes|0^n\rangle_A)=D_\Psi,
\qquad
D_\Psi:=\sum_i u_i|i\rangle\langle i|.
\label{eq:E_block}
\end{equation}
\label{lem:lem2 D_psi}
\end{lemma}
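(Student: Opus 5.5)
The plan is to compute the matrix elements of the projected block directly in the computational basis. For fixed basis states $|i\rangle_S$ and $|k\rangle_S$ of the signal register, we need
\begin{equation}
\langle k|_S\langle 0^n|_A\,R_\Psi\,|i\rangle_S|0^n\rangle_A .
\end{equation}
Showing that this equals $u_i\,\delta_{ik}$ establishes the claimed identity with $D_\Psi$ by linearity.

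The first step is to apply the copy operation (\ref{eq:Ecopy}). It sends $|i\rangle_S|0^n\rangle_A$ to $|i\rangle_S|i\rangle_A$. Next, $I_S\otimes U_S^\dagger$ acts only on the auxiliary register, producing $|i\rangle_S\otimes U_S^\dagger|i\rangle_A$. Taking the inner product with $\langle k|_S\langle 0^n|_A$ factorizes:
\begin{equation}
\langle k|i\rangle\,\langle 0^n|U_S^\dagger|i\rangle .
\end{equation}
The first factor is $\delta_{ki}$, because the signal register is untouched after the copy.

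For the second factor, take the adjoint of the defining relation $U_S|0^n\rangle=|\Psi\rangle$, which gives $\langle 0^n|U_S^\dagger=\langle\Psi|$. Hence
\begin{equation}
\langle 0^n|U_S^\dagger|i\rangle=\langle\Psi|i\rangle=\overline{u_i}=u_i,
\end{equation}
where the last equality uses that the amplitudes are real. Combining the two factors gives matrix elements $u_i\delta_{ik}$. Summing over $i$ and $k$ therefore yields $\sum_i u_i|i\rangle\langle i|=D_\Psi$.

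No step is a genuine obstacle. The only points needing care are three. First, we must track that $U_S^\dagger$ acts on register $A$ and not $S$. Second, the copy $C$ must be a basis-label XOR, which is a valid unitary and not a state cloner. Third, the reality assumption is what removes the complex conjugate. Without it one obtains $\sum_i\overline{u_i}\,|i\rangle\langle i|$, and it is worth remarking this in passing.
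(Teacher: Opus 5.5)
Your proposal is correct and follows essentially the same route as the paper, which applies $R_\Psi$ to an arbitrary $|\phi\rangle=\sum_i\phi_i|i\rangle$ instead of computing the matrix elements $\langle k|_S\langle 0^n|_A R_\Psi|i\rangle_S|0^n\rangle_A$. The key steps are identical in both: the basis-label copy, $U_S^\dagger$ acting on register $A$, and $\langle 0^n|U_S^\dagger|i\rangle=u_i$ by reality of the amplitudes.
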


\begin{proof}
For an arbitrary signal state $|\phi\rangle=\sum_i\phi_i|i\rangle$,
\begin{equation}
R_\Psi|\phi\rangle_S|0^n\rangle_A
=
\sum_i\phi_i|i\rangle_S U_S^\dagger|i\rangle_A .
\label{eq:E1}
\end{equation}
Projecting the auxiliary register onto $|0^n\rangle_A$ gives
\begin{equation}
(I_S\otimes\langle0^n|_A)R_\Psi|\phi\rangle_S|0^n\rangle_A
=
\sum_i\phi_i|i\rangle_S\langle0^n|U_S^\dagger|i\rangle .
\label{eq:E2}
\end{equation}
Since $U_S|0^n\rangle=\sum_i u_i|i\rangle$ and the amplitudes are real, $\langle0^n|U_S^\dagger|i\rangle=u_i$. Hence
\begin{equation}
(I_S\otimes\langle0^n|_A)R_\Psi|\phi\rangle_S|0^n\rangle_A
=
\sum_i\phi_i u_i|i\rangle
=
D_\Psi|\phi\rangle .
\label{eq:E3}
\end{equation}
Because this holds for every $|\phi\rangle$, ~\eqref{eq:E_block} follows.
\end{proof}

Therefore, the unknown signal values now appear as eigenvalues, $D_\Psi|i\rangle=u_i|i\rangle$, even though no $u_i$ has been measured or explicitly stored. However, simply applying powers of $R_\Psi$ does not produce powers of $D_\Psi$. If $R_\Psi$ is written schematically in blocks as $R_\Psi=\left(\begin{smallmatrix}D_\Psi&B\\ C&E\end{smallmatrix}\right)$, then the corresponding block of $R_\Psi^2$ is $D_\Psi^2+BC$, because amplitude can leave the selected auxiliary subspace and later return. We therefore construct a walk whose projected powers obey an exact polynomial identity.

The unitary $R_\Psi$ is not generally Hermitian. We first embed it into the larger Hermitian unitary
\begin{equation}
\mathcal U_\Psi
=
|0\rangle\langle1|_q\otimes R_\Psi
+
|1\rangle\langle0|_q\otimes R_\Psi^\dagger ,
\label{eq:E_hermitianization}
\end{equation}
where $q$ is one additional qubit. Define the projector
\begin{equation}
\Pi
=
|+\rangle\langle+|_q
\otimes
|0^n\rangle\langle0^n|_A
\otimes I_S ,
\label{eq:E_projector}
\end{equation}
onto the auxiliary state $|+\rangle_q|0^n\rangle_A$. We refer to this selected subspace as the \emph{good subspace}. The reflection about it is $Z_\Pi:=2\Pi-I$, and the Chebyshev walk is
\begin{equation}
W_\Psi:=Z_\Pi\mathcal U_\Psi .
\label{eq:E_walk}
\end{equation}

\begin{lemma}[All-degree Chebyshev walk]
The walk $W_\Psi$ defined above satisfies
\begin{equation}
\Pi W_\Psi^\ell\Pi
=
T_\ell(D_\Psi)\Pi
\label{eq:E4}
\end{equation}
for every integer $\ell\geq0$.
\label{lem:lem3 all degree walk}
\end{lemma}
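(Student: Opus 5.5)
The plan is the standard qubitization argument: decompose the walk into invariant two-dimensional subspaces, one for each computational-basis index $i$. On each such plane $W_\Psi$ will be a rotation, and the identity then reduces to $\cos(\ell\theta_i)=T_\ell(u_i)$.

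\emph{Step 1: basic properties of $\mathcal U_\Psi$.} First I check that $\mathcal U_\Psi$ is a Hermitian unitary. Hermiticity holds because $(|0\rangle\langle1|_q\otimes R_\Psi)^\dagger=|1\rangle\langle0|_q\otimes R_\Psi^\dagger$. Unitarity and $\mathcal U_\Psi^2=I$ follow by direct multiplication, using $R_\Psi R_\Psi^\dagger=R_\Psi^\dagger R_\Psi=I$.

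\emph{Step 2: compute $\Pi\mathcal U_\Psi\Pi$.} Since $\langle+|0\rangle\langle1|+\rangle=\langle+|1\rangle\langle0|+\rangle=1/2$, compressing onto $|+\rangle_q$ gives $(R_\Psi+R_\Psi^\dagger)/2$. By Lemma~\ref{lem:lem2 D_psi}, projecting the auxiliary register onto $|0^n\rangle_A$ then gives $(D_\Psi+D_\Psi^\dagger)/2$. This equals $D_\Psi$ because the $u_i$ are real and $D_\Psi$ is diagonal. Hence
\[
\Pi\,\mathcal U_\Psi\,\Pi=D_\Psi\Pi .
\]

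\emph{Step 3: the invariant planes.} Set $|g_i\rangle:=|+\rangle_q|0^n\rangle_A|i\rangle_S$; these vectors form an orthonormal basis of $\mathrm{ran}\,\Pi$. Step 2 gives $\Pi\mathcal U_\Psi|g_i\rangle=u_i|g_i\rangle$, so we may write
\[
\mathcal U_\Psi|g_i\rangle=u_i|g_i\rangle+s_i|g_i^\perp\rangle,\qquad s_i=\sqrt{1-u_i^2},
\]
with $|g_i^\perp\rangle\in\mathrm{ran}(I-\Pi)$. The point is not to treat the ``leaked'' branch $|g_i^\perp\rangle$ as uncontrolled, in contrast with the naive $D_\Psi^2+BC$ computation.
\begin{itemize}
\item \emph{Closure under $\mathcal U_\Psi$.} Apply $\mathcal U_\Psi^2=I$ to obtain $s_i\,\mathcal U_\Psi|g_i^\perp\rangle=|g_i\rangle-u_i\mathcal U_\Psi|g_i\rangle$. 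Hence $\mathcal U_\Psi|g_i^\perp\rangle=s_i|g_i\rangle-u_i|g_i^\perp\rangle$ for $s_i\neq0$, so the plane $\mathcal H_i=\mathrm{span}\{|g_i\rangle,|g_i^\perp\rangle\}$ is invariant.
\item \emph{Closure under $Z_\Pi$.} The reflection $Z_\Pi$ acts on $\mathcal H_i$ as $\mathrm{diag}(1,-1)$, so it preserves the plane as well.
\item \emph{Orthogonality of different planes.} For $i\neq j$, unitarity and Hermiticity give $\langle g_j^\perp|g_i^\perp\rangle\propto\langle g_j|\mathcal U_\Psi^2|g_i\rangle-u_iu_j\delta_{ij}=0$.
\item \emph{Degenerate case.} When $|u_i|=1$ we have $s_i=0$, and $\mathcal H_i$ collapses to the one-dimensional invariant line spanned by $|g_i\rangle$.
\end{itemize}

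\emph{Step 4: the walk is a rotation.} Write $u_i=\cos\theta_i$. In the basis $\{|g_i\rangle,|g_i^\perp\rangle\}$ the two factors are
\[
\mathcal U_\Psi|_{\mathcal H_i}=\begin{pmatrix}u_i&s_i\\ s_i&-u_i\end{pmatrix},\qquad Z_\Pi|_{\mathcal H_i}=\begin{pmatrix}1&0\\0&-1\end{pmatrix}.
\]
Their product is
\[
W_\Psi|_{\mathcal H_i}=\begin{pmatrix}\cos\theta_i&\sin\theta_i\\-\sin\theta_i&\cos\theta_i\end{pmatrix},
\]
a rotation by $\theta_i$. Its $\ell$th power is a rotation by $\ell\theta_i$, whose $(1,1)$ entry is $\cos(\ell\theta_i)=T_\ell(u_i)$.

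\emph{Step 5: assemble.} Because the planes are invariant and mutually orthogonal, $\Pi W_\Psi^\ell|g_i\rangle=T_\ell(u_i)|g_i\rangle$ for every $i$. Extending linearly over the basis $\{|g_i\rangle\}$ of $\mathrm{ran}\,\Pi$ yields $\Pi W_\Psi^\ell\Pi=T_\ell(D_\Psi)\Pi$ for every $\ell\ge0$. The case $\ell=0$ is trivial.

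The main obstacle is Step 3: showing that the leaked components stay confined to a single plane indexed by $i$, which relies on $\mathcal U_\Psi^2=I$ together with Step 2. As a cross-check on that step, I would also verify the Chebyshev three-term recurrence directly, by induction using $W_\Psi+W_\Psi^{-1}$ restricted to each plane.
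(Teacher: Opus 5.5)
Your proposal is correct and follows the paper's proof essentially step for step. It uses the Hermitian unitary $\mathcal U_\Psi$ with $\mathcal U_\Psi^2=I$, the compression $\Pi\mathcal U_\Psi\Pi=D_\Psi\Pi$, the invariant plane spanned by $|g_i\rangle$ and the normalized leaked vector (the paper's $|b_i\rangle$), and the $2\times2$ rotation whose $\ell$th power has $(1,1)$ entry $\cos(\ell\theta_i)=T_\ell(u_i)$. Two minor differences: treating $|u_i|=1$ directly as an invariant line is cleaner than the paper's appeal to continuity, and your inter-plane orthogonality check is harmless but unnecessary, since $|g_i^\perp\rangle\in\mathrm{ran}(I-\Pi)$ already suffices for the final projection.
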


\begin{proof}
From ~\eqref{eq:E_hermitianization},
\begin{equation}
\mathcal U_\Psi^\dagger=\mathcal U_\Psi,
\qquad
\mathcal U_\Psi^2=I.
\label{eq:E5}
\end{equation}
Thus $\mathcal U_\Psi$ is both unitary and Hermitian. Projecting it onto the good auxiliary state gives
\begin{equation}
\begin{aligned}
&(\langle+|_q\langle0^n|_A\otimes I_S)
\mathcal U_\Psi
(|+\rangle_q|0^n\rangle_A\otimes I_S)\\
&\qquad=
\frac{1}{2}(D_\Psi+D_\Psi^\dagger)
=
D_\Psi .
\end{aligned}
\label{eq:E6}
\end{equation}
where the final equality follows because $D_\Psi$ is real diagonal. Hermitianization therefore changes the surrounding unitary while preserving the desired amplitude observable in the selected block.

We now examine one amplitude at a time. Fix an index $i$ and define the corresponding state in the good subspace as
\begin{equation}
|g_i\rangle:=|+\rangle_q|0^n\rangle_A|i\rangle_S .
\label{eq:E_good}
\end{equation}
Since $D_\Psi|i\rangle=u_i|i\rangle$, ~\eqref{eq:E6} implies $\langle g_i|\mathcal U_\Psi|g_i\rangle=u_i$. Define
\begin{equation}
s_i:=\sqrt{1-u_i^2},
\qquad
|b_i\rangle:=\frac{\mathcal U_\Psi|g_i\rangle-u_i|g_i\rangle}{s_i},
\label{eq:E_bad}
\end{equation}
when $s_i\neq0$. The state $|b_i\rangle$ is normalized, orthogonal to $|g_i\rangle$, and lies outside the good subspace. The two states therefore span the two-dimensional subspace explored by the walk for the amplitude $u_i$.

Using $\mathcal U_\Psi^2=I$ gives
\begin{equation}
\mathcal U_\Psi|b_i\rangle
=
s_i|g_i\rangle-u_i|b_i\rangle .
\label{eq:E10}
\end{equation}
Hence, in the ordered basis $\{|g_i\rangle,|b_i\rangle\}$,
\begin{equation}
\begin{aligned}
\mathcal U_\Psi&=
\begin{pmatrix}
u_i&s_i\\
s_i&-u_i
\end{pmatrix},
&
Z_\Pi&=
\begin{pmatrix}
1&0\\
0&-1
\end{pmatrix},\\[4pt]
W_\Psi&=
\begin{pmatrix}
u_i&s_i\\
-s_i&u_i
\end{pmatrix}.
\end{aligned}
\label{eq:E11}
\end{equation}

Because $|u_i|\leq1$, write $u_i=\cos\theta_i$ and $s_i=\sin\theta_i$, with $\theta_i=\arccos u_i$. The walk is therefore a rotation by $-\theta_i$ in this two-dimensional subspace, and after $\ell$ steps
\begin{equation}
W_\Psi^\ell=
\begin{pmatrix}
\cos(\ell\theta_i)&\sin(\ell\theta_i)\\
-\sin(\ell\theta_i)&\cos(\ell\theta_i)
\end{pmatrix}.
\label{eq:E13}
\end{equation}
Projecting back onto the good subspace selects the upper-left matrix element,
\begin{equation}
\langle g_i|W_\Psi^\ell|g_i\rangle
=
\cos(\ell\theta_i)
=
T_\ell(\cos\theta_i)
=
T_\ell(u_i).
\label{eq:E14}
\end{equation}
The limiting cases $u_i=\pm1$, for which $s_i=0$, follow directly by continuity. Since the basis states $|i\rangle$ span the signal register, ~\eqref{eq:E4} follows.
\end{proof}

\eqref{eq:E14} explains the origin of the Chebyshev transformation. Each signal amplitude determines a rotation angle through $u_i=\cos\theta_i$. Repeating the walk $\ell$ times multiplies this angle by $\ell$, and projection onto the good subspace returns $\cos(\ell\theta_i)=T_\ell(u_i)$. Thus the walk converts the hidden amplitude eigenvalues into the complete Chebyshev basis $T_0(D_\Psi),T_1(D_\Psi),\ldots,T_d(D_\Psi)$ required by the polynomial constructed in Appendix~\ref{sec:polynomial_filter}. The coefficient-weighted combination of these basis operators into $p_d(D_\Psi)$ is carried out in Appendix~\ref{sec:theorem_proof}.

\section{Coefficient synthesis, success probability, and proof of Theorem 1}
\label{sec:theorem_proof}

Appendix~\ref{sec:polynomial_filter} constructs the bounded finite-width polynomial $p_d(x)=\sum_{\ell=0}^{d}c_\ell T_\ell(x)$, while Appendix~\ref{sec:chebyshev_walk} constructs the walk satisfying $\Pi W_\Psi^\ell\Pi=T_\ell(D_\Psi)\Pi$. It remains to combine these walk powers with the coefficients $c_\ell$, determine the probability of obtaining the transformed state, and combine this cost with candidate recovery and final certification. Throughout this section we write $L:=\log_2N$ and $C_d:=\sum_{\ell=0}^{d}|c_\ell|$.

\subsection{Coefficient-weighted synthesis}
\label{subsec:coefficient_synthesis}

Introduce a history register $H$ with $\lceil\log_2(d+1)\rceil$ qubits whose basis state $|\ell\rangle_H$ labels the Chebyshev degree $\ell$. Let $P_c$ prepare the normalized coefficient state
\begin{equation}
P_c|0\rangle_H=\frac{1}{\sqrt{C_d}}\sum_{\ell=0}^{d}\sqrt{|c_\ell|}\,|\ell\rangle_H,
\label{eq:F1}
\end{equation}
and let $S_c$ apply the sign of each nonzero coefficient, $S_c|\ell\rangle_H=\operatorname{sgn}(c_\ell)|\ell\rangle_H$. Starting from $|0\rangle_H|+\rangle_q|0^n\rangle_A|\Psi\rangle_S$, these operations prepare
\begin{equation}
\frac{1}{\sqrt{C_d}}\sum_{\ell=0}^{d}\operatorname{sgn}(c_\ell)\sqrt{|c_\ell|}\,|\ell\rangle_H|+\rangle_q|0^n\rangle_A|\Psi\rangle_S .
\label{eq:F2}
\end{equation}
The history value then controls the corresponding walk power through
\begin{equation}
\operatorname{SELECT}(W_\Psi):=\sum_{\ell=0}^{d}|\ell\rangle\langle\ell|_H\otimes W_\Psi^\ell ,
\label{eq:F3}
\end{equation}
giving
\begin{equation}
\frac{1}{\sqrt{C_d}}\sum_{\ell=0}^{d}\operatorname{sgn}(c_\ell)\sqrt{|c_\ell|}\,|\ell\rangle_H W_\Psi^\ell|+\rangle_q|0^n\rangle_A|\Psi\rangle_S .
\label{eq:F4}
\end{equation}
Since $\langle0|_H P_c^\dagger|\ell\rangle_H=\sqrt{|c_\ell|/C_d}$, unpreparing the history register and projecting it onto $|0\rangle_H$ supplies the second factor $\sqrt{|c_\ell|/C_d}$. The product of the two factors, together with the sign phase, is therefore $c_\ell/C_d$. The resulting branch is
\begin{equation}
\frac{1}{C_d}\sum_{\ell=0}^{d}c_\ell W_\Psi^\ell|+\rangle_q|0^n\rangle_A|\Psi\rangle_S .
\label{eq:F5}
\end{equation}
Finally projecting the walk ancillas onto the good subspace $\Pi$ and using the result of Appendix~\ref{sec:chebyshev_walk} gives
\begin{equation}
\frac{1}{C_d}\sum_{\ell=0}^{d}c_\ell T_\ell(D_\Psi)|\Psi\rangle=\frac{1}{C_d}p_d(D_\Psi)|\Psi\rangle .
\label{eq:F6}
\end{equation}
Thus the coefficient-weighted walk realizes exactly the nonlinear amplitude transformation stated in Lemma 1.

The coefficient normalization $C_d$ determines the amplitude penalty introduced by this coherent sum. A useful bound follows directly from the boundedness of the polynomial. Writing $x=\cos\theta$ gives $p_d(\cos\theta)=c_0+\sum_{\ell=1}^{d}c_\ell\cos(\ell\theta)$. Since $|p_d(x)|\leq1$ on $[-1,1]$, orthogonality of the cosine functions gives
\begin{equation}
c_0^2+\frac{1}{2}\sum_{\ell=1}^{d}c_\ell^2=\frac{1}{\pi}\int_0^\pi |p_d(\cos\theta)|^2\,d\theta\leq1.
\label{eq:F7}
\end{equation}
Therefore, by the Cauchy--Schwarz inequality,
\begin{equation}
C_d=\sum_{\ell=0}^{d}|c_\ell|\leq1+\sqrt{2d}=O(\sqrt d).
\label{eq:F8}
\end{equation}
Since $d$ is polylogarithmic in $N$, this bound alone guarantees that $C_d$ is also polylogarithmic. Particular damped Chebyshev realizations can have the stronger $C_d=O(\log d)$ behavior used in the numerical implementation.

\subsection*{B. Retained mass and success probability}

\ssb{Define
\begin{equation}
P_e
:=
\bigl\|p_d(D_\Psi)|\Psi\rangle\bigr\|_2^2
=
\sum_i |p_d(u_i)|^2|u_i|^2 .
\end{equation}
Equation~(47) therefore gives
$p_{\rm succ}=P_e/C_d^2$.

Let $P_R:=\sum_{i\in R_\Gamma}|u_i|^2$. For every
$i\in R_\Gamma$, the filter satisfies
$|p_d(u_i)|\geq1-\eta_{\rm filt}$, and hence
\begin{equation}
P_e
\geq
(1-\eta_{\rm filt})^2P_R.
\end{equation}}

\ssb{The fraction of the normalized filtered state lying in the
candidate set is
$q_\Gamma:=
P_e^{-1}\sum_{i\in C_\Gamma}|p_d(u_i)|^2|u_i|^2$.
Since $|p_d(u_i)|\leq\eta_{\rm filt}$ outside $C_\Gamma$,
\begin{equation}
1-q_\Gamma
\leq
\frac{\eta_{\rm filt}^2}{P_e}
\leq
\frac{\eta_{\rm filt}^2}
{(1-\eta_{\rm filt})^2P_R}.
\end{equation}
Thus $P_R$ controls the cost of obtaining the retained branch,
while $\eta_{\rm filt}^2/P_R$ controls the normalized leakage
outside the candidate set.}

\subsection*{C. Cost of preparing a filtered candidate state}

\ssb{One implementation of $\mathrm{SELECT}(W_\Psi)$ requires at
most $d$ elementary applications of the Chebyshev walk. Without
amplitude amplification,
\begin{equation}
Q_{\rm QG1}
=
O\!\left[
\frac{dC_d^2}{P_e}
\log\frac{1}{\delta_{\rm post}}
\right].
\end{equation}
With amplitude amplification,
\begin{equation}
\begin{aligned}
Q_{\rm QG1}^{\rm AA}
&=
O\!\left[
\frac{dC_d}{\sqrt{P_e}}
\log\frac{1}{\delta_{\rm post}}
\right]\\
&=
O\!\left[
\frac{dC_d}
{(1-\eta_{\rm filt})\sqrt{P_R}}
\log\frac{1}{\delta_{\rm post}}
\right].
\end{aligned}
\end{equation}}

\ssb{The square-root dependence on the successful-branch probability
is optimal in the black-box amplification model. Since
$p_{\rm succ}=P_e/C_d^2$, amplification to constant success
probability requires $\Omega(C_d/\sqrt{P_e})$ coherent uses in
general. The amplitude-amplified construction therefore attains
the optimal dependence on the postselection amplitude, up to the
confidence factor.}

\subsection*{D. Proof of Theorem 1 and polylogarithmic specialization}

\ssb{Appendix~\ref{sec:polynomial_filter} gives
$d=O[\Gamma^{-1}\log(1/\eta_{\rm filt})]$, while
Appendix~\ref{sec:chebyshev_walk} and Sec.~\ref{subsec:coefficient_synthesis} realize the coherent branch
$C_d^{-1}p_d(D_\Psi)|\Psi\rangle$. The preceding bounds give
$P_e\geq(1-\eta_{\rm filt})^2P_R$ and hence the per-state
preparation cost stated in Theorem~1.}

\ssb{Let $K:=|C_\Gamma|$. Appendix~\ref{app:global_completeness} proves that
\begin{equation}
M_{\rm glob}
=
O\!\left[
\frac{K}{(1-\eta_{\rm filt})^2}
\left(
1+\frac{\eta_{\rm filt}^2}{P_R}
\right)
\log\frac{1}{\delta_{\rm glob}}
\right]
\end{equation}
successful filtered states suffice to recover a global-extremum
address.}

\ssb{Multiplying this by the cost of one successful filtered state
and adding threshold calibration and final candidate
certification gives
\begin{multline}
Q_{\rm ext}^{\rm AA}
=
Q_{\rm cal}
+
\widetilde O\!\Biggl[
K\left(1+\frac{\eta_{\rm filt}^2}{P_R}\right)\\
\times\Biggl(
\frac{dC_d}{(1-\eta_{\rm filt})^3\sqrt{P_R}}\\
+
\frac{1}{(1-\eta_{\rm filt})^2\epsilon_{\rm QAE}}
\Biggr)
\Biggr].
\end{multline}
This proves the general complexity statement of Theorem~1.}

\ssb{For the structured family used below, let
$L:=\log_2N$, $\Gamma_N:=L^{-3/2}$, and
$P_R\geq c_PL^{-a}$ for fixed $a>0$.
Choosing
$\lambda\geq(\kappa_{\rm cand}L)^{-1/2}$ implies
$K\leq\kappa_{\rm cand}L$ by normalization.
If additionally
$\eta_{\rm filt}^2\leq\xi P_R$ with $0<\xi<1$, then
$M_{\rm glob}=O[L\log(1/\delta_{\rm glob})]$ and
\begin{equation}
Q_{\rm ext}^{\rm AA}
=
Q_{\rm cal}
+
\widetilde O\!\left[
C_dL^{(a+5)/2}
+
\frac{L}{\epsilon_{\rm QAE}}
\right].
\end{equation}
The numerical study uses the representative case $a=1$.}

\ssb{Finally, choosing the calibration, postselection,
global-recovery, and certification failure probabilities so that
$\delta_{\rm cal}+\delta_{\rm post}+\delta_{\rm glob}
+\delta_{\rm cert}\leq\delta$ gives the stated overall failure
probability by the union bound.}

\section{Quantum Gadget 2: local contrast, fuzzy search, and greedy sort}
\label{app:qg2}

QG2 contains two independent optional refinements. QG2a applies a local-excess transformation before QG1 so that amplitudes exceptional relative to their neighborhood can become more prominent. QG2b acts after candidate-state preparation and combines a coarse binary search with a greedy sorting of the binary partitions used to expose the extremal address. Neither refinement is required for Theorem~1. The unmodified input and explicit candidate enumeration remain rigorous fallbacks.

\subsection{Local-excess operator}
\label{subsec:qg2_local_excess}

Let $S|i\rangle=|i+1\pmod N\rangle$ denote the cyclic shift. For any positive integer $m$, define
\begin{equation}
B_m
:=
\frac{1}{2m}
\sum_{\ell=1}^{m}
(S^\ell+S^{-\ell}),
\qquad
A_m:=I-B_m .
\label{eq:G1}
\end{equation}
The operator $B_m$ forms the average of the $m$ neighboring samples on either side, excluding the point itself while $A_m$ subtracts this local background. The integer $m$ therefore controls the spatial scale of the comparison.

For a constant signal $u_i=c$, one has $B_mu=u$ and hence $A_mu=0$. For a smooth signal sampled with spacing $\Delta x$, Taylor expansion gives
\begin{equation}
\begin{aligned}
u(x+\ell\Delta x)+u(x-\ell\Delta x)
&=
2u(x)+\ell^2\Delta x^2u''(x)\\
&\quad+
O(\ell^4\Delta x^4).
\end{aligned}
\label{eq:G2}
\end{equation}
Substituting this expression into \eqref{eq:G1} and using $\sum_{\ell=1}^{m}\ell^2=m(m+1)(2m+1)/6$ gives
\begin{equation}
(A_mu)(x)
=
-\frac{(m+1)(2m+1)}{12}
\Delta x^2u''(x)
+
O(m^4\Delta x^4).
\label{eq:G3}
\end{equation}
Thus a locally smooth background is suppressed to leading order, while sufficiently localized curvature, peaks, and troughs remain.

The same behavior is visible in Fourier space. For a mode $u_j=e^{ikj}$,
\begin{equation}
B_m(k)
=
\frac{1}{m}
\sum_{\ell=1}^{m}\cos(k\ell),
\qquad
A_m(k)
=
1-\frac{1}{m}
\sum_{\ell=1}^{m}\cos(k\ell).
\label{eq:G4}
\end{equation}
Using $\cos(k\ell)=1-k^2\ell^2/2+O(k^4\ell^4)$ gives
\begin{equation}
A_m(k)
=
\frac{(m+1)(2m+1)}{12}k^2
+
O(m^4k^4).
\label{eq:G5}
\end{equation}
Hence $A_m$ behaves as a local-excess or high-pass transformation at small wave number. There is no universally preferred value of $m$. Different signal structures may favor different neighborhood scales.

A position-dependent local background can change the computational objective. QG2a is therefore optional when the desired quantity is the extremum of the original signal itself. The unmodified preprocessing choice is represented by
\begin{equation}
A_0:=I .
\label{eq:G6}
\end{equation}

\subsection{LCU implementation and combined useful mass}
\label{app:qg2_lcu}

For $m\geq1$,
\begin{equation}
A_m
=
I
-
\frac{1}{2m}\sum_{\ell=1}^{m}S^\ell
-
\frac{1}{2m}\sum_{\ell=1}^{m}S^{-\ell}.
\label{eq:G7}
\end{equation}
The sum of the absolute LCU coefficients is therefore
\begin{equation}
C_{A,m}
=
1+\frac{m}{2m}+\frac{m}{2m}
=
2,
\qquad
m\geq1,
\label{eq:G8}
\end{equation}
while $C_{A,0}=1$.

Define the residual energy
\begin{equation}
E_m
:=
\|A_m|\Psi\rangle\|_2^2 .
\label{eq:G9}
\end{equation}
The unnormalized LCU branch $A_m|\Psi\rangle/C_{A,m}$ occurs with probability $E_m/C_{A,m}^2$. Conditional on this branch, the normalized residual state is
\begin{equation}
|\Psi_m\rangle
=
\frac{A_m|\Psi\rangle}{\sqrt{E_m}} .
\label{eq:G10}
\end{equation}
For $m=0$, $E_0=1$ and $|\Psi_0\rangle=|\Psi\rangle$.

Let
\begin{equation}
D_m
:=
\sum_i
\langle i|\Psi_m\rangle
|i\rangle\langle i|
\label{eq:G11}
\end{equation}
denote the amplitude observable of the normalized residual state, and define the QG1 retained mass
\begin{equation}
F_m
:=
\|p_{d_m}(D_m)|\Psi_m\rangle\|_2^2 .
\label{eq:G12}
\end{equation}
The total useful mass relative to the original input is then
\begin{equation}
P_m
:=
E_mF_m
=
\|p_{d_m}(D_m)A_m|\Psi\rangle\|_2^2 .
\label{eq:G13}
\end{equation}

The distinction between $F_m$ and $P_m$ is essential. Normalization can make a very small residual appear sharply concentrated. If $E_m$ is small, however, obtaining that normalized residual is itself expensive. The product $P_m$ accounts simultaneously for residual-state preparation and subsequent QG1 filtering.

Combining amplitude amplification for the residual preparation and QG1 gives the query cost of one successful candidate state,
\begin{equation}
Q_m^{\rm AA}
=
O\!\left[
\frac{C_{A,m}d_mC_{d_m}}{\sqrt{P_m}}
\log\!\left(\frac{1}{\delta_m}\right)
\right].
\label{eq:G14}
\end{equation}
Here $d_m$ is the QG1 polynomial degree for the normalized residual, $C_{d_m}$ is its Chebyshev coefficient normalization, and $\delta_m$ is the allowed failure probability for preparing the successful candidate state.

The width $m$ does not appear linearly in \eqref{eq:G14}. The shifts $S^\ell$ can be indexed coherently and implemented by modular arithmetic on the $O(\log N)$-qubit address register. We therefore restrict attention to admissible values of $m$ for which the corresponding state preparation, index preparation, and controlled shifts have polylogarithmic gate overhead. Any additional implementation-dependent cost must be included when this condition is not satisfied.

\subsection{Proof of Proposition 2: guarded local-excess preprocessing}
\label{app:qg2_selection}

The neighborhood width is treated as a calibration parameter rather than a fixed constant. Let $\mathcal M$ be a prescribed admissible set satisfying
\begin{equation}
0\in\mathcal M,
\qquad
|\mathcal M|=\polylog{N}
\label{eq:G15}
\end{equation}
at most. A fixed finite family is included as a special case.

For each $m\in\mathcal M$, QG1 uses the same designed transition-width scaling $\Gamma_N=(\log_2N)^{-3/2}$. Hence
\begin{equation}
d_m
=
O\!\left[
(\log_2N)^{3/2}
\log\!\left(
\frac{1}{\eta_{{\rm filt},m}}
\right)
\right].
\label{eq:G16}
\end{equation}
The quantities $C_{A,m}$ and $C_{d_m}$ are known from the chosen constructions, while the branch probabilities entering $P_m$ are estimated during the one-time calibration stage. The predicted production cost \eqref{eq:G14} can therefore be evaluated for each admissible width.

The production width is selected according to
\begin{equation}
m^\star
\in
\arg\min_{m\in\mathcal M}
Q_m^{\rm AA}.
\label{eq:G17}
\end{equation}
Since $0\in\mathcal M$ represents the original unmodified signal, the selected preprocessing cannot be worse than the verified raw-QG1 choice,
\begin{equation}
Q_{m^\star}^{\rm AA}
\leq
Q_0^{\rm AA}.
\label{eq:G80}
\end{equation}
A nonzero local-excess width is therefore accepted only if the increased candidate contrast and useful filtered mass compensate for the residual-state preparation cost.

If the selected width satisfies
\begin{equation}
P_{m^\star}
=
\Omega\!\left(
(\log_2N)^{-1}
\right),
\label{eq:G19}
\end{equation}
then combining \eqref{eq:G14}, \eqref{eq:G16}, and \eqref{eq:G19} gives
\begin{equation}
Q_{m^\star}^{\rm AA}
=
\widetilde O\!\left[
C_{d_{m^\star}}(\log_2N)^2
\right].
\label{eq:G20}
\end{equation}
Testing the admissible widths requires at most $\polylog{N}$ additional calibration work because $|\mathcal M|=\polylog{N}$ at most. This work is included in the one-time calibration term $Q_{\rm cal}$. This proves Proposition~2.

\subsection{Proof of Proposition 3: fuzzy search and greedy sort}
\label{app:qg2_fuzzy_greedy}

QG2b begins after a successful candidate state has been prepared. Its purpose is to expose the address of a dominant candidate without estimating and sorting all candidate values. The procedure has two complementary parts.

The \emph{fuzzy search} traverses the $L:=\log_2N$ levels of the binary address tree. At each level it asks only which of the two child blocks carries more candidate probability. This is a coarse binary decision and does not attempt to resolve the exact values of the amplitudes inside either child. The \emph{greedy sort} records the favored orientation at each level and progressively arranges the corresponding binary partitions toward a terminal address. It therefore sorts only the sequence of partitions needed for localization rather than globally sorting all $N$ amplitudes.

Draw $M$ independent successful candidate states and measure their complete computational-basis addresses. A measurement of one complete address contains all $L$ address bits. The resulting data can therefore be viewed equivalently as supplying the outcome of each of the $L$ coarse prefix measurements required by the fuzzy search.

For $r=0,\ldots,L-1$, let $B_r$ denote the depth-$r$ prefix block on the path containing the true extremum $i^\star$, and define
\begin{equation}
P_r
:=
\Pr[i\in B_r].
\label{eq:G21}
\end{equation}
Let $B_{r+1}^\star$ denote the child of $B_r$ that contains $i^\star$. Conditioned on a sample belonging to $B_r$, define the child bias $\zeta_r$ through
\begin{equation}
\Pr[
i\in B_{r+1}^\star
\mid
i\in B_r
]
=
\frac{1+\zeta_r}{2}.
\label{eq:G22}
\end{equation}
Thus $\zeta_r=0$ corresponds to equally weighted children, while $\zeta_r>0$ means that the child containing the true extremum is favored. Proposition~3 assumes
\begin{equation}
P_r\geq p_0,
\qquad
\zeta_r\geq\zeta_0,
\qquad
r=0,\ldots,L-1 .
\label{eq:G23}
\end{equation}

Let $N_r$ denote the number of the $M$ measured addresses that fall inside $B_r$. Since $\mathbb E[N_r]=MP_r$, a multiplicative Chernoff bound gives
\begin{equation}
\Pr\!\left[
N_r<\frac{MP_r}{2}
\right]
\leq
\exp\!\left(
-\frac{MP_r}{8}
\right)
\leq
\exp\!\left(
-\frac{Mp_0}{8}
\right).
\label{eq:G24}
\end{equation}
Hence, with high probability, sufficiently many of the original measurements remain available when the fuzzy search reaches any prefix on the correct path.

Conditioned on $N_r\geq MP_r/2$, the $N_r$ samples inside $B_r$ form binary observations of its two child blocks. The child containing $i^\star$ has probability at least $(1+\zeta_0)/2$. Hoeffding's inequality therefore gives
\begin{equation}
\begin{aligned}
&\Pr[\text{wrong child selected at level }r]\\
&\qquad\leq
\exp\!\left(-\frac{N_r\zeta_0^2}{2}\right)\\
&\qquad\leq
\exp\!\left(-\frac{Mp_0\zeta_0^2}{4}\right).
\end{aligned}
\label{eq:G25}
\end{equation}

At every level $r$, the statistically favored child is moved to a fixed branch by a reversible controlled swap $G_r$. After $L=\log_2N$ levels, the accumulated transformation is
\begin{equation}
G_{\rm gr}
=
G_{L-1}\cdots G_1G_0 .
\label{eq:G26}
\end{equation}
The fuzzy search is therefore carried out in a progressively reordered address space. Let $j^\star$ denote the terminal address obtained after the last binary decision. Because all sorting operations are reversible permutations, the original address of the candidate is recovered by preparing the basis state $|j^\star\rangle$ and applying the inverse sequence,
\begin{equation}
G_{\rm gr}^{\dagger}|j^\star\rangle
=
G_0^{\dagger}G_1^{\dagger}\cdots G_{L-1}^{\dagger}|j^\star\rangle
=
|i^\star\rangle .
\label{eq:G27_retrace}
\end{equation}
Thus no information about the original position is lost during the greedy sorting. The sorting circuit is used only to simplify localization, and its inverse exactly restores the physical index after the extremal location has been identified. The sorting operations therefore require only the recorded binary orientations and do not require numerical sorting of all signal amplitudes.

Choosing
\begin{equation}
M
\geq
\frac{c_{\rm fg}}{p_0\zeta_0^2}
\log\!\left(
\frac{2L}{\delta_{\rm fg}}
\right)
\label{eq:G27}
\end{equation}
with a sufficiently large absolute constant makes the combined insufficient-count and wrong-child failure probability at each level at most $\delta_{\rm fg}/L$. A union bound over the $L$ fuzzy-search decisions then gives total failure probability at most $\delta_{\rm fg}$.

An important point is that \eqref{eq:G27} is the total number of successful candidate-state measurements and not a separate sample count at each level. Once a complete address has been measured, the same address can be tested against every prefix $B_r$. Thus the same collection of $M$ addresses is reused throughout the binary tree. The depth $L=\log_2N$ therefore contributes only through the confidence factor $\log(L/\delta_{\rm fg})$, rather than as an additional multiplicative factor of $L$ in the number of quantum-state preparations.

Writing the required number of successful candidate states as $M_{\rm fg}$, the total query complexity of QG2b is
\begin{equation}
Q_{\rm fg}^{\rm AA}
=
M_{\rm fg}
Q_{m^\star}^{\rm AA}.
\label{eq:G28}
\end{equation}
For fixed hierarchy parameters $p_0$ and $\zeta_0$,
\begin{equation}
M_{\rm fg}
=
O\!\left[
\log\!\left(
\frac{\log_2N}{\delta_{\rm fg}}
\right)
\right].
\label{eq:G29}
\end{equation}
This proves Proposition~3.

The hierarchy condition in \eqref{eq:G23} is an additional property of the filtered candidate distribution and is not required by Theorem~1 of the Main Text. The result above should therefore be interpreted as a statistical guarantee for an optional localization and address-retracing procedure, not as an independent asymptotic quantum-speedup claim. If the hierarchy condition cannot be established, QG2b is not invoked and the explicit candidate-enumeration procedure of Appendix~\ref{sec:Candidates,enum,cert,sign,cases} remains the rigorous fallback.

\section{Candidate enumeration, certification, sign estimation and general extremum cases}
\label{sec:Candidates,enum,cert,sign,cases}

QG1 deliberately stops at a candidate isolation stage.  That state is already compressed onto a short tail, but the short tail sample has not yet delivered a certified extremum. Therefore one must recover the surviving addresses, compare their values at sufficient precision, and for signed objectives, recover the signs.  This appendix closes that final loop and also distinguishes two different notions of ``minimum'' that should not be conflated -- the most negative value and the smallest magnitude.

\subsection{Enumeration fallback}
\label{subsec:enuemration fallback}
Let $C$ be the active candidate set in the normalized filtered distribution and suppose
\begin{equation}
|C|=K\le\kappa_{\rm cand}L,
\qquad
q_i\ge q_{\min}>0
\quad\text{for every }i\in C.
\label{eq:H1}
\end{equation}
After $M$ successful filtered-state measurements, candidate $i$ is missed with probability
\begin{equation}
(1-q_i)^M\le e^{-Mq_{\min}}.
\label{eq:H2}
\end{equation}
A union bound over $K$ candidates gives
\begin{equation}
\Pr[\text{some candidate missed}]
\le
Ke^{-Mq_{\min}}.
\label{eq:H3}
\end{equation}
Thus
\begin{equation}
M_{\rm enum}
\ge
\frac1{q_{\min}}\log\frac{K}{\delta_{\rm enum}}
\label{eq:H4}
\end{equation}
is sufficient.  If $q_{\min}=\Omega(L^{-1})$, then $M_{\rm enum}=O[L\log(L/\delta_{\rm enum})]$.  This is the rigorous fallback when the stronger binary-hierarchy promise needed by fuzzy-greedy localization is unavailable.

\subsection{Value certification}
\label{subsec:value certification}

For a candidate $i$, the projector $\Pi_i=\ket{i}\!\bra{i}$ has success probability
\begin{equation}
p_i=|u_i|^2.
\label{eq:H5}
\end{equation}
Iterative amplitude estimation estimates $p_i$ to additive error $\epsilon$ with
\begin{equation}
O\!\left(\epsilon^{-1}\log\delta_i^{-1}\right)
\label{eq:H6}
\end{equation}
controlled uses of $U_S$ and $U_S^\dagger$ \cite{brassard2002quantum,grinko2021iterative}.  If
\begin{equation}
\delta_p=p_{(1)}-p_{(2)}>0,
\label{eq:H7}
\end{equation}
choosing $\epsilon=\delta_p/4$ preserves the ordering of the two largest probabilities.  Assigning $\delta_i=\delta_{\rm cert}/K$ and using a union bound gives
\begin{equation}
Q_{\rm cert}^{\rm QAE}
=
O\!\left[
\frac{K}{\delta_p}
\log\!\left(\frac{K}{\delta_{\rm cert}}\right)
\right].
\label{eq:H8}
\end{equation}
If a calibrated sensor or value oracle returns a dimensionless physical value $\bar g_i=g_i/G$, replace $\delta_p$ by
\begin{equation}
\delta_g
=
|\bar g_{(1)}-\bar g_{(2)}|.
\label{eq:H9}
\end{equation}
Exact digital values require only $K-1=O(L)$ comparisons.  The decisive separation therefore enters only after candidate compression; it is not the transition width used to construct QG1.

\subsection{Sign estimation after candidate localization}
\label{subsec:sign estimation}
Computational-basis measurement and amplitude estimation of the singleton projector $\Pi_i=\ket{i}\!\bra{i}$ return $|u_i|^2$ and therefore do not reveal the sign of a real amplitude.  The principal access model does not assume a sign oracle.  Sign information is recovered only after QG1/QG2 have reduced the search to $K=O(L)$ candidates.

For a known candidate index $i$, let $U_i\ket{0^n}=\ket{i}$ be the basis-state loader, implemented by at most $n$ Pauli-$X$ gates.  Using controlled $U_i$ and controlled $U_S$, prepare
\begin{equation}
\ket{\Omega_i}
=
\frac1{\sqrt2}\left(
\ket{0}\ket{i}+\ket{1}\ket{\Psi}
\right).
\label{eq:H40}
\end{equation}
After a Hadamard gate on the ancilla,
\begin{equation}
\ket{\Omega_i'}
=
\frac12\ket{0}(\ket{i}+\ket{\Psi})
+
\frac12\ket{1}(\ket{i}-\ket{\Psi}).
\label{eq:H41}
\end{equation}
Since the amplitudes are real,
\begin{equation}
p_{i,0}
:=
\Pr[a=0]
=
\frac14\norm{\ket{i}+\ket{\Psi}}_2^2
=
\frac{1+u_i}{2},
\label{eq:H42}
\end{equation}
\begin{equation}
p_{i,1}
:=
\Pr[a=1]
=
\frac{1-u_i}{2}.
\label{eq:H43}
\end{equation}
Therefore
\begin{equation}
u_i
=
p_{i,0}-p_{i,1}
=
2p_{i,0}-1.
\label{eq:H44}
\end{equation}
Direct sampling estimates $u_i$ to additive error $\epsilon_{\rm sgn}$ with
\begin{equation}
O\!\left(
\epsilon_{\rm sgn}^{-2}\log\delta_i^{-1}
\right)
\label{eq:H45}
\end{equation}
preparations, whereas iterative amplitude estimation gives
\begin{equation}
O\!\left(
\epsilon_{\rm sgn}^{-1}\log\delta_i^{-1}
\right)
\label{eq:H46}
\end{equation}
controlled preparations.  Assigning $\delta_i=\delta_{\rm sgn}/K$ and applying a union bound gives
\begin{equation}
Q_{\rm sign}
=
O\!\left[
\frac{K}{\epsilon_{\rm sgn}}
\log\!\left(\frac{K}{\delta_{\rm sgn}}\right)
\right]
\label{eq:H47}
\end{equation}
with amplitude estimation.  Hence $K=O(L)$ and $\epsilon_{\rm sgn}^{-1}=\polylog(N)$ preserve polylogarithmic scaling.  To decide only the sign of a candidate promised to satisfy $|u_i|\ge\gamma_{\rm sgn}$, it is sufficient to choose $\epsilon_{\rm sgn}<\gamma_{\rm sgn}/2$.  To rank signed candidates separated by $\delta_u$, it is sufficient to choose $\epsilon_{\rm sgn}<\delta_u/4$.

The interference experiment fixes a phase convention through the controlled preparation circuit $U_S$.  With copies of an unknown state alone, $\ket{\Psi}$ and $-\ket{\Psi}$ are physically indistinguishable because they differ only by a global phase.  Controlled access to a fixed implementation of $U_S$ supplies the phase reference that makes \eqref{eq:H44} operational.

A stronger optional access model can separate the two signs globally.  If a phase-aligned magnitude-state circuit is also available,
\begin{equation}
U_{|u|}\ket{0^n}
=
\ket{|\Psi|}
:=
\sum_i|u_i|\ket{i},
\label{eq:H48}
\end{equation}
then controlled preparation and one Hadamard gate give
\begin{equation}
\frac12\ket{0}(\ket{|\Psi|}+\ket{\Psi})
+
\frac12\ket{1}(\ket{|\Psi|}-\ket{\Psi}).
\label{eq:H49}
\end{equation}
The $a=0$ branch has amplitudes $u_i$ on positive entries and zero on negative entries, while the $a=1$ branch has amplitudes $|u_i|$ on negative entries and zero on positive entries.  Their success probabilities are exactly
\begin{equation}
P_+
=
\sum_{u_i>0}u_i^2,
\qquad
P_-
=
\sum_{u_i<0}u_i^2.
\label{eq:H50}
\end{equation}
This provides constant overhead when the selected sign mass is $\Omega(1)$, or inverse-polylogarithmic overhead when that mass is inverse polylogarithmic.  Because the principal access model does not assume $U_{|u|}$, the candidate-first interference test in ~\eqref{eq:H40}--\eqref{eq:H47} remains the default certification procedure.

\subsection{Four extrema objectives}
\label{app:four_extrema}

The main theorem is stated for the largest-magnitude extremum. The same construction also gives the largest positive value and the most negative value by selecting the corresponding tail of the amplitude distribution. The smallest-magnitude problem requires one additional preprocessing step because a small amplitude is not itself an extreme tail event.
\ssb{In every case, the relevant candidate tail is governed by the
same structural parameters as Theorem~1: its candidate count,
retained probability mass, transition width, and final
certification precision. Polylogarithmic complexity follows when
these parameters have the corresponding polylogarithmic scaling
with $N$.}

\begin{corollary}[Largest absolute value]
\label{cor:max_abs}
Let
\begin{equation}
i_{\rm abs}\in\arg\max_{i\in[N]}|u_i|.
\label{eq:ext_abs}
\end{equation}
Under the hypotheses of Theorem 1 of \textit{Main Text}, $i_{\rm abs}$ is obtained with the complexity stated there.
\end{corollary}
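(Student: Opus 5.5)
The corollary is essentially a restatement of Theorem~1. The plan is to check that the canonical problem of Theorem~1 is exactly the largest-absolute-value objective, and that every stage of the pipeline delivers an element of $\arg\max_i|u_i|$ within the stated budget.

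\textbf{Filtering stage.} I would use the even two-sided magnitude filter $p_{\rm abs}$ of Corollary~\ref{cor:two sided filter}, applied with one-sided accuracy $\eta_{\rm filt}/2$. It has degree $d=O(\Gamma^{-1}\log\eta_{\rm filt}^{-1})$ and satisfies $|p_{\rm abs}(x)|\le\eta_{\rm filt}$ for $|x|\le\lambda$ and $|1-p_{\rm abs}(x)|\le\eta_{\rm filt}$ for $|x|\ge\lambda+\Gamma$. Because the filter is even, the sets $C_\Gamma$ and $R_\Gamma$ are defined through $|u_i|$, exactly as in Theorem~1.

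Feeding this polynomial into the Chebyshev-walk construction (Lemma~\ref{lem:qg1}, i.e.\ Lemmas~\ref{lem:lem2 D_psi} and~\ref{lem:lem3 all degree walk} of the Appendix) yields the branch $C_d^{-1}p_{\rm abs}(D_\Psi)|\Psi\rangle$. Its mass satisfies $P_e\ge(1-\eta_{\rm filt})^2P_R$. Amplitude amplification then costs $O(dC_d/\sqrt{P_R}\,\log\delta_{\rm post}^{-1})$ per filtered state.

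\textbf{Completeness stage.} Next I argue that $i_{\rm abs}\in R_\Gamma$. Since $P_R>0$, some index has $|u_i|\ge\lambda+\Gamma$. The maximiser $i_{\rm abs}$ has magnitude at least as large, so it also lies in $R_\Gamma$.

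Lemma~\ref{lem:global_completeness} then gives $q_\star\ge(1-\eta_{\rm filt})^2/[K(1+\eta_{\rm filt}^2/P_R)]$. Under the leakage hypothesis $\eta_{\rm filt}^2\le\xi P_R$, this becomes $q_\star=\Omega(1/K)$. Hence $M_{\rm glob}=O(K\log\delta_{\rm glob}^{-1})$ successful measurements contain $i_{\rm abs}$ with probability at least $1-\delta_{\rm glob}$.

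\textbf{Certification stage.} On the recovered set $\widehat C$, with $|\widehat C|\le K$, I apply the value certification of Appendix~\ref{subsec:value certification}. QAE estimates each $p_i=|u_i|^2$ to precision $\epsilon_{\rm QAE}<\Delta_p/4$, at total cost $O(K\epsilon_{\rm QAE}^{-1}\log(K/\delta_{\rm cert}))$. This separation preserves the order between $p_{(1)}$ and $p_{(2)}$, so the argmax of the estimates is $i_{\rm abs}$. Maximizing $|u_i|$ is the same as maximizing $p_i$, so no sign information is needed for this objective; the sign test of Appendix~\ref{subsec:sign estimation} is optional, for reporting only.

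Optionally, the Level-I test of Proposition~\ref{prop:verification} certifies that no unrecovered address exceeds $|u_{\hat i}|$.

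\textbf{Assembling the bound.} Summing the costs gives $M_{\rm glob}\cdot Q^{\rm AA}_{\rm QG1}+Q_{\rm cert}+Q_{\rm cal}$. This reproduces $\widetilde O(KdC_d/\sqrt{P_R}+K/\epsilon_{\rm QAE})$, and choosing $\delta_{\rm cal}+\delta_{\rm post}+\delta_{\rm glob}+\delta_{\rm cert}\le\delta$ gives the failure probability by a union bound.

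\textbf{Main obstacle.} The one point needing care is confirming that replacing the one-sided filter by the two-sided $p_{\rm abs}$ does not alter any constant in the bounds used by Theorem~1. I would address this by checking that Corollary~\ref{cor:two sided filter} delivers exactly the accuracy $\eta_{\rm filt}$ that Theorem~1 uses, and that the degree at most doubles, which is absorbed in $O(\cdot)$. Ties among maximisers are harmless, since any element of $\arg\max$ is an acceptable output, and $\Delta_p$ is only required in the unique case.
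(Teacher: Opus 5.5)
Your proposal is correct and matches the paper's argument. The paper's proof simply notes that this is the canonical case of Theorem~1, handled by the two-sided magnitude filter (or equivalently by two one-sided tail runs at constant-factor cost) and followed by value certification of the largest $|u_i|$; you spell out the same Theorem~1 pipeline explicitly. One small slip: leakage samples outside $C_\Gamma$ can also be recorded, so you should bound $|\widehat C|\le M_{\rm glob}=O(K\log\delta_{\rm glob}^{-1})$ rather than $|\widehat C|\le K$, which leaves the $\widetilde O(K/\epsilon_{\rm QAE})$ certification term unchanged.
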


\begin{proof}
This is the canonical problem of Theorem 1 of \textit{Main Text}. The bounded two-sided filter constructed in Appendix~\ref{sec:polynomial_filter} suppresses the central background while retaining both sufficiently positive and sufficiently negative amplitudes. Equivalently, the two tails may be processed in separate runs. In either case their union contains every safely retained large-magnitude candidate, after which the value-certification procedure selects the largest $|u_i|$. Processing two tails separately changes only a constant prefactor.
\end{proof}

\begin{corollary}[Largest positive value]
\label{cor:max_positive}
Assume that the positive tail contains at least one safely retained amplitude and define
\begin{equation}
i_{+}\in\arg\max_{i:\,u_i>0}u_i.
\label{eq:ext_positive}
\end{equation}
If the positive-tail candidate count and retained probability mass satisfy the corresponding conditions of Theorem 1, then $i_{+}$ is found with the same polylogarithmic dependence on $N$.
\end{corollary}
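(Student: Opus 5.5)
The plan is to reduce the largest-positive problem to the canonical pipeline of Theorem~1, replacing the two-sided magnitude filter of Corollary~\ref{cor:two sided filter} with the one-sided filter of Lemma~\ref{lem:gapped_step}. Concretely, choose $\lambda\ge 0$ and take $p_+:=p_d$ from Lemma~\ref{lem:gapped_step}, so that $|p_+(x)|\le\eta_{\rm filt}$ on $[-1,\lambda]$ and $|1-p_+(x)|\le\eta_{\rm filt}$ on $[\lambda+\Gamma,1]$. This filter suppresses the central background and the entire negative tail at once.

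Define the positive-tail sets
\[
C_\Gamma^+:=\{i:u_i\ge\lambda\},\qquad R_\Gamma^+:=\{i:u_i\ge\lambda+\Gamma\},
\]
together with $K_+:=|C_\Gamma^+|$ and $P_R^+:=\sum_{i\in R_\Gamma^+}u_i^2$. The hypotheses of the corollary supply the Theorem~1 conditions for these quantities, including the leakage control $\eta_{\rm filt}^2\le\xi P_R^+$. Lemma~1 then prepares $C_d^{-1}p_+(D_\Psi)|\Psi\rangle$, and the cost bound of Appendix~\ref{sec:theorem_proof} applies verbatim, because $P_e\ge(1-\eta_{\rm filt})^2P_R^+$ uses only the pass-region guarantee.

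Next, establish completeness in the signed setting by repeating the argument of Lemma~\ref{lem:global_completeness} with magnitudes replaced by signed values. Since $R_\Gamma^+\neq\emptyset$ by assumption, the maximizer $i_+$ satisfies $u_{i_+}\ge\lambda+\Gamma$, so $i_+\in R_\Gamma^+$ and $p_+(u_{i_+})\ge 1-\eta_{\rm filt}$. Every index outside $C_\Gamma^+$ has $|p_+(u_i)|\le\eta_{\rm filt}$, and this includes the large negative amplitudes. The denominator bound $P_e\le\sum_{i\in C_\Gamma^+}u_i^2+\eta_{\rm filt}^2$ therefore holds exactly as before. Combined with $u_{i_+}^2\ge u_i^2$ for all $i\in C_\Gamma^+$, which is valid because every element of $C_\Gamma^+$ is nonnegative, this gives
\[
q_{i_+}\ \ge\ \frac{(1-\eta_{\rm filt})^2}{K_+\,(1+\eta_{\rm filt}^2/P_R^+)},
\]
and hence $M_{\rm glob}=O[K_+\log(1/\delta_{\rm glob})]$ successful measurements suffice.

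For certification, compare the recovered candidates by the sign-resolving interference test of Appendix~\ref{subsec:sign estimation} rather than by $|u_i|^2$. This test estimates $u_i=2p_{i,0}-1$ directly, with cost $O[K_+\epsilon_{\rm sgn}^{-1}\log(K_+/\delta_{\rm sgn})]$, and choosing $\epsilon_{\rm sgn}<\delta_u/4$ ranks them. Alternatively, since all candidates are positive up to the leakage region, $|u_i|^2$ ordering already coincides with signed ordering once signs are confirmed. Summing the calibration, preparation, recovery and certification costs with a union bound on the failure probabilities gives the same polylogarithmic dependence on $N$.

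The main obstacle I expect is the completeness step. One must check that negative amplitudes of large magnitude, which are not in $C_\Gamma^+$, can only enter $P_e$ through leakage bounded by $\eta_{\rm filt}^2$. This requires the reject guarantee to hold on all of $[-1,\lambda]$, which the one-sided Lemma~\ref{lem:gapped_step} provides. The boundedness $0\le p_+\le1$ is also needed so that no amplitude is amplified.
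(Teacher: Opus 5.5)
Your proposal is correct and follows the paper's proof: it places the one-sided finite-width filter of Lemma~\ref{lem:gapped_step} on the positive tail, applies QG1, invokes the Theorem~1 bounds with $K$ and $P_R$ redefined for the positive candidate tail, and then certifies the recovered candidates. You make explicit two points the paper leaves implicit. The first is the signed global-completeness bound: large negative amplitudes enter $P_e$ only through $\eta_{\rm filt}^2$ leakage, and $\lambda\ge 0$ gives $u_{i_+}^2\ge u_i^2$ on $C_\Gamma^+$. The second is the sign-resolving certification, which prevents a leaked negative candidate of large magnitude from being misranked by $|u_i|^2$ alone.
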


\begin{proof}
The one-sided polynomial constructed in Appendix~\ref{sec:polynomial_filter} can be placed on the positive tail. Let $p_{+,d}$ satisfy $|p_{+,d}(x)|\leq\eta_{\rm filt}$ below the positive candidate threshold and $|1-p_{+,d}(x)|\leq\eta_{\rm filt}$ on the safely retained positive region. The QG1 transformation then gives
\begin{equation}
p_{+,d}(D_\Psi)|\Psi\rangle=\sum_i p_{+,d}(u_i)u_i|i\rangle .
\label{eq:positive_filter}
\end{equation}
Negative amplitudes and the central background are suppressed, while the safely retained positive amplitudes preserve their original magnitudes up to the uniform filter error. The filter is not required to determine which surviving positive amplitude is largest. It only has to retain $i_{+}$ in the candidate set. The original candidate amplitudes are then estimated and ordered during the certification step.

The same degree bound $d=O[\Gamma_N^{-1}\log(1/\eta_{\rm filt})]$ applies. 
\ssb{Hence the positive-tail problem obeys the same general complexity
bound as Theorem~1, with $K$ and $P_R$ defined for the positive
candidate tail. In particular, the scaling is polylogarithmic
whenever these structural parameters are polylogarithmically
controlled.}
\end{proof}

\begin{corollary}[Most negative value]
\label{cor:min_signed}
Define the signed minimum
\begin{equation}
i_{-}\in\arg\min_{i\in[N]}u_i=\arg\max_{i:\,u_i<0}(-u_i).
\label{eq:ext_negative}
\end{equation}
If the negative tail satisfies the corresponding structural conditions of Theorem 1 of the \textit{Main Text}, then $i_{-}$ is obtained with the same asymptotic complexity as the largest positive value.
\end{corollary}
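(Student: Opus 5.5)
The plan is to reduce the most-negative problem to the largest-positive problem of Corollary~\ref{cor:max_positive} by reflecting the filter, not the state. I will define the reflected one-sided polynomial $p_{-,d}(x):=p_{+,d}(-x)$, where $p_{+,d}$ is the one-sided filter of Lemma~\ref{lem:gapped_step}. It is built with threshold $\lambda_-$ and width $\Gamma_N$ for the variable $-x$. Since $T_\ell(-x)=(-1)^\ell T_\ell(x)$, the Chebyshev coefficients of $p_{-,d}$ are $(-1)^\ell c_\ell$. Hence $p_{-,d}$ has the same degree $d=O[\Gamma_N^{-1}\log(1/\eta_{\rm filt})]$ and the same coefficient norm $C_d$. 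Boundedness $0\le p_{-,d}\le1$ on $[-1,1]$ is inherited directly. The filter therefore satisfies $|p_{-,d}(x)|\le\eta_{\rm filt}$ for $x\ge-\lambda_-$ and $|1-p_{-,d}(x)|\le\eta_{\rm filt}$ for $x\le-(\lambda_-+\Gamma_N)$.

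Next I will apply Lemma~1 of the main text (the Chebyshev-walk transformation) verbatim with these coefficients. The successful branch is then $C_d^{-1}\sum_i p_{-,d}(u_i)u_i|i\rangle$. I will define the negative-tail sets $C^-_\Gamma=\{i:-u_i\ge\lambda_-\}$ and $R^-_\Gamma=\{i:-u_i\ge\lambda_-+\Gamma_N\}$, with $K_-=|C^-_\Gamma|$ and $P_R^-=\sum_{i\in R^-_\Gamma}u_i^2$. The bounds of Appendix~\ref{sec:theorem_proof} then carry over line by line:
\begin{itemize}
\item $P_e\ge(1-\eta_{\rm filt})^2P_R^-$,
\item the leakage outside $C^-_\Gamma$ is at most $\eta_{\rm filt}^2/P_e$,
\item the amplified preparation cost is $O[dC_d/\sqrt{P_R^-}\,\log(1/\delta_{\rm post})]$.
\end{itemize}
These estimates use only $|p(u_i)|$ on each region and never the sign of $u_i$.

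For completeness, I will use the hypothesis that the negative tail has a safely retained element. Then $P_R^->0$, and since $u_{i_-}$ is the most negative amplitude, $i_-\in R^-_\Gamma$. The argument of Lemma~\ref{lem:global_completeness} then gives $q_{i_-}\ge(1-\eta_{\rm filt})^2/[K_-(1+\eta_{\rm filt}^2/P_R^-)]$. Under the hypothesis $\eta_{\rm filt}^2\le\xi P_R^-$, it follows that $M_{\rm glob}=O[K_-\log(1/\delta_{\rm glob})]$ measurements recover $i_-$. Value certification of Appendix~\ref{subsec:value certification} then orders the $|u_i|$ of the recovered candidates.

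The main obstacle is that amplitude estimation returns only $|u_i|^2$, so the filter's imperfect rejection of the positive tail must be handled. A recovered candidate could have leaked from the positive side with $|p_{-,d}(u_i)|\le\eta_{\rm filt}$. I will resolve this with the interference sign test of Eqs.~\eqref{eq:H40}--\eqref{eq:H47} on the $K_-=O(L)$ recovered candidates. For that test I will use $\epsilon_{\rm sgn}<\lambda_-/2$, which suffices because every genuine candidate satisfies $|u_i|\ge\lambda_-$. After discarding positive-signed candidates, the certified answer is the remaining candidate with the largest $|u_i|$. The sign test adds $O[(K_-/\lambda_-)\log(K_-/\delta_{\rm sgn})]$ queries. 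This is polylogarithmic because $\lambda_-\ge(\kappa_{\rm cand}L)^{-1/2}$. I will then allocate the failure probabilities by a union bound, $\delta_{\rm cal}+\delta_{\rm post}+\delta_{\rm glob}+\delta_{\rm cert}+\delta_{\rm sgn}\le\delta$. The total cost then matches that of Corollary~\ref{cor:max_positive} with $(K,P_R)$ replaced by $(K_-,P_R^-)$, which is the asymptotic complexity claimed.
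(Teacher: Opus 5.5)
Your proposal is correct and follows the paper's route. You reflect the positive-tail filter to $p_{-,d}(x)=p_{+,d}(-x)$, whose Chebyshev coefficients $(-1)^\ell c_\ell$ leave $d$ and $C_d$ unchanged, and then rerun the QG1, candidate-recovery, and certification pipeline on the negative tail with $(K_-,P_R^-)$.

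Beyond the paper, you recheck the retained-mass, leakage, and completeness bounds, and you make the interference sign test mandatory to discard leaked positive candidates, which $|u_i|^2$-based certification cannot distinguish; the paper leaves this as ``if a sign check is required.'' That test is sound: any candidate that could beat $|u_{i_-}|$ has magnitude at least $\lambda_-+\Gamma_N$, so $\epsilon_{\rm sgn}<\lambda_-/2$ resolves its sign. One small imprecision: the recovered set, leaked addresses included, is bounded by $M_{\rm glob}=O(K_-\log(1/\delta_{\rm glob}))$ rather than by $K_-$, which costs only a logarithmic factor.
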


\begin{proof}
The signed minimum is converted into a positive-tail maximum by reflecting the amplitude observable,
\begin{equation}
D_\Psi\longmapsto-D_\Psi.
\end{equation}
If $p_{+,d}(x)=\sum_{\ell=0}^{d}c_\ell T_\ell(x)$ is the positive-tail filter, the corresponding negative-tail filter is
\begin{equation}
p_{-,d}(x):=p_{+,d}(-x)=\sum_{\ell=0}^{d}(-1)^\ell c_\ell T_\ell(x),
\label{eq:negative_filter}
\end{equation}
because $T_\ell(-x)=(-1)^\ell T_\ell(x)$. Thus no new Chebyshev-walk construction is required. Reflection only changes the signs of the odd Chebyshev coefficients, leaving both the degree $d$ and coefficient normalization $C_d=\sum_\ell|c_\ell|$ unchanged.

The safely retained negative amplitudes are therefore converted into the same candidate problem as a positive maximum of $-u_i$. Candidate recovery and final certification then return the most negative original amplitude. If a sign check is required, it is performed using the candidate-localized interference procedure in Sec.~\ref{subsec:sign estimation}.
\end{proof}

The first three objectives are therefore different views of the same amplitude-tail problem. The largest magnitude uses both tails, the positive maximum uses the upper tail, and the signed minimum uses the reflected lower tail. No additional state representation is required.

The smallest-magnitude problem is different. A value close to zero belongs to the center of the amplitude distribution rather than to an extreme tail, so QG1 cannot expose it simply by reversing the sign of the filter. A common background shift converts this central minimum into an ordinary tail extremum.

\begin{corollary}[Smallest absolute value by background subtraction]
\label{cor:min_abs}
Let
\begin{equation}
i_{\min |\,\cdot\,|}\in\arg\min_{i\in[N]}|u_i|.
\label{eq:ext_min_abs}
\end{equation}
Define $a_i:=|u_i|$ and let $b$ be an index-independent reference level. Suppose that the normalized background-subtracted magnitude state
\begin{equation}
|\Psi_{\rm bg}\rangle=\sum_{i=0}^{N-1}\widetilde r_i|i\rangle,
\qquad
\widetilde r_i:=\frac{r_i}{\|r\|_2},
\qquad
r_i:=a_i-b,
\label{eq:bg_state}
\end{equation}
can be prepared with $\polylog{N}$ overhead. If the relevant negative tail of $|\Psi_{\rm bg}\rangle$ satisfies the same candidate-count, retained-mass, and finite-transition-width conditions as Theorem 1 of \textit{Main Text}, then $i_{\min |\,\cdot\,|}$ is recovered with polylogarithmic query complexity.
\end{corollary}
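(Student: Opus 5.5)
The plan is to reduce Corollary~\ref{cor:min_abs} to Corollary~\ref{cor:min_signed} applied to the prepared state $|\Psi_{\rm bg}\rangle$. The key observation is that, because $b$ is index independent and $\|r\|_2$ is a single positive constant, the map $i\mapsto\widetilde r_i=(|u_i|-b)/\|r\|_2$ is a strictly increasing affine function of $|u_i|$. Hence
\[
\arg\min_i |u_i|=\arg\min_i \widetilde r_i ,
\]
and the smallest-magnitude problem for $|\Psi\rangle$ becomes exactly the signed-minimum problem for $|\Psi_{\rm bg}\rangle$. I will state this monotonicity first, noting that ties are preserved, so the argmin sets coincide.

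Next I will verify that $|\Psi_{\rm bg}\rangle$ meets the access model of Appendix~\ref{sec:access}. By hypothesis it has a preparation circuit $U_{\rm bg}$ of $\polylog{N}$ cost, and I will take its inverse and controlled versions to be available at the same cost. The amplitudes $\widetilde r_i$ are real and normalized, so Lemma~\ref{lem:lem2 D_psi} and Lemma~\ref{lem:lem3 all degree walk} apply verbatim with $U_S$ replaced by $U_{\rm bg}$. This gives the amplitude observable $D_{\rm bg}=\sum_i\widetilde r_i|i\rangle\langle i|$ and its Chebyshev walk.

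I will then apply the reflected filter $p_{-,d}(x)=p_{+,d}(-x)$ from the proof of Corollary~\ref{cor:min_signed}. Its degree $d=O[\Gamma_N^{-1}\log(1/\eta_{\rm filt})]$ and its normalization $C_d$ are unchanged. The hypothesis that the negative tail of $|\Psi_{\rm bg}\rangle$ satisfies the candidate-count, retained-mass and transition-width conditions is exactly what Theorem~\ref{thm:main} requires for that tail. Lemma~\ref{lem:global_completeness} then guarantees that the index minimizing $\widetilde r_i$ lies in the safely retained negative set. It also bounds the number of filtered measurements needed to recover that index, $M_{\rm glob}=O[K\log(1/\delta_{\rm glob})]$.

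Certification will be carried out on $|\Psi_{\rm bg}\rangle$ rather than on $|\Psi\rangle$. The sign-sensitive interference procedure of Sec.~\ref{subsec:sign estimation}, applied with $U_{\rm bg}$, estimates the signed values $\widetilde r_i$ of the $K=O(L)$ candidates. Because the affine map is monotone, ranking these values is equivalent to ranking the original magnitudes, and its cost is $O[(K/\epsilon_{\rm sgn})\log(K/\delta_{\rm sgn})]$. Alternatively, the magnitudes $|u_i|^2$ can be estimated directly via Sec.~\ref{subsec:value certification}. To obtain the total cost, I will multiply the per-query cost of $U_{\rm bg}$ into $Q_{\rm ext}^{\rm AA}$ and allocate the failure probabilities by a union bound. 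The result is polylogarithmic in $N$ whenever $\epsilon_{\rm sgn}^{-1}$ is.

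The main obstacle is that the rest of the argument is formally routine, so all of the real content sits in the hypotheses. Preparing $|\Psi_{\rm bg}\rangle$ and having $|\Psi_{\rm bg}\rangle$ satisfy the tail conditions are both assumed, not derived. The step needing care is therefore precision: a gap in $|u_i|$ becomes a gap in $\widetilde r_i$ that is scaled by $1/\|r\|_2$. I will check that this rescaling only changes the certification precision $\epsilon_{\rm sgn}$ by this explicit factor, so polylogarithmic scaling is preserved.
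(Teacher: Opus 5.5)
Your reduction is the paper's own. The strictly increasing map $|u_i|\mapsto\widetilde r_i$ identifies the argmin sets, the reflected filter $p_{-,d}(x)=p_{+,d}(-x)$ leaves $d$ and $C_d$ unchanged, Theorem~1 and Lemma~2 are run on the negative tail of $|\Psi_{\rm bg}\rangle$, and the only new cost is preparing $|\Psi_{\rm bg}\rangle$.

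The one real difference is certification, and on this point your route is the sounder one. The paper says the original amplitudes $|u_i|$ are used for the final comparison. Your primary route instead ranks the candidates by their signed values $\widetilde r_i$, estimated with the interference test driven by controlled $U_{\rm bg}$.

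The reason this matters is normalization. Because $\sum_i u_i^2=1$, the second-smallest magnitude is at most $1/\sqrt{N-1}$. So the gap between the two smallest magnitudes is $O(N^{-1/2})$, and the corresponding gap in $|u_i|^2$ is $O(N^{-1})$. Ranking candidates from estimates obtained through $U_S$ therefore costs at least order $\sqrt N$ with the interference test. It costs at least order $N$ with the rule $\epsilon=\delta_p/4$ of the value-certification step. The hybrid argument of Appendix~\ref{sec:No-Gos_Threshold_stat}, applied to two instances that swap the two smallest entries, shows that $\Omega(\sqrt N)$ is unavoidable in the worst case with $U_S$ alone. In $|\Psi_{\rm bg}\rangle$ the same gap is divided by $\|r\|_2$, which can make it inverse-polylogarithmic or even constant. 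A flat background at level $b$ with a few dips satisfies all the hypotheses and behaves this way.

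This has two consequences for your write-up. First, drop the alternative of estimating $|u_i|^2$ directly, since it is not polylogarithmic here. Second, state the precision requirement as an assumption that the gap between the two most negative $\widetilde r_i$ is inverse-polylogarithmic, the analogue of the precision hypothesis in Corollary~\ref{cor:four_extrema}. Do not phrase it as a gap in $|u_i|$ that survives rescaling by $1/\|r\|_2$, because that gap is never inverse-polylogarithmic for large $N$.

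What the paper's phrasing buys is only that the reported value is literally $|u_{\hat i}|$ rather than a shifted and renormalized proxy. For the index $i_{\min|\cdot|}$ nothing is lost by certifying on $\widetilde r$, since monotonicity already fixes the objective.
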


\begin{proof}
Subtracting the same scalar $b$ from every magnitude preserves their ordering,
\begin{equation}
|u_i|<|u_j|
\quad\Longleftrightarrow\quad
r_i<r_j.
\label{eq:bg_order}
\end{equation}
Consequently,
\begin{equation}
\arg\min_i|u_i|
=
\arg\min_i r_i
=
\arg\max_i(-r_i).
\label{eq:bg_reduction}
\end{equation}
Thus the smallest magnitude becomes the most negative residual, or equivalently the largest positive amplitude after reflection.

Writing
\begin{equation}
D_{\rm bg}:=\sum_i\widetilde r_i|i\rangle\langle i|,
\end{equation}
the corresponding negative-tail QG1 branch is
\begin{equation}
p_{-,d}(D_{\rm bg})|\Psi_{\rm bg}\rangle
=
\sum_i p_{-,d}(\widetilde r_i)\widetilde r_i|i\rangle .
\label{eq:bg_filter}
\end{equation}
The polynomial suppresses the central and opposite-sign residuals while preserving the safely separated negative tail. Once the minimum-magnitude index is contained in the recovered candidate set, the original amplitudes $|u_i|$ are used for final certification, so the background subtraction does not alter the reported physical objective.

The reflection $p_{-,d}(x)=p_{+,d}(-x)$ changes neither the degree nor the coefficient normalization. The only additional cost relative to the canonical extrema problem is the preparation of $|\Psi_{\rm bg}\rangle$. Hence the overall scaling remains polylogarithmic when this preparation requires only $\polylog{N}$ overhead and the background-subtracted tail satisfies the same structural conditions as Theorem 1 of \textit{Main Text}.
\end{proof}

\begin{remark}[Background choice and access cost]
\label{rem:bg_access}
The reference level $b$ in \eqref{eq:bg_state} must be common to all indices if the objective is the exact global minimum in \eqref{eq:ext_min_abs}. A position-dependent local background $b_i$ would instead produce the objective $\arg\min_i(|u_i|-b_i)$, which identifies the deepest local deficit and need not coincide with $\arg\min_i|u_i|$. Moreover, efficient preparation of the state in \eqref{eq:bg_state} is an additional access requirement. It is not assumed to follow automatically from arbitrary signed amplitude access to $|\Psi\rangle$.
\end{remark}

\begin{corollary}[Complexity of the four structured extrema objectives]
\label{cor:four_extrema}
For the largest magnitude, largest positive value, and most negative value, suppose the relevant candidate tail satisfies the candidate-count and retained-mass conditions of Theorem 1 of \textit{Main Text}, the transition width is at least inverse polylogarithmic, candidate recovery uses $\polylog{N}$ successful filtered states, and the final value or sign precision is inverse polylogarithmic. Each objective is then solved with overall $\polylog{N}$ query complexity.

The same conclusion holds for the smallest-magnitude objective provided, in addition, that the background-subtracted magnitude state in ~\eqref{eq:bg_state} can be prepared with $\polylog{N}$ overhead and its relevant tail satisfies the same structural conditions.
\end{corollary}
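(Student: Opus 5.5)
The plan is to obtain the corollary by composing the per-stage bounds already proved, then checking that each factor is polylogarithmic under the stated hypotheses. For the first three objectives, Corollaries~\ref{cor:max_abs}, \ref{cor:max_positive} and \ref{cor:min_signed} reduce each case to the canonical tail problem. The only changes are which one-sided or two-sided filter is used, and by Corollary~\ref{cor:two sided filter} and the reflection identity $T_\ell(-x)=(-1)^\ell T_\ell(x)$ neither $d$ nor $C_d$ changes. The general bound established in the proof of Theorem~1 (Appendix~\ref{sec:theorem_proof}) then applies verbatim, with $K$ and $P_R$ interpreted for the relevant tail. It reads
\[
Q_{\rm ext}^{\rm AA}=Q_{\rm cal}+\widetilde O\!\left[K\left(1+\frac{\eta_{\rm filt}^2}{P_R}\right)\left(\frac{dC_d}{(1-\eta_{\rm filt})^3\sqrt{P_R}}+\frac{1}{(1-\eta_{\rm filt})^2\epsilon_{\rm QAE}}\right)\right].
\]

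Next I bound each factor. The hypothesis that the transition width is at least inverse polylogarithmic, combined with Lemma~\ref{lem:gapped_step} and $\eta_{\rm filt}$ chosen with $\log(1/\eta_{\rm filt})=O(\log\log N)$ or polylog, gives $d=\polylog{N}$. The Cauchy--Schwarz estimate \eqref{eq:F8}, $C_d\le 1+\sqrt{2d}$, then makes $C_d$ polylogarithmic without relying on the damped-coefficient property. The candidate-count condition gives $K=\polylog{N}$, and the retained-mass condition gives $1/\sqrt{P_R}=\polylog{N}$. The assumption that recovery uses $\polylog{N}$ successful filtered states bounds $M_{\rm glob}$ directly (and Lemma~\ref{lem:global_completeness} supplies this when $\eta_{\rm filt}^2\le\xi P_R$). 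The precision hypothesis gives $1/\epsilon_{\rm QAE}$ and $1/\epsilon_{\rm sgn}$ polylogarithmic, so \eqref{eq:H8} and \eqref{eq:H47} are polylogarithmic. Calibration \eqref{eq:C_Qcal} with $\epsilon_{\rm cal}=\Theta(\Gamma)$ and $L_{\rm scan}=O(1)$ is also polylogarithmic. Confidence factors $\log(K/\delta)$ only add logarithms, and a union bound over the stage failure probabilities gives total failure probability at most $\delta$.

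For the smallest-magnitude case, Corollary~\ref{cor:min_abs} reduces the problem to the most-negative-value problem on $|\Psi_{\rm bg}\rangle$. I replace $U_S$ by the assumed $\polylog{N}$-overhead preparation of $|\Psi_{\rm bg}\rangle$. Each query in the preceding bound becomes a polylog-cost call, and certification uses the preparation of $|\Psi\rangle$ itself. A product of polylogarithms is polylogarithmic, which completes the argument.

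The main obstacle is ensuring that no hidden factor escapes polylog control. The two candidates for this are the leakage term $\eta_{\rm filt}^2/P_R$ and the dependence of $\log(1/\eta_{\rm filt})$ on $N$. I would handle both by fixing $\eta_{\rm filt}^2\le\xi P_R$, as in Theorem~1, so that $\log(1/\eta_{\rm filt})=O(\log(1/P_R))=O(\log\log N)$. This keeps $d$ polylogarithmic and bounds the leakage factor by a constant.
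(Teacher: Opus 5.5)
Your proposal is correct and takes the same route as the paper. The paper reduces the three tail objectives to Theorem~1 through Corollaries~\ref{cor:max_abs}--\ref{cor:min_signed}, with reflection leaving $d$ and $C_d$ unchanged, and treats the smallest-magnitude case as adding only the assumed polylogarithmic cost of preparing $|\Psi_{\rm bg}\rangle$; your factor-by-factor check of the general bound from the proof of Theorem~1, using $C_d\le 1+\sqrt{2d}$ and choosing $\eta_{\rm filt}^2$ of order $P_R$, just makes explicit what the paper asserts in one line. One small slip is harmless: the two-sided filter can double $d$ and alter $C_d$ rather than leaving them unchanged, but your direct bounds already cover this, and the largest-magnitude case is the canonical Theorem~1 itself.
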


\begin{proof}
Corollary~\ref{cor:max_abs} is the canonical theorem. Corollary~\ref{cor:max_positive} changes only the selected tail. Corollary~\ref{cor:min_signed} reflects the amplitude observable, which changes the signs of the odd Chebyshev coefficients but leaves $d$ and $C_d$ unchanged. Therefore the candidate-state preparation and certification costs have the same asymptotic scaling as Theorem~ 1 of \textit{Main Text}.
\end{proof}

For Corollary~\ref{cor:min_abs}, the background subtraction in \eqref{eq:bg_reduction} converts the minimum into the same opposite-tail maximum problem. Once $|\Psi_{\rm bg}\rangle$ is efficiently available, the identical Chebyshev-walk filtering, candidate recovery, and certification arguments apply. The additional state-preparation overhead is polylogarithmic by assumption and therefore does not alter the overall polylogarithmic dependence on $N$.

\section{Proof of global completeness}
\label{app:global_completeness}

\ssb{We prove Lemma~2 of the Main Text for general candidate count
and retained mass. Let $K:=|C_\Gamma|$,
$P_R:=\sum_{i\in R_\Gamma}|u_i|^2>0$, and choose
$i^\star\in\arg\max_{i\in[N]}|u_i|$ with
$p_\star:=|u_{i^\star}|^2$.
Since $P_R>0$, the safely retained set is nonempty. The global
maximum has magnitude at least as large as every member of this
set, so $i^\star\in R_\Gamma\subseteq C_\Gamma$.}

\ssb{Conditioned on successful QG1 preparation, the normalized
filtered state is
\begin{equation}
|\Psi_F\rangle
=
\frac{1}{\sqrt{P_e}}
\sum_i p_d(u_i)u_i|i\rangle,
\qquad
P_e
=
\sum_i|p_d(u_i)|^2|u_i|^2 .
\end{equation}}

\ssb{Because $|p_d|\leq1$ and
$|p_d(u_i)|\leq\eta_{\rm filt}$ outside $C_\Gamma$,
$P_e\leq Kp_\star+\eta_{\rm filt}^2$.
Also $P_R\leq Kp_\star$, and hence
$\eta_{\rm filt}^2
\leq(\eta_{\rm filt}^2/P_R)Kp_\star$.
Therefore
\begin{equation}
P_e
\leq
Kp_\star
\left(
1+\frac{\eta_{\rm filt}^2}{P_R}
\right).
\end{equation}}

\ssb{Since $i^\star\in R_\Gamma$,
$|p_d(u_{i^\star})|\geq1-\eta_{\rm filt}$.
The probability of observing a global-extremum address in one
successful filtered-state measurement therefore satisfies
\begin{equation}
q_\star
\geq
\frac{(1-\eta_{\rm filt})^2}
{K\left(1+\eta_{\rm filt}^2/P_R\right)}.
\end{equation}}

\ssb{After $M$ independent successful measurements,
$(1-q_\star)^M\leq e^{-Mq_\star}$. Hence
\begin{equation}
M_{\rm glob}
\geq
\frac{K}{(1-\eta_{\rm filt})^2}
\left(
1+\frac{\eta_{\rm filt}^2}{P_R}
\right)
\log\frac{1}{\delta_{\rm glob}}
\end{equation}
is sufficient to recover a global-extremum address with
probability at least $1-\delta_{\rm glob}$.}

\ssb{Conditional on this event, the final value-certification step
fails with probability at most $\delta_{\rm cert}$, so
\begin{equation}
\Pr\!\left[
\exists j\in[N]\;:\;
|u_j|>|u_{\hat{i}}|
\right]
\leq
\delta_{\rm glob}+\delta_{\rm cert}.
\end{equation}
This proves Lemma~2.}

\ssb{For the inverse-polylogarithmic specialization
$K\leq\kappa_{\rm cand}L$,
$P_R\geq c_PL^{-a}$, and
$\eta_{\rm filt}^2\leq\xi P_R$, with
$L=\log_2N$ and $0<\xi<1$, one obtains
$q_\star=\Omega(L^{-1})$ and therefore
$M_{\rm glob}=O[L\log(1/\delta_{\rm glob})]$.
The retained-mass exponent $a$ controls the cost of preparing
each filtered state, whereas the candidate count controls the
number of successful filtered states required for global recovery.}

\section{Proof of the two levels of classical verification}
\label{app:verification}

Proposition 1 distinguishes two questions. Level I verifies the reported extremum assuming faithful execution of the prescribed quantum circuit. Level II removes this assumption and verifies the quantum computation itself.

\subsection{Level I: verification of the reported extremum}
\label{app:verification_level1}

The successful filtered state is
\begin{equation}
|\Psi_F\rangle=\frac{1}{\sqrt{P_e}}\sum_i p_d(u_i)u_i|i\rangle,
\qquad
q_i=\frac{|p_d(u_i)|^2|u_i|^2}{P_e},
\label{eq:verification_filtered_state}
\end{equation}
where $q_i$ is the probability of observing address $i$. Let $\widehat C$ be the recovered candidate set and let $\hat i$ denote the candidate with the largest certified amplitude. Define the total filtered probability outside the recovered set as
\begin{equation}
R_{\rm miss}:=1-\sum_{i\in\widehat C}q_i=\sum_{j\notin\widehat C}q_j.
\label{eq:verification_Rmiss}
\end{equation}

Suppose that $\hat i$ lies in the safely retained region, so $|u_{\hat i}|\geq\lambda+\Gamma_N$. Assume, for contradiction, that an unobserved address $j\notin\widehat C$ satisfies $|u_j|>|u_{\hat i}|$. Then $j$ also lies in the safely retained region and therefore $|p_d(u_j)|\geq1-\eta_{\rm filt}$. Hence
\begin{equation}
q_j=\frac{|p_d(u_j)|^2|u_j|^2}{P_e}>\frac{(1-\eta_{\rm filt})^2|u_{\hat i}|^2}{P_e}.
\label{eq:verification_qj}
\end{equation}
Since the filter is bounded by one,
\begin{equation}
q_{\hat i}=\frac{|p_d(u_{\hat i})|^2|u_{\hat i}|^2}{P_e}\leq\frac{|u_{\hat i}|^2}{P_e}.
\end{equation}
Therefore any unobserved address capable of exceeding the reported extremum must satisfy
\begin{equation}
q_j>(1-\eta_{\rm filt})^2q_{\hat i}.
\label{eq:verification_unseen}
\end{equation}
But every unobserved probability satisfies $q_j\leq R_{\rm miss}$. Consequently,
\begin{equation}
R_{\rm miss}<(1-\eta_{\rm filt})^2q_{\hat i}
\label{eq:verification_level1}
\end{equation}
rules out every unobserved amplitude larger than $|u_{\hat i}|$. Since $\hat i$ has already been certified as the largest member of $\widehat C$, it is then the global extremum.

The quantum sampling and probability estimation needed to obtain $q_i$ and $R_{\rm miss}$ are part of the preceding quantum computation. Level I concerns the classical work required once these estimates, together with their confidence intervals, have been supplied under faithful circuit execution. The verifier first confirms from the candidate-value estimate that $\hat i$ lies in the safely retained region and then evaluates the sum in \eqref{eq:verification_Rmiss}. Since only $O(\log N)$ candidate addresses remain, this final classical certificate check requires $O(\log N)$ arithmetic operations.
With finite estimation error, let $R_{\rm miss}^{U}$ denote an upper confidence bound for the missing probability and let $q_{\hat i}^{L}$ denote a lower confidence bound for the probability of the returned extremum. The robust certificate is
\begin{equation}
R_{\rm miss}^{U}<(1-\eta_{\rm filt})^2q_{\hat i}^{L}.
\label{eq:verification_level1_robust}
\end{equation}
This proves Level I of Proposition 1. The faithful-execution assumption is essential. A device that deliberately fabricates its measurement record is not covered by this level.

\subsection{Level II: verification of an untrusted quantum computer output}
\label{app:verification_level2}

We now remove the assumption that the quantum device executes the prescribed circuit faithfully. For a proposed extremal address $\hat i$, define a binary quantum computation $\mathcal A_{\hat i}$ that executes the complete quantum extrema-search procedure and outputs $1$ if and only if its certified result is $\hat i$. For a unique extremum, or after fixing a deterministic tie-breaking rule, the correctness and global-completeness guarantees give
\begin{equation}
\Pr[\mathcal A_{i^\star}=1]\geq1-\delta,
\qquad
\Pr[\mathcal A_j=1]\leq\delta
\quad (j\neq i^\star).
\label{eq:verification_decision}
\end{equation}
Reducing $\delta$ below $1/3$ gives the standard bounded-error separation. A correct proposed address is accepted with probability greater than $2/3$, whereas an incorrect address is accepted with probability less than $1/3$. The extrema-verification task therefore has the binary decision form required by the classical verification protocol of ref.~\cite{mahadev2018classical}.

For intuition, write the decision circuit as
\begin{equation}
\mathcal A_{\hat i}=U_TU_{T-1}\cdots U_1,
\qquad
|\psi_t\rangle=U_tU_{t-1}\cdots U_1|\psi_0\rangle.
\label{eq:verification_circuit}
\end{equation}
The complete computation can be represented by the history state
\begin{equation}
|\Phi_{\hat i}\rangle=\frac{1}{\sqrt{T+1}}\sum_{t=0}^{T}|t\rangle_{\rm clock}|\psi_t\rangle,
\label{eq:verification_history}
\end{equation}
where $|t\rangle_{\rm clock}$ labels the stage of the gate sequence. The standard circuit-to-Hamiltonian construction associates this history with a local Hamiltonian whose terms check the input state, the successive relations $|\psi_{t+1}\rangle=U_{t+1}|\psi_t\rangle$, the clock register, and the final accepting output. A valid accepting computation corresponds to the low-energy case, so the verification can be viewed as estimating local terms of this computation-history Hamiltonian rather than reproducing all intermediate quantum states classically.

An untrusted quantum device could otherwise fabricate the measurement outcomes needed for this energy test. The protocol of ref.~\cite{mahadev2018classical} prevents this by requiring the device to commit to information associated with its quantum measurements before learning how those commitments will be tested. The classical verifier retains secret information that allows the subsequent measurement responses to be checked. The detailed cryptographic functions, trapdoor construction, and soundness proof are established in ref.~\cite{mahadev2018classical} and are not reproduced here.

\eqref{eq:verification_decision} is the only additional reduction needed for the present problem. It establishes that the claimed extremum is a bounded-error quantum decision computation of the form verified in ref.~\cite{mahadev2018classical}. Let $T_{\rm circ}$ denote the number of gates in this complete decision circuit and let $\kappa_{\rm sec}$ denote the cryptographic security parameter. The corresponding classical-verifier complexity obeys
\begin{equation}
T_V=\operatorname{poly}(T_{\rm circ},\kappa_{\rm sec}).
\label{eq:verification_level2_cost}
\end{equation}
Consequently, if $T_{\rm circ}=\polylog{N}$ and $\kappa_{\rm sec}=\polylog{N}$, then $T_V=\polylog{N}$. This proves Level II of Proposition 1 of the \textit{Main Text}.

The two verification levels have distinct meanings. Level I checks the extrema result itself using the compressed filtered state and assumes faithful quantum execution. Level II verifies the computation that produced the result and does not require the quantum device itself to be trusted. The cryptographic layer is external to the quantum extrema-search algorithm and is not required for Theorem 1 or Lemma 2 of the \textit{Main Text}.

\section{Structure factor, Reynolds-number scaling, and crossover estimates}
\label{app:structure_crossover}

Let $g_i$ denote the physical signal and
\begin{equation}
u_i=\frac{g_i}{\sqrt{N}\,g_{\rm rms}},
\qquad
g_{\rm rms}^2=\frac{1}{N}\sum_{i=0}^{N-1}g_i^2.
\label{eq:I1}
\end{equation}
\ssb{Order the amplitudes by magnitude,
$|g_{(1)}|\geq|g_{(2)}|\geq\cdots$, and let
$K_R:=|R_\Gamma|$ denote the number of safely retained
amplitudes. Define
\begin{equation}
\begin{aligned}
S_{\rm in}&:=\frac{g_{\rm pk}}{g_{\rm rms}},
&\qquad g_{\rm pk}&:=|g_{(1)}|,\\
\rho_R^2&:=
\frac{1}{K_R}
\sum_{j=1}^{K_R}
\left(\frac{|g_{(j)}|}{g_{\rm pk}}\right)^2 .
\end{aligned}
\end{equation}
Because $R_\Gamma$ is defined by a magnitude threshold, it
contains the $K_R$ largest-magnitude amplitudes, up to ties.
Its normalized probability mass is therefore
\begin{equation}
P_R
=
\sum_{i\in R_\Gamma}|u_i|^2
=
\frac{K_R\rho_R^2S_{\rm in}^2}{N}.
\end{equation}}

\ssb{For the inverse-polylogarithmic family used to obtain an explicit
scaling in $N$, let $P_R\geq c_PL^{-a}$, where
$L:=\log_2N$ and $a>0$. It follows that
\begin{equation}
S_{\rm in}
\geq
S_\star
:=
\frac{
\sqrt{c_PN}
}{
\rho_R\sqrt{K_R}\,L^{a/2}
}.
\label{eq:I4}
\end{equation}
Thus the physical structure condition entering Theorem~1 is
equivalently a requirement that the safely retained events carry
sufficient intensity relative to the rms background. The numerical
study uses the representative case $a=1$ and $c_P=1$.}

For a family of physically resolved problems, write
\begin{equation}
N_{\rm db}({\rm Re})=C_N{\rm Re}^{\alpha},
\qquad
S_{\rm in}({\rm Re})=C_S{\rm Re}^{\beta}.
\label{eq:I5}
\end{equation}
Substitution into \eqref{eq:I4} gives

\begin{equation}
C_S\mathrm{Re}^{\beta}
\geq
\frac{
\sqrt{c_PC_N}\,
\mathrm{Re}^{\alpha/2}
}{
\rho_R\sqrt{K_R}\,
[\log_2(C_N\mathrm{Re}^{\alpha})]^{a/2}
}.
\label{eq:I6}
\end{equation}
The exponent $\alpha$ depends on the physical problem and on the dimensionality of the data being searched. For one-dimensional Burgers flow, resolving the viscous layer gives the representative scaling $N_{\rm db}=O({\rm Re})$~\cite{bharadwaj2025homotopy}. For three-dimensional turbulence at Kolmogorov resolution, the corresponding line, slice, and volume counts scale as $O({\rm Re}^{3/4})$, $O({\rm Re}^{3/2})$, and $O({\rm Re}^{9/4})$, respectively.

It is useful to separate these physical scalings from the preparation-dominated algorithmic crossover in $N$. The following comparison concerns one successful amplitude-amplified QG1 candidate-state preparation, whose leading scaling is $\widetilde O[C_d(\log_2N)^2]$. It does not include the repeated-state factor $M_{\rm glob}$ required for complete candidate recovery and should therefore not be interpreted as the crossover of the full certified-search complexity. Suppressing constants, $C_d$, calibration, confidence, and final-precision factors, comparison with a classical scan is determined by
\begin{equation}
N\simeq(\log_2N)^2.
\label{eq:I7}
\end{equation}
The nontrivial integer-scale solution is $N=16$. An intentionally optimistic comparison with ideal Grover scaling is determined by
\begin{equation}
\sqrt{N}\simeq(\log_2N)^2.
\label{eq:I8}
\end{equation}
Writing $N=2^x$ gives $2^{x/2}=x^2$, whose large solution is $x=16$, or $N=2^{16}$. These values are asymptotic guideposts rather than finite-$N$ performance predictions because the suppressed constants and precision costs can be important.

For $C_N=1$, the corresponding nominal Reynolds-number values follow from ${\rm Re}=N^{1/\alpha}$. They are summarized in Table~\ref{tab:complexity_main} of the \textit{Main Text}. The ``ideal Grover'' column deliberately assumes the stronger comparison-oracle access model and is therefore a conservative benchmark for the amplitude-only setting studied in this work.

\section{End-to-end nonlinear-PDE-to-extremum complexity}
\label{sec:end_to_end}

For the fluid-dynamics application, the amplitude-state preparation is part of the algorithm rather than a free data-loading oracle. The quantum homotopy solver of Ref.~\cite{bharadwaj2025homotopy} prepares the normalized order-$M$ approximation to the nonlinear PDE solution,
\begin{equation}
U_H|0\rangle
=
\frac{u^{(M)}(T)}
{\|u^{(M)}(T)\|_2}
=:|\Psi(T)\rangle ,
\label{eq:J1}
\end{equation}
with stated complexity
\begin{equation}
Q_H
=
\widetilde O\!\left[
s_H\epsilon_{\rm LCU}
\log(N_gM^2\tau)
\operatorname{polylog}
(\epsilon_Q^{-1},\kappa_H)
\right],
\label{eq:J2}
\end{equation}
under the convergence, sparsity, conditioning, efficient-input, postselection, and accuracy assumptions of that solver~\cite{bharadwaj2025homotopy}. Here $N_g$ is the spatial grid size, $\tau$ is the number of time steps, $M$ is the homotopy order, $s_H$ is the lifted sparsity, $\kappa_H$ is the corresponding conditioning parameter, and $\epsilon_Q$ is the quantum-solver error.

If the desired extrema belong to a derived field such as a velocity-gradient component, the corresponding sparse derivative operator can be applied coherently before extrema search using the compact LCU matrix-operation construction of Ref.~\cite{bharadwaj2024compact}. We denote this additional cost by $Q_\nabla$. The complete scientific pipeline is therefore
\begin{equation}
\begin{aligned}
\text{nonlinear PDE}
&\xrightarrow{\;U_H\;}
|\Psi(T)\rangle\\
&\xrightarrow{\;U_\nabla\;}
|\Psi_\nabla(T)\rangle\\
&\xrightarrow{\;\text{extrema search}\;}
\hat i .
\end{aligned}
\label{eq:J3}
\end{equation}
No $N_g$-component classical reconstruction is performed between these stages.

For the fluid-dynamics application, one coherent invocation of the state-preparation module consists of nonlinear-PDE state preparation followed, when required, by the derived-field operation. We therefore define
\begin{equation}
Q_S:=Q_H+Q_{\nabla},
\label{eq:J4}
\end{equation}
where the homotopy solver has
\begin{equation}
Q_H
=
\widetilde O\!\left[
s_H\epsilon_{\rm LCU}
\log(N_gM^2\tau)
\operatorname{polylog}
(\epsilon_Q^{-1},\kappa_H)
\right]
\label{eq:J5}
\end{equation}
under its stated convergence, sparsity, conditioning, postselection, and accuracy assumptions~\cite{bharadwaj2025homotopy}, while $Q_{\nabla}=\polylog{N_g}$ under the compact-LCU assumptions of Ref.~\cite{bharadwaj2024compact}. Hence $Q_S=\polylog{N}$ for $N_g=\Theta(N)$ under these assumptions.

The extrema-search complexity $Q_{\rm ext}^{\rm AA}$ counts coherent calls to this complete preparation module. Combining the upstream preparation with Theorem~1 therefore gives
\begin{equation}
Q_{\rm e2e}
=
Q_SQ_{\rm ext}^{\rm AA}
=
\polylog{N},
\label{eq:J6_e2e}
\end{equation}
provided the structural, filtering, precision, and failure parameters satisfy the polylogarithmic conditions stated in Theorem~1. Thus the end-to-end PDE-to-extremum computation retains polylogarithmic query complexity in the resolved problem size. The corresponding gate complexity of the underlying homotopy and compact-LCU operations is likewise polylogarithmic under their stated assumptions~\cite{bharadwaj2025homotopy,bharadwaj2024compact}.

\subsection{Compatibility of the nonlinear regimes}
\label{subsec:Re_overlap}
The nonlinear-PDE and extrema-search stages impose different physical conditions and their parameters should not be identified numerically. For the normalized one-dimensional Burgers analysis of Ref.~\cite{bharadwaj2025homotopy},
\begin{equation}
R_H=O({\rm Re}),
\qquad
t_{\rm NS}
=
\frac{L}{\bar U}
\frac{2}{|\hat h|R_H},
\label{eq:J7_timewindow}
\end{equation}
so larger physical nonlinearity is admissible over a correspondingly shorter certified time interval. The reported calculations reach physically resolved $R_H=O(10^2)$ while maintaining convergence within this finite-time regime~\cite{bharadwaj2025homotopy}.

The extrema-search crossover estimate is instead obtained from the database scaling $N_{\rm db}({\rm Re})=C_N{\rm Re}^{\alpha}$. For a three-dimensional Kolmogorov-resolved volume, $\alpha=9/4$, and the nominal preparation-dominated comparison with ideal Grover scaling in \eqref{eq:I8} gives ${\rm Re}\simeq1.38\times10^2$ for $C_N=1$. The two Reynolds quantities are not the same, but their common $O(10^2)$ scale establishes an order-of-magnitude regime in which the independently derived nonlinear-state-preparation and extrema-extraction conditions are mutually compatible.

The $Re_\lambda\simeq433$ JHTDB signals used in the numerical study should therefore be interpreted separately. They provide a high-Reynolds-number stress test of the extrema-search stage using fully developed turbulent data~\cite{li2008public,yeung2012dissipation}; they are not a claim that the present homotopy calculation already prepares the identical three-dimensional turbulent state.

\subsection{Verification of the composite computation}

When $U_H$, any derived-field operation $U_\nabla$, and the extrema-search circuit all have efficient classical descriptions, the complete circuit in \eqref{eq:J3} can itself be reduced to the bounded-error decision computation used in Proposition~1 of the Main Text. The classical verification protocol of Ref.~\cite{mahadev2018classical} may therefore be applied to the complete PDE-to-extremum computation, under the assumptions of that protocol. This supplies an external verification layer for the end-to-end quantum execution in addition to the global extrema certificate proved internally by the present algorithm.

\section{Numerical definitions and data-processing pipeline}
\label{app:numerics}

The numerical study in the Main Text uses 576 real velocity-gradient line signals extracted from the forced-isotropic Johns Hopkins Turbulence Database simulation~\cite{li2008public}. Each signal contains $N=1024$ samples and is normalized as
\begin{equation}
u_i=\frac{g_i}{\|g\|_2}.
\label{eq:K1}
\end{equation}
The extraction geometry and physical database parameters are given in Sec.~\ref{sec:numerics} of the Main Text and are not repeated here.

For each normalized signal, let $|u_{(1)}|\ge|u_{(2)}|\ge\cdots$ denote the amplitudes ordered by magnitude. The numerical target is
\begin{equation}
P_{\rm tar}
=
\frac{1}{\log_2N},
\qquad
\Gamma_N
=
(\log_2N)^{-3/2}.
\label{eq:K2}
\end{equation}
The threshold scan identifies a candidate tail satisfying the prescribed mass criterion, after which a Jackson-damped Chebyshev approximation is evaluated over the tested polynomial degrees. The accepted degree $d_{\min}$ is the smallest tested degree satisfying the adopted retained-mass and concentration conditions. The Jackson damping used in these numerical filters is the same coefficient construction discussed in Appendix~\ref{sec:polynomial_filter}.

The principal diagnostics reported in Figs.~4--5 and Table~\ref{tab:fixedN} of the Main Text are
\begin{align}
P_K
&=
\sum_{j=1}^{K}|u_{(j)}|^2,
\\
P_e
&=
\sum_i|p_d(u_i)|^2|u_i|^2,
\\
p_{\rm succ}
&=
\frac{P_e}{C_d^2},
\\
q_K
&=
\frac{\sum_{j=1}^{K}|p_d(u_{(j)})|^2|u_{(j)}|^2}{P_e},
\\
K_{\rm fuzzy}
&=
|C_\Gamma|.
\label{eq:K3}
\end{align}
Thus $P_K$ tests whether the original extreme-event tail carries the required input mass, $P_e$ measures how much total mass remains after applying the polynomial, $p_{\rm succ}$ is the raw probability of obtaining the successful coefficient-weighted QG1 branch, $q_K$ measures the concentration of the normalized filtered state on the diagnostic leading set, and $K_{\rm fuzzy}$ is the number of amplitudes lying above the calibrated lower candidate threshold. The corresponding leakage is $1-q_K$.

We also report
\begin{equation}
P_{\max}^{\rm in}
=
|u_{(1)}|^2,
\qquad
P_{\max}^{\rm out}
=
\frac{|p_d(u_{(1)})|^2|u_{(1)}|^2}{P_e},
\label{eq:K4}
\end{equation}
whose ratio measures the increase in the conditional probability of observing the true maximum after successful filtering. This ratio is a sampling-concentration diagnostic and should not be confused with the complete circuit success probability $p_{\rm succ}$.

The finite-$N$ resource comparison applies the same amplitude-estimation precision rule to the present recovered candidate list and to the value comparison required by a GDH implementation under amplitude-only access. It therefore measures the finite-size benefit of performing value estimation only after candidate compression. It is not a comparison with ideal GDH supplied with a cost-free exact value-comparison oracle.

\ssb{Finally, the numerical results test one concrete
inverse-polylogarithmic specialization of the general bounds in
Theorem~1 at the fixed size $N=1024$. We use
$P_{\rm tar}=(\log_2N)^{-1}$ and
$\Gamma_N=(\log_2N)^{-3/2}$, corresponding to the
representative retained-mass exponent $a=1$. The tests measure
retained mass, candidate count, concentration of the successful
filtered state, coefficient normalization, and the finite
polynomial degree required by QG1. The asymptotic dependence on
$N$ follows from the analytical bounds rather than from this
fixed-$N$ ensemble.}

\bibliography{QMIN_ARXIV_REFS}

\end{document}